\newcolumntype{M}[1]{>{\centering\arraybackslash}m{#1}}
\begin{document}

\newcommand{\bfl}{\begin{flushleft}}
\newcommand{\efl}{\end{flushleft}}
\newcommand{\bit}{\begin{itemize}}
\newcommand{\eit}{\end{itemize}}
\newcommand{\ben}{\begin{enumerate}}
\newcommand{\een}{\end{enumerate}}
\newcommand{\bc}{\begin{center}}
\newcommand{\ec}{\end{center}}
\newcommand{\bfr}{\begin{flushright}}

\newtheorem{definition}{Definition}
\newtheorem{example}{Example}
\newtheorem{defn}{Definition}
\newtheorem{thm}{Theorem}[section]
\newtheorem{cor}[thm]{Corollary}
\newtheorem{prop}{Proposition}
\newtheorem{lem}[thm]{Lemma}
\newtheorem{conj}[thm]{Conjecture}
\newtheorem{constr}[thm]{Construction}
\newtheorem{note}{Remark}
\newtheorem{rem}[thm]{Remark}
\newtheorem{eg}{Example}
\newtheorem{property}{Property}
\newtheorem{open}{Open Problems}
\newcommand{\bcor}{\begin{cor}}
	\newcommand{\ecor}{\end{cor}}
\newcommand{\beq}{\begin{equation}}
\newcommand{\eeq}{\end{equation}}
\newcommand{\beqn}{\begin{equation*}}
\newcommand{\eeqn}{\end{equation*}}
\newcommand{\bea}{\begin{eqnarray}}
\newcommand{\eea}{\end{eqnarray}}
\newcommand{\bean}{\begin{eqnarray*}}
	\newcommand{\eean}{\end{eqnarray*}}
\newcommand{\bdefn}{\begin{defn}}
	\newcommand{\edefn}{\end{defn}}
\newcommand{\bnote}{\begin{note}}
	\newcommand{\enote}{\end{note}}
\newcommand{\bprop}{\begin{prop}}
	\newcommand{\eprop}{\end{prop}}
\newcommand{\blem}{\begin{lem}}
	\newcommand{\elem}{\end{lem}}
\newcommand{\bthm}{\begin{thm}}
	\newcommand{\ethm}{\end{thm}}
\newcommand{\bconj}{\begin{conj}}
	\newcommand{\econj}{\end{conj}}
\newcommand{\bconstr}{\begin{constr}}
	\newcommand{\econstr}{\end{constr}}
\newcommand{\bpf}{\begin{proof}}
	\newcommand{\epf}{\end{proof}}
\newcommand\inlineeqno{\stepcounter{equation}\ (\theequation)}

\newcommand{\fq}{\mbox{$\mathbb{F}_q$}}
\newcommand{\fqalpha}{\mbox{$\mathbb{F}_q^{\alpha}$}}
\newcommand{\fqbeta}{\mbox{$\mathbb{F}_q^{\beta}$}}
\newcommand{\fqB}{\mbox{$\mathbb{F}_q^B$}}
\newcommand{\fqstar}{\mbox{$\mathbb{F}_q^{*}$}}
\newcommand{\calc}{\mbox{${\cal C} $}}
\newcommand{\uc}{\mbox{$\underline{c}$}}
\newcommand{\uu}{\mbox{$\underline{u}$}}
\newcommand{\ug}{\mbox{$\underline{g}$}}
\newcommand{\ua}{\mbox{$\underline{a}$}}
\newcommand{\uz}{\mbox{$\underline{z}$}}
\newcommand{\ue}{\mbox{$\underline{e}$}}
\newcommand{\Zs}{\mbox{$\mathbb{Z}_s$}}
\newcommand{\Zt}{\mbox{$\mathbb{Z}_t$}}
\newcommand{\Zst}{\mbox{$\mathbb{Z}_s^t$}}
\newcommand{\Zsn}{\mbox{$\mathbb{Z}_s^n$}}
\newcommand{\calu}{\mbox{${\cal U}$}}
\newcommand{\cale}{\mbox{${\cal E}$}}
\newcommand{\gmds}{\mbox{$G_{\text{\tiny MDS}}$}}
\newcommand{\gpyr}{\mbox{$G_{\text{\tiny PYR}}$}}
\newcommand{\cmds}{\mbox{$C_{\text{\tiny MDS}}$}}
\newcommand{\cpyr}{\mbox{$C_{\text{\tiny PYR}}$}}

\newcommand{\pvk}{\color{blue} }
\newcommand{\pvknew}{\color{cyan} }

\title{Erasure Coding for Distributed Storage: An Overview}
\author{
	\IEEEauthorblockN{ S. B. Balaji, M. Nikhil Krishnan, Myna Vajha, Vinayak Ramkumar, }\\
	\IEEEauthorblockN{ Birenjith Sasidharan and P. Vijay Kumar}\\ \ \\
	\IEEEauthorblockA{Department of Electrical Communication Engineering,\\ Indian Institute of Science, Bangalore.\\
	}
	\thanks{ This survey article will appear in Science China Information Sciences (SCIS) journal.
		
	M. N. Krishnan and M. Vajha would like to acknowledge the support of Visvesvaraya PhD Scheme for Electronics \& IT awarded by DEITY, Govt. of India. B. Sasidharan presently works as Assistant Professor in the department of Electronics and Communication Engineering in Government Engineering College Barton Hill at Thiruvananthapuram, Kerala, India. P. V. Kumar is also a Visiting Professor at the University of Southern California.  His research is supported in part by the National Science Foundation under Grant No. 1421848 and in part by the joint UGC-ISF research program.}
}
\maketitle

\begin{abstract}
	In a distributed storage system, code symbols are dispersed across space in nodes or storage units as opposed to time. In settings such as that of a large data center, an important consideration is the efficient repair of a failed node.  Efficient repair calls for erasure codes that in the face of node failure, are efficient in terms of minimizing the amount of repair data transferred over the network, the amount of data accessed at a helper node as well as the number of helper nodes contacted. Coding theory has evolved to handle these challenges by introducing two new classes of erasure codes, namely regenerating codes and locally recoverable codes as well as by coming up with novel ways to repair the ubiquitous Reed-Solomon code. 
	This survey provides an overview of the efforts in this direction that have taken place over the past decade. 
\end{abstract}
%

\section{Introduction}
This survey article deals with the use of erasure coding for the reliable and efficient storage of large amounts of data in settings such as that of a data center.  The amount of data stored in a single data center can run into tens or hundreds of petabytes.   Reliability of data storage is ensured in part by introducing redundancy in some form, ranging from simple replication to the use of more sophisticated erasure-coding schemes such as Reed-Solomon codes.  Minimizing the storage overhead that comes with ensuring reliability is a key consideration in the choice of erasure-coding scheme.  More recently a second problem has surfaced, namely, that of node repair.

In \cite{RashmiShahGuKuang}, \cite{SathiaAstPap_Xorbas} the authors study the Facebook warehouse cluster and analyze the frequency of node failures as well as the resultant network traffic relating to node repair. It was observed in \cite{RashmiShahGuKuang} that a median of $50$ nodes are unavailable per day and that a median of $180$TB of cross-rack traffic is generated as a result of node unavailability. It was also reported that $98.08\%$ of the cases have exactly one block missing in a stripe. The erasure code that was deployed in this instance was an $[n=14,k=10]$ Reed Solomon (RS) code.   Here $n$ denotes the block length of the code and $k$ the dimension. The conventional repair of an $[n,k]$ RS code is inefficient in that the repair of a single node, calls for contacting $k$ other (helper) nodes and downloading $k$ times the amount of data stored in the failed node, which is clearly inefficient.  Thus there is significant practical interest in the design of erasure-coding techniques that offer both low overhead and which can also be repaired efficiently.

Coding theorists have responded to this need by coming up with two new classes of codes, namely ReGenerating (RG) and Locally Recoverable (LR) codes.  The focus in a RG code is on minimizing the amount of data download needed to repair a failed node, termed the {\em repair bandwidth} while LR codes seek to minimize the number of helper nodes contacted for node repair, termed the {\em repair degree}. In a different direction, coding theorists have also re-examined the problem of node repair in RS codes and have come up with new and more efficient repair techniques.  This survey provides an overview of these recent developments.   An outline of the survey itself appears in Fig.~\ref{fig:chapter_outline}.

RG codes are discussed in Section \ref{sec:regenerating_codes}.  The two principal classes of RG codes, namely Minimum Bandwidth Regenerating (MBR) and Minimum Storage Regeneration (MSR) appear in the two sections that follow.   These two classes of codes are at the two extreme ends of a tradeoff known as the storage-repair bandwidth (S-RB) tradeoff. A discussion on codes that correspond to the interior points of this tradeoff appears in Section~\ref{sec:on-srb}.  The theory of regenerating codes has been extended in several directions and these are explored in Section~\ref{sec:variations}. 
Section~\ref{sec:lrc} examines LR codes.  There have been several approaches at extending the theory of LR codes to handle multiple erasures and these are dealt with in Section~\ref{sec:multiple_erasures}.  A class of codes known as Locally ReGenerating (LRG) codes that offer both low repair bandwidth and low repair degree within a single erasure code is discussed in Section~\ref{sec:lrgc}.   This is followed by Section~\ref{sec:rs_repair} that discusses recent advances in the repair of Reed-Solomon codes.   A brief description of a different approach based on capacity considerations and leading to the development of a {\em liquid cloud storage} system appears in Section~\ref{sec:liquid}. The final section, discusses practical evaluations and implementations.

{\em Disclaimer:} This survey is presented from the perspective of the authors and is biased in this respect.  Given the explosion of research activity in this area, the survey also does not claim to be comprehensive and we offer our apologies to the authors whose work has inadvertently or for lack of space, not been appropriately cited. We direct the interested reader to some of the excellent surveys of codes on distributed storage contained in the literature including \cite{DimRamWuSuhSurvey}, \cite{DattaOggSurvey}, \cite{LiLiSurvey} and \cite{liuOggSurvey}.

\begin{figure}[h!]
\centering
\includegraphics[width=6.2in]{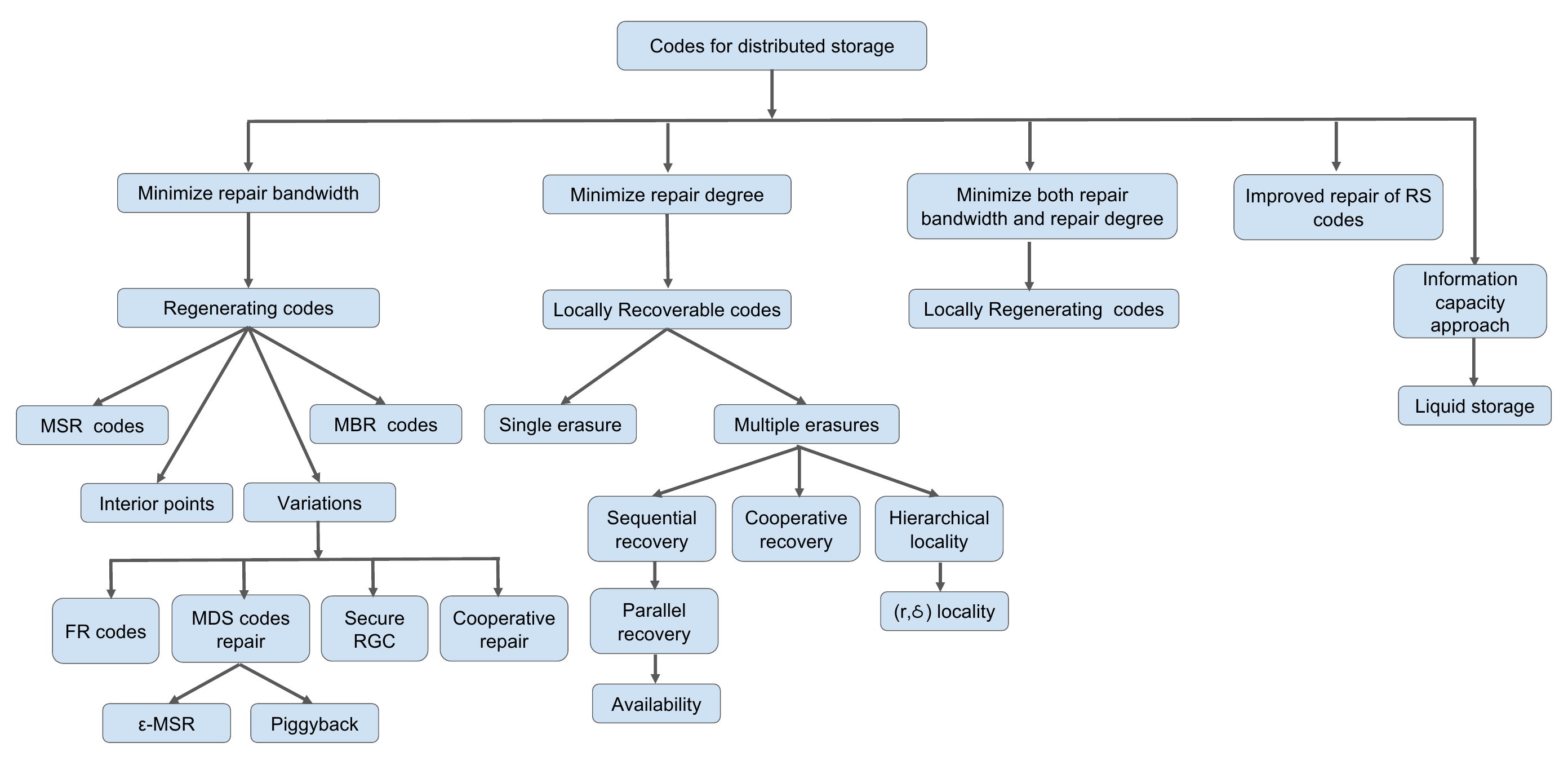} 
\caption{An overview of the different classes of codes for distributed storage discussed in this survey article.} \label{fig:chapter_outline} 
\end{figure}

\section{Regenerating Codes{\label{sec:regenerating_codes}}}
\begin{figure}[ht!]
		\begin{center}
			Parameters: $(\ (n,k,d), \ (\alpha, \beta), \ B, \ \mathbb{F}_q \ )$
		\end{center}
		\begin{center}
			\begin{minipage}{3.0in}
				\begin{center}
					\includegraphics[trim= 0in  0in 0in 0in, width=1.8in]{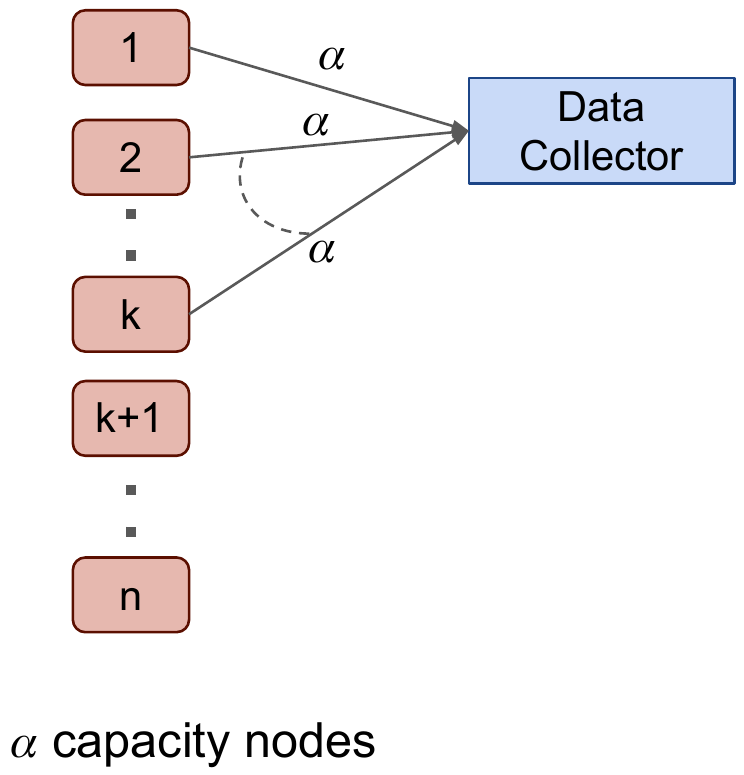} 
				\end{center}
				
			\end{minipage} 
			\begin{minipage}{3.0in}
				\begin{center}
					\includegraphics[trim= 0in  0in 0in 0in, width=1.6in]{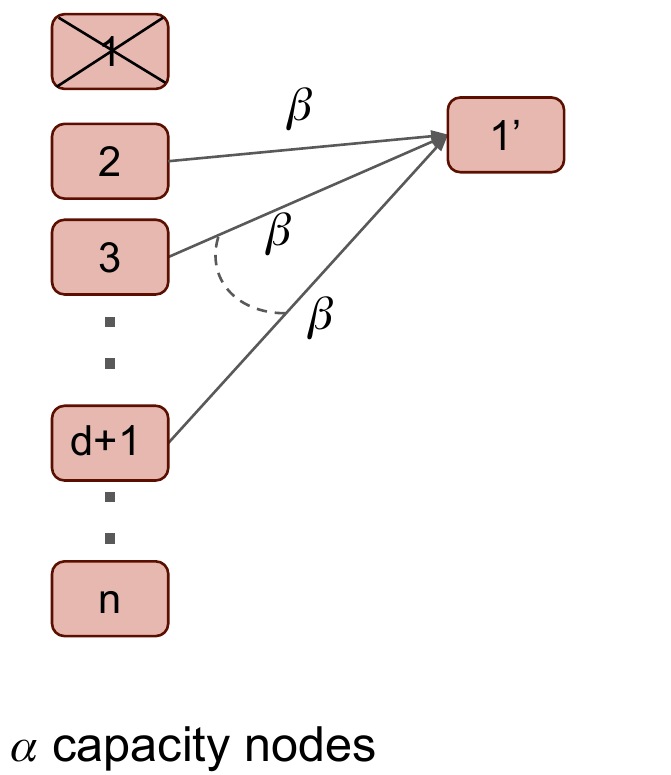} 
				\end{center}
			\end{minipage}
		\end{center}
		\caption{An illustration of the data collection and node repair properties of a regenerating code.} \label{fig:rgc}
\end{figure}

\begin{defn}[\cite{DimGodWuWaiRam}]
	Let \fq\ denote a finite field of size $q$. Then a regenerating (RG) code  \calc\ over \fq\ having integer parameter set $((n,k,d), (\alpha, \beta), B)$ where $1\le k \le n-1$, $k \le d \le n-1$, $\beta \le \alpha$, maps a file $\uu \in \fqB$ on to a collection $\{\uc_i\}_{i=1}^n$ of $n$ $\alpha$-tuples over $\mathbb{F}_q$ using an encoding map
	\bean
	E(\uu) = [\uc_1^T, \uc_2^T, \cdots, \uc_n^T]^T 
	\eean
	with the $\alpha$ components of $\uc_i$ stored on the $i$-th node in such a way that the following two properties (see Fig.~\ref{fig:rgc}) are satisfied:
	{\em Data Collection: }The message \uu\ can be uniquely recovered from the contents $\{c_{i_j}\}_{j=1}^k$ of any $k$ nodes.\\
	{\em Node Repair: } If the $f$-th node storing $\uc_f$ fails, then a replacement node can 
	\ben
	\item contact any subset $D \subseteq [n] \setminus \{f\}$ of the remaining $(n-1)$ nodes of size $|D| = d$,
	\item map the $\alpha$ contents $\uc_{h}$ of each helper node $h \in D$ on to a collection of $\beta$ repair symbols $\ua_{h,f}^D \in \fqbeta$,
	\item pool together the $d\beta$ repair symbols thus computed to use them to create a replacement vector $\hat{\uc}_f \in \fqalpha$ whose $\alpha$ components are stored in the replacement node, in such a way that the contents of the resultant nodes, with the replacement node replacing the failed node, once again forms a regenerating code. \\
	\een
	\edefn
	
	A regenerating code is said to be {\em exact-repair} (ER) regenerating code if the contents of the replacement node are exactly same as that of the failed node, ie., $\hat{\uc}_f = \uc_f$. Else the code is said to be {\em functional-repair} (FR) regenerating code. A regenerating code is said to be linear if
	\ben
	\item $E(\uu_1 + \theta \uu_2) = E(\uu_1) + \theta E(\uu_2)$, $\uu_1, \uu_2 \in \fqB, \theta \in \fq$ and
	\item the map mapping the contents $\uc_h$ of the $h$-th helper node on to the corresponding $\beta$ repair symbols $\ua_{h,f}^D$ is linear over \fq.
	\een
	
	Thus a regenerating code is a code over a vector alphabet \fqalpha\ and the quantity $\alpha$ is termed the {\em sub-packetization level} of the regenerating code. The total number $d\beta$ of \fq\ symbols to be transferred for repair of failure node is called the {\em repair bandwidth} of the regenerating code. The rate of the regenerating code is given by $R = \frac{B}{n\alpha}$. Its reciprocal $\frac{n\alpha}{B}$ is the {\em storage overhead}.

	\subsection{Cut-Set Bound}
	\begin{figure}[ht!]
		\begin{center}
		\includegraphics[trim= 0in  0in 0in 0in, width=4.5in]{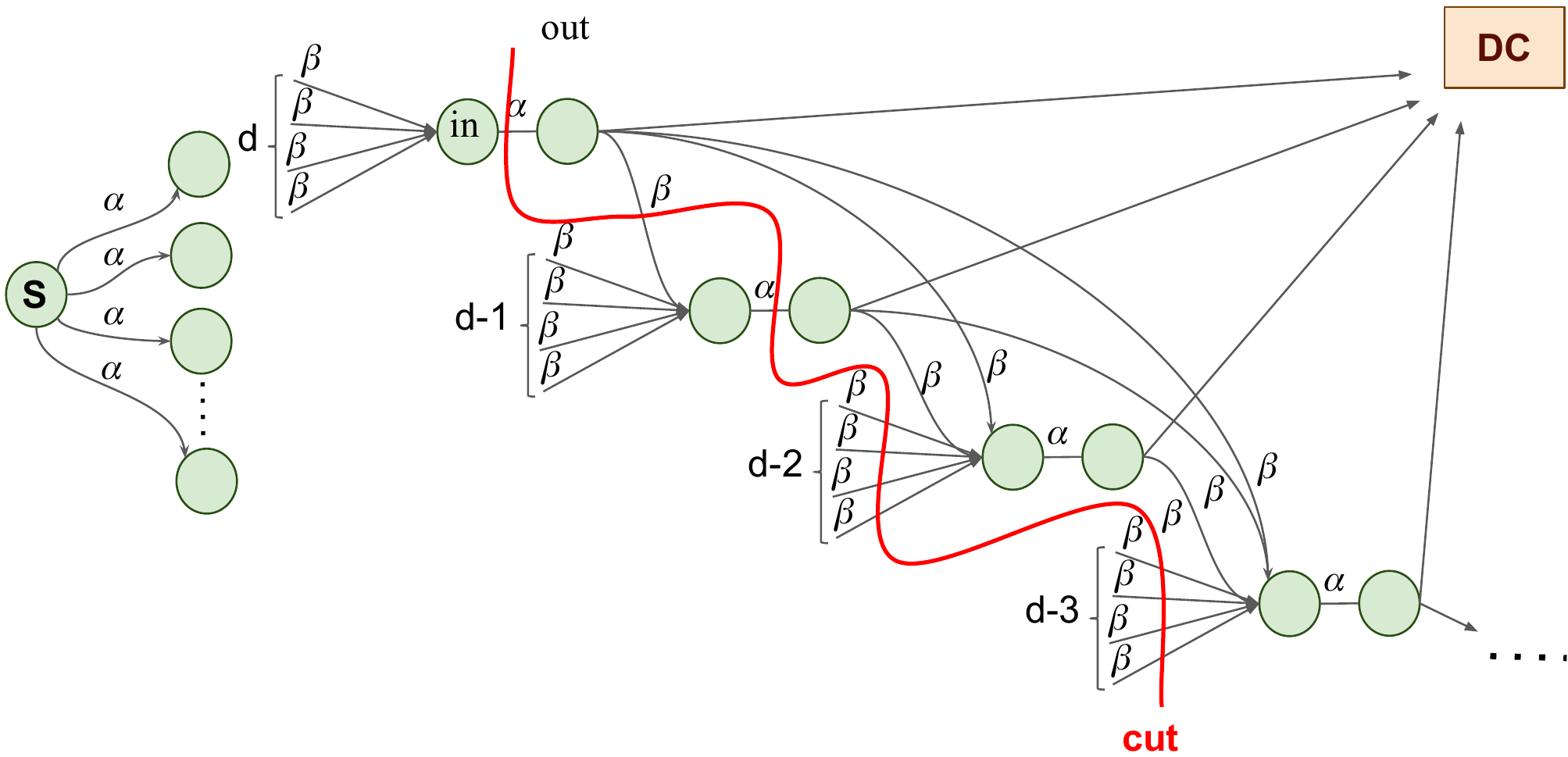} 
		\end{center}
		\caption{The graph behind the cut-set file size bound.\label{fig:cutsetbound}}	
	\end{figure}

	Let us assume that \calc\ is a functional-repair regenerating code having parameter set: $((n,k,d),$ $(\alpha, \beta), B)$. Since an exact-repair regenerating code is also a functional-repair code, this subsumes the case when \calc\ is an exact-repair regenerating code. Over time, nodes will undergo failures and every failed node will be replaced by a replacement node. Let us assume to begin with, that we are only interested in the behavior of the regenerating code over a finite-but-large number $N >> n$ of node repairs. For simplicity, we assume that repair is carried out instantaneously. Then at any given time instant $t$, there are $n$ functioning nodes whose contents taken together comprise a regenerating code. At this time instant, a data collector could connect to $k$ nodes, download all of their contents and decode to recover underlying message vector $\uu$. Thus in all, there are at most $N{n \choose k}$ distinct data collectors which are distinguished based on the particular set of $k$ nodes to which the data collector connects. 
	
	Next, we create a source node that possesses the $B$ message symbols $\{u_i\}_{i=1}^B$, and draw edges connecting the source to the initial set of $n$ nodes. We also draw edges between the $d$ helper nodes that assist a replacement node and the replacement node itself as well as edges connecting each data collector with the corresponding set of $k$ nodes from which the data collector downloads data. All edges are directed in the direction of information flow. We associate a capacity $\beta$ with edges emanating from a helper node to a replacement node and an $\infty$ capacity with all other edges. Each node can only store $\alpha$ symbols over $\fq$. We take this constraint into account using a standard graph-theory construct, in which a node is replaced by $2$ nodes separated by a directed edge (leading towards a data collector) of capacity $\alpha$. We have in this way, arrived at a graph (see Fig.\ref{fig:cutsetbound}) in which there is one source $S$ and at most $N{n \choose k}$ sinks $\{T_i\}$. 
	
	Each sink $T_i$ would like to be able to reconstruct all the $B$ source symbols $\{u_i\}$ from the symbols it receives. This is precisely the multicast setting of network coding. A principal result in network coding tells us that in a multicast setting,  one can transmit messages along the edges of the graph in such a way that each sink $T_i$ is able to reconstruct the source data, provided that the minimum capacity of a cut separating $S$ from $T_i$ is $\ge B$.   A cut separating $S$ from $T_i$ is simply a partition of the nodes of the network into $2$ sets: $A_i$ containing $S$ and $A_i^c$ containing $T_i$. The capacity of the cut is the sum of capacities of the edges leading from a node in $A_i$ to a node in $A_i^c$. A careful examination of the graph will reveal that the minimum capacity $Q$ of a cut separating a sink $T_i$ from source $S$ is given by $
	Q = \sum\limits_{i=0}^{k-1} \min \{\alpha, (d-i)\beta\}
	$
	(Fig.~\ref{fig:cutsetbound} shows an example cut separating source from sink). This leads to the following upper bound on file size \cite{DimGodWuWaiRam}:
	\bea
	\label{eq:cutsetbound}
	B &\le& \sum\limits_{i=0}^{k-1} \min \{\alpha, (d-i)\beta\}.
	\eea
	Network coding also tells us that when only a finite number of regenerations take place, this bound is achievable and furthermore achievable using linear network coding, i.e., using only linear operations at each node in the network when the size $q$ of the finite field \fq\ is sufficiently large. 
	In a subsequent result \cite{Wu}, Wu established using the specific structure of the graph, that even in the case when the number of sinks is infinite, the upper bound in \eqref{eq:cutsetbound} continues to be achievable using linear network coding.
	
	In summary, by drawing upon network coding, we have been able to characterize the maximum file size of a regenerating code given parameters $\{k, d, \alpha, \beta\}$ for the case of functional repair when there is constraint placed on the size $q$ of the finite field $\fq$. Note interestingly, that the upper bound on file size is independent of $n$. Quite possibly, the role played by $n$ is that of determining the smallest value of field size $q$ for which a linear network code can be found having file size $B$ satisfying \eqref{eq:cutsetbound}.
	A functional regenerating code having parameters: $((n, \ k, \ d), \ (\alpha, \ \beta), \ B)$ is said to be optimal provided (a) the file size $B$ achieves the bound in \eqref{eq:cutsetbound} with equality and (b) reducing either $\alpha$ or $\beta$ will cause the bound in \eqref{eq:cutsetbound} to be violated.
	
	\subsection{Storage-Repair Bandwidth Tradeoff}
	
	We have thus far, specified code parameters $(k,d) (\alpha, \beta)$ and asked what is the largest possible value of file size $B$. If however, we fix parameters $(n,k,d,B)$ and ask instead what are the smallest values of $(\alpha, \beta)$ for which one can hope to achieve \eqref{eq:cutsetbound}, it turns out, as might be evident from the form of the summands on the RHS of \eqref{eq:cutsetbound}, that there are several pairs $(\alpha, \beta)$ for which equality holds in \eqref{eq:cutsetbound}.  In other words, there are different flavors of optimality. 
		\begin{figure}[ht!]
			\bc
			\includegraphics[width=3.5in]{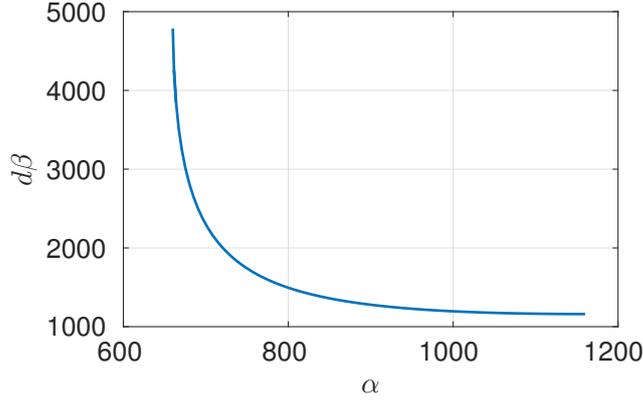}
			\caption{Storage-repair bandwidth tradeoff. Here, $(n = 60, k = 51, d=58, B = 33660)$.\label{fig:srb_tradeoff}}
			\ec
		\end{figure}
	For a given file size $B$, the storage overhead and normalized repair bandwidth are given respectively by $\frac{n\alpha}{B}$ and $\frac{d\beta}{B}$. Thus $\alpha$ reflects the amount of storage overhead while $\beta$ determines the normalized repair bandwidth. The several pairs $(\alpha, \beta)$ for which equality holds in \eqref{eq:cutsetbound}, represent a tradeoff between storage overhead on the one hand and normalized repair bandwidth on the other as can be seen from the example plot in Fig.~\ref{fig:srb_tradeoff}.  Clearly, the smallest value of $\alpha$ for which the equality can hold in \eqref{eq:cutsetbound} is given by $\alpha = \frac{B}{k}$. Given $\alpha = \frac{B}{k}$, the smallest permissible value of $\beta$ is given by $\beta = \frac{\alpha}{d-k+1}$. This represents the minimum storage regeneration point and codes achieving \eqref{eq:cutsetbound} with $\alpha = \frac{B}{k}$ and $\beta = \frac{\alpha}{d-k+1}$ are known as minimum storage regenerating (MSR) codes.  At the other end of the tradeoff, we have the minimum bandwidth regenerating (MBR) code whose associated $(\alpha, \beta)$ values are given by $\beta = \frac{B}{dk-{k \choose 2}}$, $\alpha = d\beta$.
	\begin{note}
		Since a regenerating code can tolerate $(n-k)$ erasures by the data collection property, it follows that the minimum Hamming weight $d_{\min}$ of a regenerating code must satisfy $d_{\min} \ge (n-k+1)$.  By the Singleton bound, the largest size $M$ of a code of block length $n$ and minimum distance $d_{\min}$ is given by $M \le Q^{n - d_{\min} + 1} \le Q^k$, where $Q$ is the size of alphabet of the code. Since $Q = q^{\alpha}$ in the case of regenerating code, it follows that the size $M$ of a regenerating code must satisfy $M \le q^{k\alpha}$, or equivalently $q^B \le q^{k\alpha}$, i.e., $B \le k\alpha$. But $B=k\alpha$ in the case of an MSR code and it follows that an MSR code is an MDS code over a vector alphabet. Such codes also go by the name MDS array code.
	\end{note}
	
	From a practical perspective, exact-repair regenerating codes are easier to implement as the contents of the $n$ nodes in operation do not change with time. Partly for this reason and partly for reasons of tractability, with few exceptions, most constructions of regenerating codes belong to the class of exact-repair regenerating codes. Examples of functional-repair regenerating code include the $d=(k+1)$ construction in \cite{ShaRasKumRam_ia} as well as the construction in \cite{HuChenLee_NCcloud}.
	
	Early constructions of regenerating codes focused on the two extreme points of the storage-repair bandwidth (S-RB) tradeoff, namely the MSR and MBR points. The various constructions of MBR and MSR codes are described in Sections \ref{sec:mbr}, \ref{sec:msr}. Not surprisingly, given the vast amount of data stored, the storage industry places a premium on low storage overhead. In this connection, we note that the maximum rate of an MBR code is given by:
	\bean
	R_{\text{MBR}} = \frac{B}{n\alpha} = \frac{(dk - {k \choose 2})\beta}{n d\beta} = \frac{dk-{k \choose 2}}{nd},
	\eean
	which can be shown to be upper bounded by $R_{\text{MBR}} \le \frac{1}{2}$ and is achieved when $k=d=(n-1)$.  In the case of MSR codes, there is no such limitation and MSR codes can have rates approaching $1$. 
	
	An RG code is said to be a a {\em Help-By-Transfer} (HBT) RG code if repair of a failed node can be accomplished without incurring any computation at a helper node.  If no computation is required at either helper node or at the replacement node, then the code is termed a {\em Repair-by-Transfer} (RBT) RG code. Clearly, an RBT RG code is also an HBT RG code.

\section{MBR Codes\label{sec:mbr}}
\begin{rem}
	If the $B$ message symbols are drawn randomly with uniform distribution from $\mathbb{F}_q^B$, it can be shown that in any regenerating code achieving the cut-set bound, the contents of each node correspond to a random variable that is uniform over $\mathbb{F}_q^{\alpha}$.  In an MBR code, repair is accomplished by downloading a total of just $\alpha$ symbols which clearly, is the minimum possible.  \end{rem}

\begin{rem}
	Let $\mathcal{C}$ be an MBR code. If $\mathcal{C}$ has the RBT property, it trivially follows that all scalar code-symbols of $\mathcal{C}$ are replicated at least twice. In \cite{KrishKumMBRRep16}, it is shown that for an MBR code it is not possible to have even a single scalar code-symbol replicated more than twice.  Thus the RBT property implies that the collection of $n\alpha$ scalar code-symbols associated with a codeword represent a set of $\frac{n\alpha}{2}$ distinct code symbols, each repeated twice.  The converse is not true in general. However when $d=(n-1)$, it can be shown that the two properties are equivalent. 
\end{rem}
\begin{rem}
	In \cite{ShahHBTNonexistence13}, it is shown that for $d<(n-1)$, it is not possible to construct an MBR code that has the HBT property.  
\end{rem}

\subsection{Polygonal MBR Codes}

\begin{figure}[ht!]
	\bc
	\includegraphics[width=2.2in]{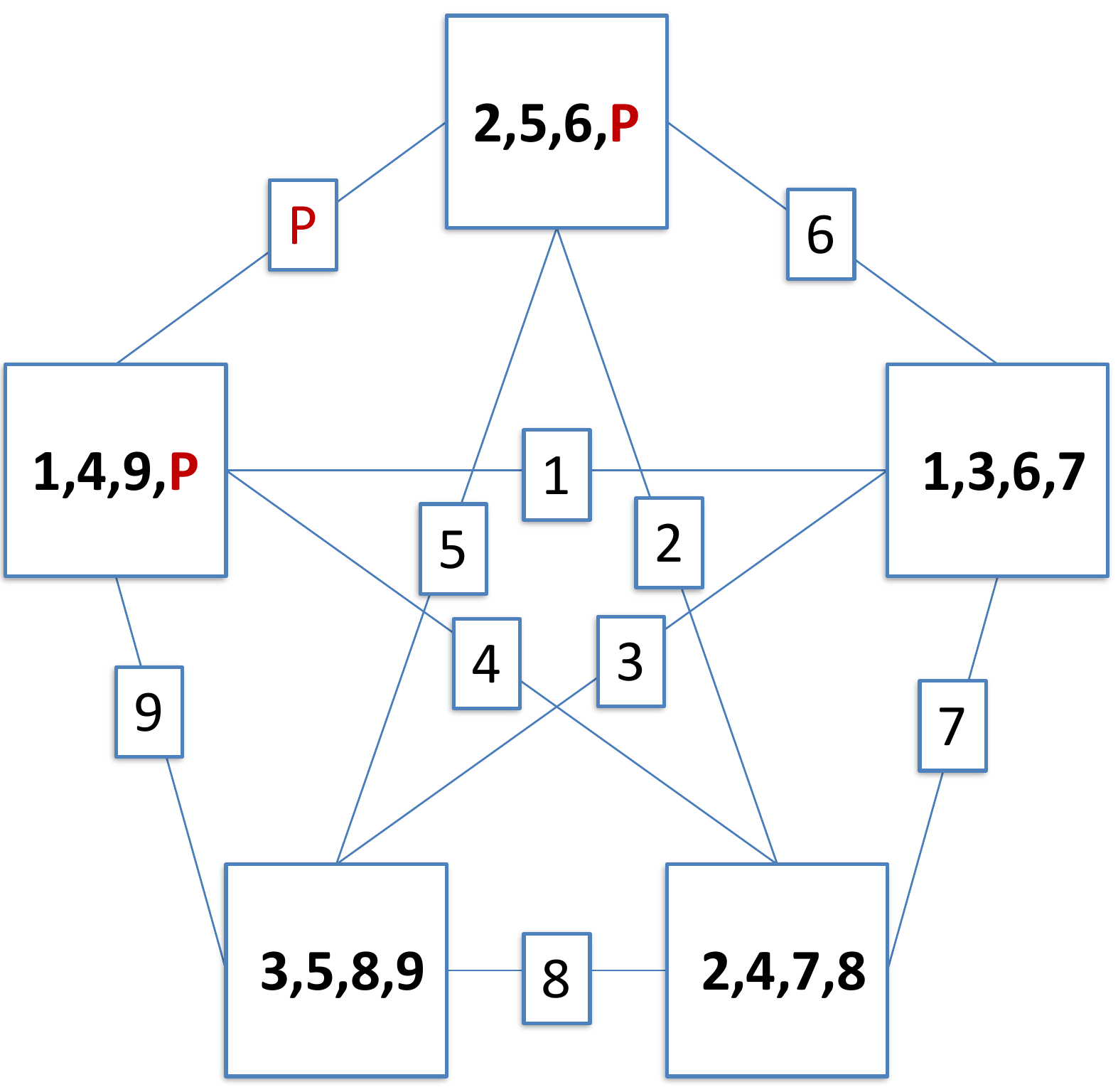}
	\caption{An example RBT MBR code for the parameters $n=5, k=3, d=4$. Here file size is $9$.  }  
	\label{fig:pentagon_code}
	\ec
\end{figure}
In the following, we describe with the help of an example, one of the first explicit families of MBR codes \cite{RasShaKumRam_allerton09}. We term these codes as {\it polygonal MBR codes}. The construction holds for parameters $k\leq d=n-1,\beta=1$ and the constructed MBR codes possess the RBT property.
\begin{eg}
	Consider the parameters $n=5,k=3,d=4$ and $\beta=1$. Thus $B=kd\beta-{k\choose 2}\beta=9$. First construct a complete graph with $n=5$ vertices and $N={5\choose 2}=10$ edges. The nine message symbols are then encoded using a $[10,9]$ MDS code to produce ten code-symbols. Each code-symbol is then uniquely assigned an edge. Each node of the MBR code stores the code-symbols corresponding to the edges incident on that node (see Fig. \ref{fig:pentagon_code}). The data collection property follows as any collection of $k=3$ nodes yields nine distinct (MDS) code-symbols. If a node fails, the replacement node can download from each of the remaining four nodes, the code-symbol corresponding to the edge it shares with the failed node. Hence repair is accomplished by merely transferring the data without any computation (RBT).
\end{eg}

\begin{rem}
	For the general construction, in order to construct an $[n,k,d=n-1],\beta=1$ MBR code, one first forms the complete graph on $n$ vertices. Each edge is then mapped to a code-symbol of an $[N,B]$ MDS code, where $N={n\choose 2}$ and $B$ is the file size parameter. An $O(n^2)$ field-size requirement is thus imposed by the underlying scalar MDS code.
\end{rem}
\subsection{Product-Matrix (PM) MBR codes}
A second, general construction for MBR codes is the PM construction~\cite{RasShaKum_pm11} which derives its name from the fact that the contents of $n$ nodes can be expressed in the form of a product of two matrices.  The two matrices are respectively an encoding matrix and a second, message matrix containing the message symbols.  This construction yields MBR codes for all feasible parameters $k\leq d\leq n-1$, $\beta=1$, with an $O(n)$ field-size requirement.   The $(n\times d)$ encoding matrix $\psi$ is of the form: $\psi\  = \left[ \begin{array}{cc}
\phi & \Delta  
\end{array} \right]$, where $\phi$, $\Delta$ are $(n \times k)$, $(n \times (d-k))$ matrices respectively. Let the $i$-th row of $\psi$ be denoted by $\psi_i^T$. The sub-matrices $\phi$ and $\Delta$ are here chosen such that any $d$ rows of $\psi$ and any $k$ rows of $\phi$ are linearly independent. The $(d\times d)$ symmetric message matrix $M$ is derived from the $B=kd-{k\choose 2}$ message symbols as follows:
\bean
M\  = \left[ \begin{array}{cc}
	S & V \\ 
	V^T & 0
\end{array} \right], \ \ \text{where $S$ is a symmetric $(k\times k)$ matrix and $V$ a $(k \times (d-k))$ matrix.}
\eean
The $i$-th node, under the PM-MBR construction, stores the matrix product $\psi_i^TM$.  The repair data passed on by helper node $j$ to replacement node $i$ is given by $\psi_j^TM\psi_i$.

\subsection{Other Work}	

In \cite{LinChungNovelRBTMBR14}, the authors introduce a family of RBT MBR codes for $d=n-1$, that are constructed based on a congruent transformation applied to a skew-symmetric matrix of message symbols. In comparison with the $O(n^2)$ field requirement of polygonal MBR codes, in this construction, a field-size of $O(n)$ suffices.  
In \cite{HanPaiZhengVarshUpdateEfficientPM15}, the authors stay within the PM framework, but provide a different set of encoding matrices for MSR and MBR codes that have least-possible update complexity within the PM framework. The authors of \cite{HanPaiZhengVarshUpdateEfficientPM15} also analyze the codes for their ability to correct errors and provide corresponding decoding algorithms.   The paper \cite{ShahHBTNonexistence13} proves the non-existence of HBT MBR codes with $d<(n-1)$.  The paper also provides PM-based constructions for two relaxations, namely (i) any failed node which is a part of a collection of systematic nodes can be recovered in HBT fashion from any $d$ other nodes  and (ii) for every failed node, there exists a corresponding set of $d$ helper nodes which permit HBT repair. The paper \cite{KrishKumMBRRep16} provides binary MBR constructions for the parameters $(k=d=n-2)$, $(k+1=d=n-2)$ and studies the existence of MBR codes with inherent double replication, for all parameters. In \cite{Raviv2AsymptOptAnyFieldMSRMBR17}, the authors provide regenerating-code constructions that asymptotically achieve the MSR or MBR point as $k$ increases and these codes can be constructed over any field, provided the file size is large enough. In \cite{MahdKhisMohajBWAdaptiveMBR17}, the authors introduce some extensions to the classical MBR framework by permitting the presence of a certain number of error-prone nodes during repair/reconstruction and by introducing flexibility in choosing the parameter $d$ during node repair.	

\begin{open}
	Determine the smallest possible field size $q$ of an MBR code for given $\{(n,k,d), (\alpha,\beta)\}$. 
\end{open}

\section{MSR Codes\label{sec:msr}}
Among the class of RG codes, MSR codes have received the greatest attention, and the reasons include: the fact that (a) the storage overhead of an MSR code can be made as small as desired, (b)  MSR codes are MDS codes and (c) MSR codes have been challenging to construct.
\subsection{Introduction\label{sec:intro_msr}}
As noted previously, an MSR code with parameters $(n,k,d, \alpha)$ has file size $B = k\alpha$ and $\beta = \frac{\alpha}{d-k+1}$. Although MSR codes are vector MDS codes that have optimum repair-bandwidth of $d\beta$ for the repair of any node among the $n$ nodes, there are papers in the literature that refer to a code as an MSR code even if optimal repair holds only for the systematic nodes. In the current paper, we refer to such codes as {\em systematic} MSR codes. While only $\beta$ symbols are sent by each of the $d$ helper nodes, the number of symbols accessed by the helper node in order to generate these $\beta$ symbols could be $> \beta$. The class of MSR codes that access at each helper node, only as many symbols as are transferred, are termed {\em optimal-access} MSR codes. MSR codes that alter a minimum number of parity symbols while updating a single, systematic symbol, are called {\em update-optimal} MSR codes.

There are several exact-repair (ER) MSR constructions available in the literature. In \cite{ShaRasKumRam_ia}, Shah et al. show that {\em interference alignment} (IA) is necessarily present in every exact-repair MSR code, and use IA techniques to construct systematic MSR codes, known as MISER codes, for $d=n-1 \ge 2k-1$. The IA condition in the context of MSR codes (observed earlier in \cite{WuDim}) demands that the interference components in the data passed by helper nodes must be aligned so that they can be cancelled at the replacement node by data received from the systematic helper nodes. In \cite{SuhRam}, Suh et al. build on \cite{ShaRasKumRam_ia} to construct MSR codes for $d \ge 2k-1$ with optimal repair bandwidth for all nodes, under the condition that the helper-node set necessarily includes systematic nodes. In \cite{RasShaKum_pm}, the well-known Product Matrix (PM) framework is introduced to provide MSR constructions for $d \geq 2k-1$, thereby settling the problem of MSR code construction in the low-rate regime, $k/n \leq 0.5$. While the method adopted in \cite{RasShaKum_pm} to provide a construction for $d > 2k-1$ is to suitably shorten a code for $d=2k-1$, an extension of the PM framework that yields constructions for any $d \geq 2k-1$ in a single step is provided in \cite{LinChungHanAl}. Apart from a few notable constructions such as the Hadamard-design-based code \cite{PapDimCad} for $(k+2,k)$ and its generalization for $(n-k) >2$ for systematic node-repair, the problem of high-rate constructions (i.e., $k/n \geq 0.5$) for all-node repair remained open. The first major result in this direction, is due to Cadambe et al. \cite{CadJafMalRamSuh} where the authors apply the notion of {\em symbol extension in interference alignment} where multiple symbols are grouped together to form a single vector symbol, to jointly achieve interference alignment.  The symbol-extension viewpoint is then used to show that ER MSR codes exist for all $(n,k,d)$, as $B$ goes to infinity. The second major development was the zigzag code construction \cite{TamWanBru,WangTamoBruck}, the first non-asymptotic high-rate MSR code construction with $d=(n-1)$ permitting rates as close as $1$ as desired, with additional desirable properties such as optimal access and optimal update.  Zigzag codes however, require a sub-packetization level ($\alpha$) that grows  exponentially with $k$ and a very large finite field size, while the earlier PM codes for the low-rate regime, have $\alpha = (k+1)$ and field-size that is linear in $n$. In a subsequent work~\cite{CadHuaLiMeh}, the authors present a systematic MSR construction having $\alpha = \frac{k^2}{4}$ and rate $R = 2/3$.  A second systematic MSR code with $\alpha = r^{\frac{k}{r+1}}$ is presented in \cite{WanTamBru_long}. A lower bound on sub-packetization level $\alpha$ of a general MSR code is derived in \cite{TamWanBru_access_tit}.  The same paper shows that $\alpha \geq r^{\frac{k-1}{r}}$ in the case of an optimal-access MSR code.  An improved lower bound for general MSR codes 
\bea 
2\log_2 \alpha (\log_{\left(\frac{r}{r-1}\right)} \alpha +1) + 1 & \geq & k,
\eea
appears in \cite{GopTamCal}. These developments made it clear that the ultimate goal in MSR code construction was to construct a high-rate MSR code that simultaneously had low sub-packetization level $\alpha$, low field-size $q$, arbitrary repair degree $d$ and the optimal-access property. 


In \cite{SasAgaKum}, a parity-check viewpoint is adopted to construct a high-rate MSR code for $d=n-1$ with a sub-packetization level $r^{\frac{n}{r}}$, requiring however, a large field-size. The construction was extended in \cite{RawKoyVis_msr}, to $d$ satisfying $k \leq d \leq n-1$.  In \cite{GopFazVar}, the authors provide a construction of MSR codes that holds for all $k \leq d \leq n-1$, but which once again required large field size. In  \cite{AgaSasKum}, the authors provide a construction for an optimal-access systematic MSR code that holds for any parameter set $(n,k,d=n-1)$ having sub-packetization $\alpha$ matching the lower bound given in \cite{TamWanBru_access_tit}. In \cite{TamWanBru, WangTamoBruck, WanTamBru_long, CadHuaLiMeh, SasAgaKum,AgaSasKum, RawKoyVis_msr, GopFazVar}, Combinatorial Nullstellansatz (see \cite{Alon_Comb}) is used to prove the MDS property due to which the codes are non-explicit and have large field sizes.


In \cite{RavSilEtz}, an explicit optimal-access, systematic MSR code is constructed with optimal $\alpha$, but for limited values of $n-k=2, 3$. In \cite{YeBar_1}, the authors present two different classes of explicit MSR constructions, one of which possessed the optimal-access property. Both constructions are for any $(n,k,d)$ with 
sub-packetization level growing exponential in $n$. 

In a major advance, in \cite{YeBar_2}, Ye and Barg present an explicit construction of a high-rate, optimal-access MSR code with $\alpha = r^{\lceil \frac{n}{r} \rceil}$, field size no larger than $r\lceil \frac{n}{r} \rceil$, and $d=(n-1)$. Essentially the same construction was independently rediscovered in \cite{SasVajKum_arxiv} from a different coupled-layer perspective, where layers of an arbitrary MDS codes are coupled by a simple pairwise coupling transform to yield an MSR code.  Just prior to the appearance of these two papers, in an earlier version of \cite{LiTangTian}, the authors show how a systematic MSR code can be converted into an MSR code by increasing the sub-packetization level by a factor of $r=(n-k)$ using  a pairwise symbol transformation. This result is then extended in \cite{LiTangTian}, to present a technique that takes an MDS code, increases sub-packetization level by a factor of $r$ and converts it into a code in which the optimal repair of $r$ nodes can be carried out. By applying this transform repeatedly $\lceil\frac{n}{r}\rceil$ times, it is shown that any scalar MDS code can be transformed into an MSR code.  It turns out that the three papers \cite{YeBar_2,SasVajKum_arxiv,LiTangTian}, either explicitly or implicitly, employed as a key part of the construction, essentially the same pairwise-coupling transform.   

Let $s=(d-k+1)$.  More recently, the lower bound $\alpha \geq s^{\frac{n}{s}}$ was derived in \cite{BalKum_subpkt} for optimal-access MSR codes.  The same paper also shows that the sub-packetization level of an MDS code that can optimally repair any $w$ of the $n$ nodes must satisfy $\alpha \ge s^{\lceil\frac{w}{s}\rceil}$.  These results established that the earlier constructions in \cite{SasAgaKum, RawKoyVis_msr, YeBar_2, SasVajKum_arxiv,LiTangTian, VajBalKum} were optimal in terms of sub-packetization level $\alpha$.    It is also shown in \cite{BalKum_subpkt}, that a vector MDS code that can repair failed nodes belonging to a fixed set of $Q$ nodes with minimum repair bandwidth and in optimal-access fashion, and having minimum sub-packetization level $\alpha = s^{\frac{n}{s}}$ must necessarily have a coupled-layer structure, similar to that found in \cite{YeBar_2,SasVajKum_arxiv,LiTangTian}. 
An explicit construction of MSR codes for $d < (n-1)$ with $\alpha$ achieving the lower bound $\alpha \geq s^{\frac{n}{s}}$ 
for $s =2,3,4$ was recently provided in \cite{VajBalKum}. 

\begin{open}Derive a tight lower bound on the sub-packetization level of MSR codes and provide matching constructions.
\end{open}
\begin{open}Constructions for explicit optimal-access MSR codes for any $(n,k,d)$ with optimal sub-packetization.
\end{open}




\begin{table}[ht!]
	\caption{A list of MSR constructions and the parameters. In the table $r=n-k$, $s=d-k+1$ and when All Node Repair is No, the constructions are systematic MSR. By `non-explicit' field-size, we mean that the order of the size of the field from which coefficients are picked is not given explicitly. }
	\bean
	\scalebox{0.85}{$
		\begin{array}{||l|c|c|c|c|c|c||}
		\hline \hline
		\text{MSR Code} &  \text{Parameters} & \alpha & \text{Field Size} & \text{All Node} & \text{Optimal} & \text{Notes}\\
		&&&& \text{Repair}& \text{ Access} &\\ \hline
		\text{\cite{ShaRasKumRam_ia} Shah et al.} & (n,k,d=n-1\ge 2k-1) & r & 2r & \text{No} & \text{Yes} & \text{IA framework}\\ \hline
		\text{\cite{SuhRam} Suh et al.} & (n, k, d \ge 2k-1) & s & 2r & \text{Yes} & \text{No} & \text{IA framework}\\
		& (n, k \le 3, d) & & & & &\\ \hline
		\text{\cite{RasShaKum_pm} Rashmi et al.}  & (n \ge 2k-1, k, d) & r & n & \text{Yes} & \text{No} & \text{Product Matrix framework}\\  \hline
		
		\text{\cite{PapDimCad} Papailiopoulos et al.} & (n,k,d=n-1)& r^k & \text{non-explicit} &\text{No} &  \text{No} & {\text{High rate systematic MSR}}\\ \hline 
		\text{\cite{TamWanBru} Tamo et al. }  & (n,k,d=n-1)  & r^{k+1} & \leq 4 \text{ when } r \leq 3, & \text{Yes} & \text{Yes} & \text{High rate MSR}\\
		\text{\cite{WangTamoBruck} Wang et al.} & & &\text{else non-explicit} & & & \text{known as Zigzag codes}\\
		\hline
		\text{\cite{CadHuaLiMeh} Cadambe et al.}& (n\ge \frac{3k}{2}, k, d=n-1) & O(k^2)& \text{non-explicit} & \text{No} & \text{Yes} & \\ \hline
		&  &  & & & &\text{Introduced Parity-check }\\
		\text{\cite{SasAgaKum} Sasidharan et al.} & (n, k, d=n-1) & r^{\left\lceil\frac{n}{r}\right\rceil} & O(n^r)& \text{Yes}& \text{Yes} & \text{viewpoint, Optimal $\alpha$}\\ \hline
		\text{\cite{GopFazVar} Goparaju et al.} & (n,k,d) & s^{k {r \choose s}} & \text{-} & \text{No} & \text{Yes}  & \text{Very large field-size needed.} \\ 
		& &  & & &  & \text{See Sec. IV in \cite{GopFazVar} for details.} \\ \hline
		\text{\cite{RawKoyVis_msr} Rawat et al.} & (n, k, d) & s^{\left\lceil\frac{n}{s}\right\rceil} & O(n^r) & \text{Yes}& \text{Yes} & \text{ Extended \cite{SasAgaKum} for $d<n-1$}\\ \hline
		\text{\cite{YeBar_1} Ye et al.} & (n,k,d) & s^n & sn & \text{Yes} & \text{No} & \\
		& (n,k,d) & s^{n-1} & n+1 & \text{Yes} & \text{Yes} &\\ \hline
		\text{\cite{YeBar_2} Ye et al.} & & & & & &\\ 
		\text{\cite{SasVajKum_arxiv} Sasidharan et al.} & (n, k, d=n-1)& r^{\left\lceil\frac{n}{r}\right\rceil} & r\lceil\frac{n}{r}\rceil & \text{Yes} & \text{Yes} & \text{Optimal } \alpha\\
		\text{\cite{LiTangTian} Li et al.} & &  & & & & \text{for optimal-access MSR}\\ \hline
		\text{\cite{VajBalKum} Vajha et al.} & (n,k,d) &  & & & & \\
		& d \in \{k+1, k+2, k+3\} & s^{\left\lceil \frac{n}{s} \right\rceil} & O(n) & \text{Yes} &\text{Yes}&\\ \hline
		\hline \hline
		\end{array}	
		$}
	\eean 
\end{table}

\subsection{Constructions of MSR Codes}
\paragraph{Product Matrix Construction \cite{RasShaKum_pm}:} We provide a brief description of the PM construction for parameter set $(n, k, d=2k-2), (\alpha=k-1, \beta = 1, B=k(k-1))$. The message symbols $\{u_i\}_{i=1}^B$ are arranged in the form of a $(d \times \alpha)$ matrix $M$: $M = \left[ \begin{array}{cc}
S_1 & S_2
\end{array}\right]^T$,
where the $S_1, S_2$ are symmetric $(k-1) \times (k-1)$ matrices containing the $B=k(k-1)$ message symbols. 
Encoding is carried out using a $(n \times d)$ matrix  $\Psi = [\begin{array}{cc}
\Phi & \Lambda \Phi \end{array}]$, where $\Phi$ is an $n \times (k-1)$ matrix and $\Lambda$ is a diagonal matrix. Let the $i$-th row of $\Psi$  be $\psi_i^T$, the $i$-th row of $\Phi$ be $\phi_i^T$ and the $i$-th diagonal element in $\Lambda$ be $\lambda_i$. The $\alpha$ symbols stored in node $i$ are given by: $\uc_i^T = \psi_i^T M = \phi_i^T S_1 + \lambda_i \phi_i^T S_2.$\\
The matrix $\Psi$ is required to satisfy the properties: 1) any $d$ rows of $\Psi$ are linearly independent, 2) any $\alpha$ rows of $\Phi$ are linearly independent and 3) the $n$ diagonal elements of $\Lambda$ are distinct.

\underline{Node Repair:} Let $f$ be the index of failed node, thus the aim is to reconstruct $\uc_f$.  The $i$-th helper node, $h_i$, $i \in [d]$, passes on the information: $\uc_{h_i}^T \phi_f = \psi_{h_i}^T M \phi_f$.  Upon aggregating the repair information we obtain the vector,
\bean
\left[\begin{array}{cccc}
	\psi_{h_1} & \psi_{h_2} & \cdots & \psi_{h_d}
\end{array}\right]^T
\left[M \phi_f \right].
\eean
As any $d$-rows of $\Psi$ are linearly independent, the vector $M\phi_f$ can be recovered.  From $M\phi_f$, we can obtain $S_1 \phi_f$ and $S_2\phi_f$.  Since $S_1$ and $S_2$ are symmetric, we can recover the contents $\uc_f^T = \phi_f^T S_1 + \lambda_f \phi_f^T S_2$ of the replacement node.  

\underline{Data Collection:} Let $\Psi_{\text{DC}} = [\begin{array}{cc} \Phi_{\text{DC}} & \Lambda_{\text{DC}} \Phi_{\text{DC}} \end{array}]$ be the $(k \times d)$ sub matrix of $\Psi$ corresponding to the $k$ nodes contacted for data collection. We wish to retrieve $M$ from $\Psi_{\text{DC}} M = \Phi_{\text{DC}} S_1 + \Lambda_{DC} \Phi_{\text{DC}} S_2$. This can be done in three steps:
\ben
\item First compute $\Psi_{\text{DC}} M \Phi_{\text{DC}}^T =\Phi_{\text{DC}} S_1 \Phi_{\text{DC}}^T + \Lambda_{DC} \Phi_{\text{DC}} S_2 \Phi_{\text{DC}}^T $ and set $P = \Phi_{\text{DC}} S_1 \Phi_{\text{DC}}^T$, $Q = \Phi_{\text{DC}} S_2 \Phi_{\text{DC}}^T$. 
\item It is clear that $P, Q$ are symmetric. Thus we know both $P_{ij} + \lambda_{i} Q_{ij}$ and $P_{ij} + \lambda_{j} Q_{ij}$. Since $\lambda_i \ne \lambda_j$ for $i \ne j$, we can recover $P_{ij}$ and $Q_{ij}$ for all $i \ne j$.
\noindent \item Since we know $P_{ij}$ for $j \neq i$, we can compute the vector $\phi_i^T S_1 [\phi_1, \cdots, \phi_{i-1}, \phi_{i+1}, \cdots, \phi_{k}]$.  Since any $\alpha$ rows of $\Phi$ are linearly independent, we can recover $\{\phi_i^T S_1 | 1 \le i \le k \}$.  For any set of $\alpha$ distinct elements $\phi_i^T$, we can compute 
$\left[ \begin{array}{ccc}
\phi_1 & \cdots & \phi_{\alpha}
\end{array}\right]^T S_1$, 
from which $S_1$ can be recovered. $S_2$ can be similarly recovered from $Q$. The present description assumes data collection from the first $k$ nodes, while a similar argument holds true for any arbitrary set of $k$ nodes.
\een

\paragraph{Coupled Layer Code:\label{par:clay} } We present here the constructions in \cite{YeBar_2,SasVajKum_arxiv,LiTangTian} from a coupled-layer perspective.  We explain the construction here 
only for parameter sets of the form:
\bean
(n=st,\ k=s(t-1),\ d=n-1), (\alpha=s^t,\ \beta=s^{t-1}), q \ge n,
\eean
where $s \ge 1, t \ge 2$.  (The construction can however, be extended to yield MSR codes for any $(n,k,d=n-1)$ using a technique called shortening). The coupled-layer code can be constructed in two steps: (a) in the first step, we layer $\alpha$, $(n,k)$ MDS codewords to form an uncoupled data-cube, (b) in the second step, the symbols within the uncoupled-data cube are transformed using a pairwise-forward-transform (PFT) to obtain the coupled layer code.  While we discuss only the case when the MDS code employed in the layers is a scalar MDS code, there is a straightforward extension that permits the use of vector MDS codes (see \cite{SasVajKum_arxiv}). 

Let us first consider the $n \alpha$ symbols $\{U(x,y,\uz) \ | \ (x,y) \in \Zs \times \Zt, \uz \in \Zst \}$  of an uncoupled code $\calu$ where each code symbol $U(x,y,\cdot)$ is a vector of $\alpha$ symbols in $\fq$. These $n\alpha$ symbols can be organized to form a three-dimensional (3D) data cube (see Fig.\ref{fig:datacube_uncoupled}), where $(x,y) \in \Zs \times \Zt$ is the node index and where $\uz \in \Zst$ serves to index the contents of a node.  For fixed $\uz \in \Zst$, we think of the symbols $\{U(x,y,\uz) \ | \ (x,y) \in \Zs \times \Zt\}$ as forming a plane or a layer and thus the value of $\uz$ may be regarded as identifying a plane or layer. The symbols in each layer of the uncoupled data cube form an $(n,k)$ MDS code.

\begin{figure}[!ht]
	\centering
	\begin{minipage}[c]{0.3\textwidth}
		\centering
		\includegraphics[width=2.2in]{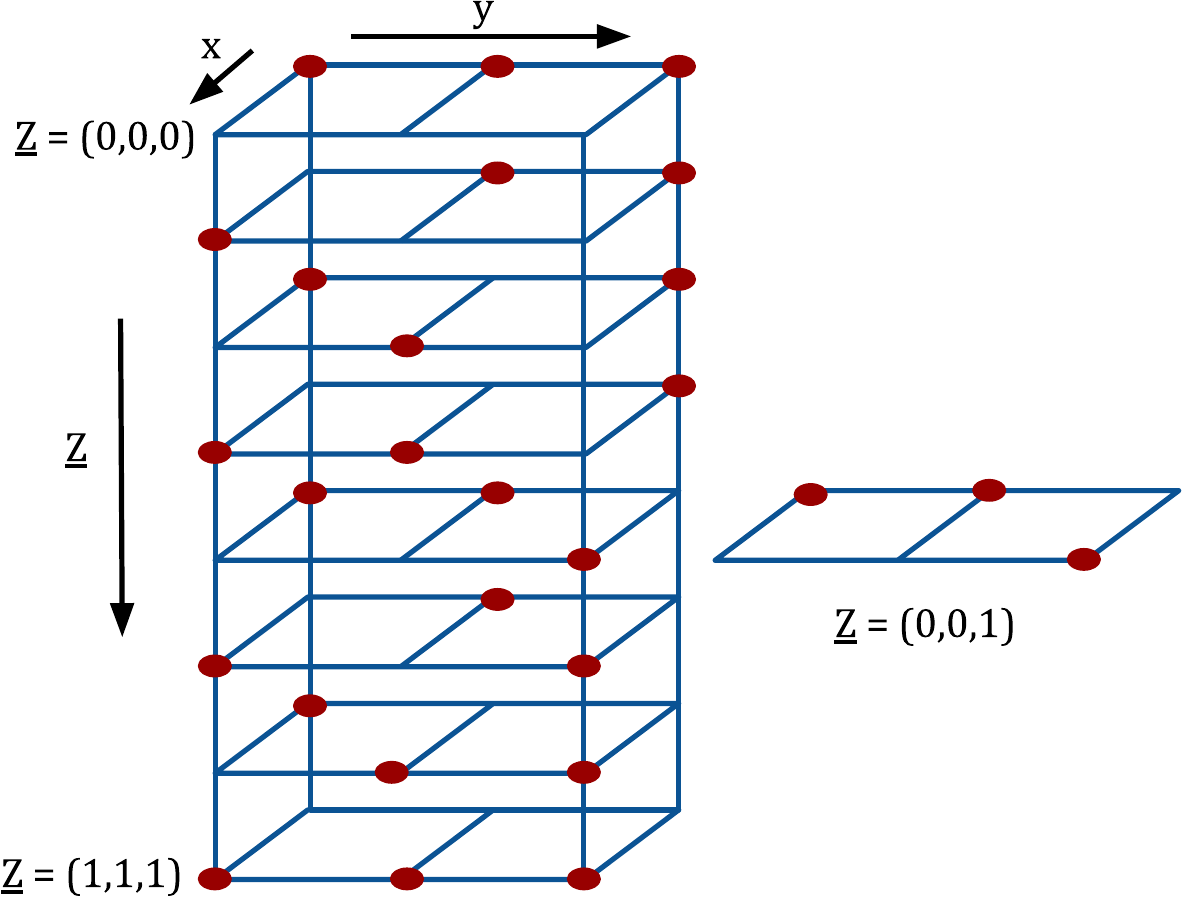}
	\end{minipage}
	\hspace{0.05\textwidth}
	\begin{minipage}[c]{0.6\textwidth}
		\centering
		\includegraphics[width=2.2in]{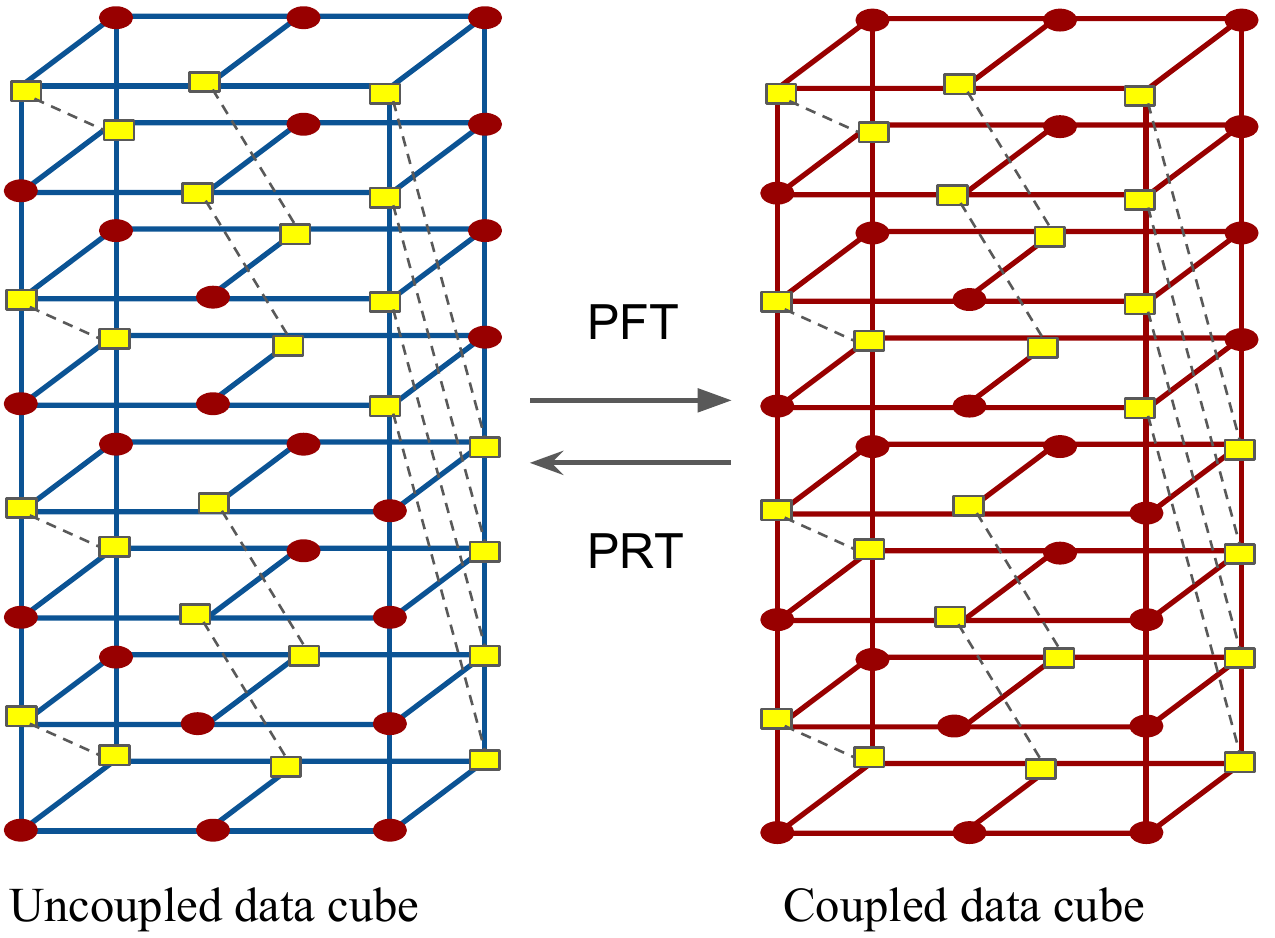}
	\end{minipage}\\[3mm]
	\begin{minipage}[t]{0.3\textwidth}
		\centering
		\caption{Uncoupled data cube for $s=2$, $t=3$. The red dots represent plane-index $\uz$.}
		\label{fig:datacube_uncoupled}
	\end{minipage}
	\hspace{0.05\textwidth}
	\begin{minipage}[t]{0.6\textwidth}
		\centering
		\caption{Paired symbols are shown using yellow rectangles connected by dotted lines. Uncoupled symbols are transformed using PFT to get the coupled symbols in the coupled data cube.}
		\label{fig:datacube_ucoupled_pairs}
	\end{minipage}
\end{figure}

Let, $\Theta$ be the $\left( (n-k) \times n \right)$ parity check (p-c) matrix of an arbitrarily chosen $(n,k)$ scalar MDS code defined over $\fq$. Let $\theta_{x,y}(\ell)$ denote the element of $\Theta$ lying in the $\ell$th row, and $(x,y)$th column. Then the symbols of the uncoupled code satisfy the p-c equations:
\bea
\label{eq:uncoupled_code}
\sum\limits_{(x,y) \in \Zs \times \Zt} \theta_{x,y}(\ell) U(x,y;\uz) &=& 0, \ \ \forall \ell \in [0, n-k-1], \ \forall \uz \in \Zst.
\eea

Next, consider an identical data-cube (see Fig.~\ref{fig:datacube_ucoupled_pairs}) containing the $n\alpha$ symbols 
\bean 
\{C(x,y,\uz) \ | \ (x,y) \in \Zs \times \Zt, \uz \in \Zst \}
\eean corresponding to the coupled-layer code. This data-cube will be referred to as the coupled data cube.  The symbols of the coupled data cube are derived from the symbols of the uncoupled data cube as follows.  
Let $\gamma$ be an element in $\fq \setminus \{0\}$, $\gamma^2 \ne 1$.   Let us define $\uz(y,x) = (z_0, \cdots, z_{y-1}, x, z_{y+1}, \cdots, z_{t-1})$. Each symbol $C(x,y,\uz)$ which is such that $z_y \ne x$ is paired with a symbol $C(z_y,y,\uz(y,x))$.  The values of the symbols so paired, are derived from those of their counterparts in the uncoupled data cube as per the $(2 \times 2)$ linear transformation given below, termed as the PFT: 
\bea
\label{eq:pft}
\left[ \begin{array}{c}
	C(x,y,\uz)\\
	C(z_y,y,\uz(y,x))
\end{array} \right] = \left[\begin{array}{cc}
1 & \gamma\\
\gamma & 1
\end{array}\right]^{-1} \left[\begin{array}{c}
U(x,y,\uz)\\
U(z_y,y,\uz(y,x))
\end{array}\right]. \eea
In the case of the symbols $C(x,y,\uz)$ when $z_y=x$, the relation between symbols in the two data cubes is even simpler and given by: $C(x,y,\uz) = U(x,y,\uz)$. The pairwise reverse transform (PRT) is simply the inverse of the PFT and is used to obtain the uncoupled symbols $U(\cdot)$ from the coupled symbols $C(\cdot)$.
The p-c equations satisfied by the coupled-layer code can be derived using the p-c equations \eqref{eq:uncoupled_code} satisfied by the symbols in the uncoupled data cube and the PRT :
\bea
\scalebox{0.9}{$
\label{eq:coupled_code}
\sum\limits_{(x,y) \in \Zs \times \Zt} \theta_{x,y}(\ell) C(x,y,\uz) + \sum\limits_{y \in \Zt} \sum\limits_{x \ne z_y} \gamma \theta_{x,y}(\ell) C(z_y,y,\uz(y,x)) = 0, \ \ \forall \uz \in \Zst, \ \ell \in [0,n-k-1].$}
\eea
\underline{Node Repair:} Let $(x_0,y_0)$ be the failed node. To recover the symbols $\{C(x_0, y_0, \uz) \ | \  \uz \in \Zst \}$, each of the remaining nodes $(x,y) \ne (x_0, y_0)$ sends helper information:
$\{C(x,y, \uz) \ | \ \uz \in \Zst,\ z_{y_0} = x_0 \}$.  Focusing on \eqref{eq:coupled_code} for $\uz$ such that $z_{y_0} = x_0$ and retaining on the left side the unknown symbols, leads to equations of the form: 
\bea
\label{eq:repaireqns}
\theta_{x_0, y_0}(\ell) C(x_0, y_0, \uz) + \sum\limits_{x \ne x_0} \gamma \theta_{x,y_0}(\ell) C(x_0,y_0,\uz(y_0, x)) &=& \kappa^*, \ \ \forall \ell \in [0, n-k-1],
\eea
where $\kappa^*$ is a known value.  These equations can be solved for the contents of the replacement node. 

\underline{Data Collection:} Please refer to \cite{SasVajKum_arxiv} for the proof of data collection property. 



\paragraph{Ye-Barg Codes \cite{YeBar_1}:} 


In \cite{YeBar_1} the authors present two constructions, for non optimal-access MSR and optimal-access MSR codes respectively. These are the only known MSR constructions that are explicit and yield MSR codes for any parameter set $(n,k,d)$. The same codes are also optimal for the repair of multiple nodes. We describe here, for simplicity, the construction of $(n,k,d)$ MSR codes having parameters: $(n, k, d), (\alpha=s^n, \beta=s^{n-1}), q \ge sn $ where $s=d-k+1$, defined over finite field \fq\ for $s \ge 1$. Let $\{C(i,\uz) \ | \ i \in [n], \uz \in \Zsn \}$ be the collection of $n\alpha$ symbols of a codeword, where $i$ is the node index and $\uz$ is the scalar symbol index. The code is defined via the p-c equations given below:
\bea
\label{eq:yebarg_parity}
\sum\limits_{i \in [n]} \lambda_{i, z_i}^{\ell} C(i; \uz) &=& 0, \ \ \forall \uz \in \Zsn, \ \ell \in [0, n-k-1],
\eea
where the $\{ \lambda_{i,j}, i \in [n], j \in [0,s-1]\}$ are all distinct, thereby requiring a field size $q \ge sn$.

\underline{Node Repair:} Let $f$ be the failed node, $D$ be the set of $d$ helper nodes. The helper information sent by a node $i \in D$ is given by: $\{ \mu_f^i(\uz)=\sum\limits_{j=0}^{s-1} C(i, \uz(f,j)) \ | \ \uz \in \Zsn, z_f = 0 \}$.  Next, fixing $z_i, \ \forall i \in [n] \setminus \{f\}$ and summing equations \eqref{eq:yebarg_parity} over the values of $z_f$, we get:
\bea
\label{eq:yebarg_repair1}
\sum\limits_{z_f=0}^{s-1} \lambda_{f,z_f}^{\ell} C(f, \uz) + \sum\limits_{i \in [n] \setminus \{f\} } \lambda_{i,z_i}^{\ell} \mu_f^i(\uz) &=& 0, \ \ \forall \ell \in [0,n-k-1].
\eea
It can be shown that the collection of symbols $\{\mu_f^i(\uz) | i \in [n] \setminus \{f\} \}$ form an $[n-1, d]$ MDS code. Therefore, all the $\mu_f^{i}(\uz)$ can be computed from the known $d$ values supplied by  the helper nodes and 
the symbols $\{C(f,\uz) \ | \ \uz \in \Zsn\}$ can thus be recovered from \eqref{eq:yebarg_repair1}.

\underline{Data Collection:} For every $\uz \in \Zsn$, the collection $\{C(i,\uz) | i \in [n]\}$ forms an $(n,k)$ MDS code. Therefore, any $(n-k)$ erased symbols can be recovered.\\ 

{\bf Multiple Node Repair} Let $1\le t \le n-k$ be the number of erasures to be recovered. It was shown in \cite{CadJafMalRamSuh} that the minimum repair bandwidth required to repair $t$ erasures in an MDS code having sub-packetization level $\alpha$ is lower bounded by $\gamma_t \ge \frac{t(n-t)\alpha}{n-k}$. Given that $ k \le d \le n-t$ is the number of helper nodes that need to be contacted during the repair of $t$ nodes, $\gamma_t$ is lower bounded by: $\gamma_t \ge \frac{td\alpha}{d+t-k}$. The Ye-Barg code presented above achieves this bound~\cite{YeBar_1}. The $t$ node repair discussed here assumes a centralized repair setting whereas an alternate, cooperative repair approach is discussed in Section \ref{subsec:coop_rep}. \\

{\bf Adaptive Repair} Adaptive-repair $(n,k)$ MSR codes are MSR codes that can repair a failed node by connecting to any $d$ nodes, for any $d \in [k, n-1]$ and can reconstruct the failed node by downloading $\frac{\alpha}{d-k+1}$ symbols each from the $d$ helper nodes. Constructions of MSR codes with adaptive repair can be found in \cite{GopFazVar, YeBar_1, MahMohKhis_MSRAdapt}.

\section{On the Storage-Repair Bandwidth Tradeoff under Exact Repair \label{sec:on-srb}}

We distinguish between the S-RB tradeoffs for exact and functional-repair RG code, by referring to them as the ER and FR tradeoff respectively. The file size $B$ under exact repair cannot exceed that in the FR case since ER may be regarded as a trivial instance of FR.  However, unlike in the case of functional-repair codes, the data collection problem in the exact-repair setting, cannot be identified with a multicast problem simply because each replacement node for a failed node acts as a sink for a different set of data.     Thus it is not clear that the cut-set bound for FR can be achieved under ER, leaving the door open for an S-RB tradeoff in the case of ER that lies strictly above and to the right of the FR tradeoff in the $(\alpha, \beta)$-plane.  There do exist constructions of exact-repair MBR and MSR codes meeting the cut-set bound with equality, showing that the ER tradeoff coincides with the FR tradeoff at the extreme MSR and MBR points. 



\subsection{The Non-existence of ER Codes Achieving FR tradeoff\label{sec:non-exist}}

The first major result on the ER tradeoff was the result in \cite{ShaRasKumRam_rbt}, showing that apart from the MBR point and a small region adjacent to the MSR point, there do not exist ER codes whose $(\alpha, \beta)$ values lie on the interior point of the FR tradeoff.  We set $\alpha_{\text{\tiny MSR}} = \beta(d-k+1)$ to be the value of $\alpha$ at the MSR point.  
\bthm \label{thm:shah_non_exist} For any given values of $(n,k \geq 3,d)$, ER codes having parameters $(\alpha, \beta, B)$ corresponding to an interior point on the FR tradeoff do not exist, except possibly for $\alpha$ in the range 
\bea \label{eq:exception}
\alpha_{\text{\tiny MSR}} & \leq \ \alpha \ \leq & \alpha_{\text{\tiny MSR}} 
\left( 1 + \frac{1}{   \alpha_{\text{\tiny MSR}} (\alpha_{\text{\tiny MSR}}+1)     }\right) ,
\eea
corresponding to a small region in the neighborhood of the MSR point.
\ethm

\begin{figure}[ht!]
	\centering
	\includegraphics[width=2.5in]{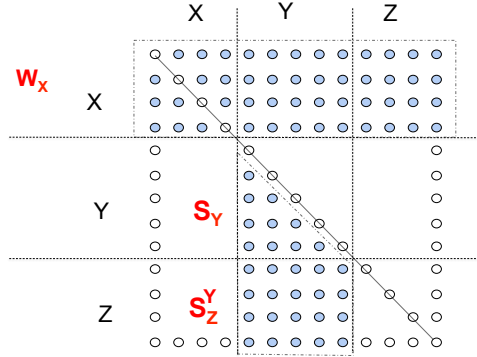}
	\caption{The repair matrix \label{fig:repmatrix}}  
\end{figure}


\bpf (Sketch) By restricting attention to any $(d+1)$ symbols of an RG code having parameter set $(n,k,d, (\alpha, \beta), B)$ one obtains a second RG code with parameter set $((d+1),k,d, (\alpha, \beta), B)$ in which all the remaining nodes participate in the repair of a failed node.  This simplifies the analysis of the repair setting and with this in mind, in the proof, we set $n=(d+1)$. 
When the message vector $\underline{u}$ is picked uniformly at random, we have associated nodal random variables $\{W_i \mid i \in [n]\}$ and repair data variables $\{S_i^j \mid i \in [n] \setminus j \}$, where $S_i^j$ denotes the data passed from node $i$ to replacement node $j$. The repair matrix $\mathbb{S}$ (see Fig.~\ref{fig:repmatrix}) is an $(n \times n)$ matrix whose $(i,j)$th entry $i \neq j$, is $S_i^j$.  The diagonal elements of $\mathbb{S}$ do not figure in the discussion and maybe set equal to $0$.  Given subsets $H, N \subset [n]$, we 
set $W_N = \{W_i \mid i \in N\}$, $S_H^N= \{S_i^j \mid i \in H, j \in N\}$. 
%
%
%
%
%
We introduce the index sets $X = \{1, 2, \ldots, m\}$, $Y = [k]\setminus X$ and $Z = [k+1, \ d+1]$ for $m \leq k$. The file size $B$ can be expressed in terms of the joint entropy of the node and repair-data variables (with logs computed to base $q$):
\bea
B & = & H(W_X, W_Y) \ = \ H(W_X, S_{Y \cup Z}^Y) \\
& = & H(W_X) + H(S_{Y \cup Z}^Y \mid W_X) \le  H(W_X) + \sum\limits_{j=m+1}^k H(S_{[m+1,j-1] \cup Z}^j \mid W_X) \label{eq:cond_entropy} \\
& \leq & m\alpha + \sum_{i=m+1}^{k} (d-i+1)\beta \ := \ B_m, \  \ m = 0,1,\cdots, k. \label{eq:ER_cutset}
\eea
The cut-set bound in \eqref{eq:cutsetbound} corresponds to the inequalities: $B \leq \min_{m=0,1,\ldots, k} B_m$. For the  bound to hold with equality, the joint random variables $S_Y^Y$ and $S_Z^Y$ must have maximum entropy. However it can be shown that the entropy of a row in the repair matrix is limited by $\beta$ if the cut-set bound holds with equality. This leads to a contradiction, concluding the proof. 
\epf
Theorem~\ref{thm:shah_non_exist} does not however, rule out the possibility of an ER code having tradeoff approaching the FR tradeoff asymptotically i.e., as the file size $B \rightarrow \infty$. 

\subsection{The S-RB Tradeoff for $(4,3,3)$}

It is possible that the entropies of 
the random variables involved satisfy Shannon inequalities other than the ones we have noted and which shed light on the ER tradeoff. For the particular case $(n,k,d)=(4,3,3)$, Tian \cite{Tia} was able to identify such an inequality with the help of a modified version of the Information Theory Inequality Prover (ITIP) \cite{ITIP,Yeu}.  
\begin{figure}[ht!]
	\bc
	\includegraphics[width=2.5in]{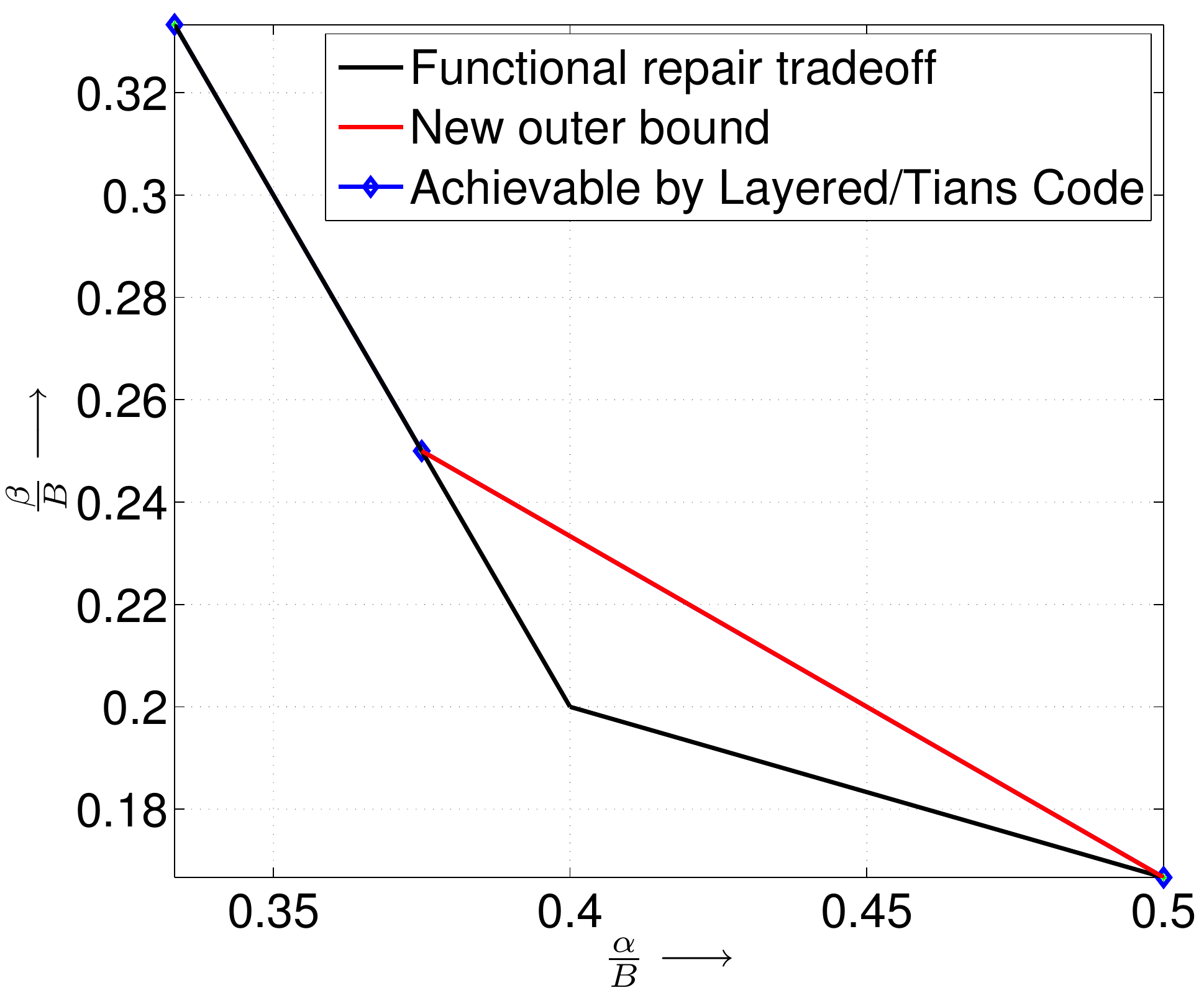}
	\caption{The (4,3,3) normalized tradeoff. \label{fig:433}}  
	\ec
\end{figure}
Let $\bar{\alpha} = \frac{\alpha}{B}$, $\bar{\beta} = \frac{\beta}{B}$ represent the {\em normalization} of $\alpha$ and $\beta$ with respect to file size $B$. A point $(\bar{\alpha}, \bar{\beta})$ is said to be achievable if for any $\epsilon > 0$, there exists an ER-RG code whose $(\bar{\alpha}_1, \bar{\beta}_1)$ is $\epsilon$-close to $(\bar{\alpha}, \bar{\beta})$.  The normalized tradeoff, i.e., the tradeoff expressed in terms of $\bar{\alpha}$ and $\bar{\beta}$ allows comparison of codes across file sizes $B$.  In the limit as $B \rightarrow \infty$, the S-RB tradeoff 
becomes a smooth curve. 
Let $C_1$, $C_2$ be RG codes over $\mathbb{F}_q$ having respective parameter sets $(n,k,d, (\alpha_1, \beta_1), B_1)$ and \linebreak $(n,k,d, (\alpha_2, \beta_2), B_2)$.  Consider a codeword array $\underline{c}$ obtained by vertically stacking  $M_1$ codeword arrays of $C_1$ and $M_2$ codeword arrays of $C_2$. The code $C$ comprising of all such arrays is said to be the {\em space-shared code} of $C_1$ and $C_2$. Then $C$ is also an RG code with parameter set $(n,k,d, (M_1\alpha_1+M_2\alpha_2, M_1\beta_1+M_2\beta_2), M_1B_1+M_2B_2)$. The notion of space-sharing clearly extends to multiple codes.


\bthm For $(n,k,d)=(4,3,3)$, the achievable region ${\cal R}$ is given by
\bea
{\cal R} & = & \left\{ (\bar{\alpha}, \bar{\beta}) \middle\vert 3\bar{\alpha} \ \geq \ 1, \ 2\bar{\alpha} + \bar{\beta}  \ \geq \ 1, 6\bar{\beta} \ \geq \ 1, 4\bar{\alpha} + 3\bar{\beta} \ \geq \ 1  \right\}. \ \ \ \ \ \ \ \ \ \ \ \label{eq:433bound}  
\eea
\ethm 
\bpf Of the four inequalities listed, the first $3$ follow the entropy constraints listed in \eqref{eq:ER_cutset} above.  The last inequality $4\bar{\alpha} + 3\bar{\beta} \ \geq \ 1$ does not follow from \eqref{eq:ER_cutset}, and was found in \cite{Tia} using an ITIP.  It remains to construct a code that operate on points on the $(\bar{\alpha}, \bar{\beta})$-plane, satisfying the inequalities with equality. A $[4,3]$ single parity-check code serves as an MSR code $C_1$ for $(4,3,3)$. A $(4,3,3)$ MBR code $C_2$ can be constructed using the polygonal construction described in Sec.~\ref{sec:mbr}. A hand-crafted code $C_3$ operating at the interior point of deflection (see Fig.~\ref{fig:433}) is given in \cite{Tia}. Every point on the lines determined by equality in \eqref{eq:433bound} is achieved by a code obtained by space-sharing among $C_1, C_2$ and $C_3$.
\epf
\subsection{Layered Codes for Interior Points}
\begin{figure}[ht!]
	\centering 
	\begin{tabular}{|c|c|c|c|c|} 
		\hline 
		\textcolor{blue}{1} & \textcolor{blue}{2} & \textcolor{blue}{3} & \textcolor{blue}{4} &   \textcolor{blue}{5} \\
		\hline 
		$c_{11}$ & $c_{12}$ & $c_{13}$  & & \\
		\hline 
		&  $c_{21}$ & $c_{22}$ & $c_{23}$ &  \\
		\hline 
		& & & & \\
		\hline 
		$c_{m1}$ &  & $c_{m2}$ & $c_{m3}$  & \\
		\hline 
		\vdots  & \vdots & \vdots & \vdots & \vdots \\
		\hline 
	\end{tabular}
	\caption{An $(n=5,k=4,d=4)$ canonical layered code. \label{fig:layered}}  
\end{figure}
A simple code-construction technique based on the layering (see Fig.~\ref{fig:layered} for an example) of MDS codes turns out to provide codes that perform well with respect to file size in the interior region of the S-RB tradeoff.   Let ${\cal C}_{\text{\tiny MDS}}$ be an MDS code having parameters $[w+\gamma,w,\gamma+1]$.  Let $n$ be such that $w+\gamma\leq n$ and $L={n \choose w+\gamma}$. Let $\{S_i \subset [n] \mid i =1,2,\cdots L\}$ denote an ordering of the collection of all possible $(w+\gamma)$ subsets of $[n]$.  Let $\underline{u}_i \in \mathbb{F}_q^w$, $i = 1,2,\cdots L$ be $L$ message vectors, not necessarily distinct, and $\underline{c}_i$ be the codeword in ${\cal C}_{\text{\tiny MDS}}$ associated with $\underline{u}_i$. We create an $(L \times n)$ array in which we place the symbols of codeword $\underline{c}_i$ in the location specified by subset $S_i$.  It turns out that this array represents an array code which possesses the data collection property of an RG code, but not the repair property.  By replicating the array a certain number $V$ of times, it turns out that one obtains a regenerating code with parameters $(n, k = n-\gamma, d=k, B_0=LVw)$, operating between the MSR and MBR points. Further details can be found in \cite{TiaSasAggVaiKum}.  We will refer to this code as the canonical layered code ${\cal C}_{\text{can}}$. The canonical layered-code construction has been extended to construct codes with $k < d$ by making use of an outer code designed using linearized polynomials.  An alternate generalization of the canonical code to the case of $k < d$ involved adding additional layers consisting of carefully designed parity symbols. Such an approach leads to the {\em improved layered codes} in \cite{SenSasKum_ita}, that turn out to be optimal for the set of parameters  $(n,k=3,d=n-1)$.

\subsection{ER Tradeoff Strictly Away from FR Tradeoff for all $(n,k,d)$}

In \cite{SasSenKum_isit}, it was shown that the ER tradeoff cannot approach the FR tradeoff even when $B \rightarrow \infty$ for any value of $(n,k,d)$. This was established by deriving a positive lower bound $0 < \delta < \beta$ on the gap between the ER and FR tradeoffs. 
\bthm \label{thm:away} The ER tradeoff between $\bar{\alpha}$ and $\bar{\beta}$ for any exact-repair regenerating code, with $k \geq 3$ is strictly separated from the FR tradeoff, apart from the MSR and MBR endpoints as well as the region surrounding the MSR point appearing in \eqref{eq:exception}. 
%
%
\ethm
The proof the theorem involves identifying contradicting bounds on the entropy of various trapezoidal-shaped subsets within the repair matrix. Subsequent papers \cite{Duu1},\cite{Duu2} derive better bounds, thereby improving the gap $\delta$ to go beyond $\beta$. In \cite{MohTan_isit}, the authors adopt a different approach by first providing three different expression for the entropy $B$ of the data file  involving mutual information between various repair-data variables, and taking a linear combination of these expressions that leads to a significantly tighter bound on $B$:
\bea
\label{eq:soh_final}B &\le & \min_{0 \leq p \leq  k} \frac{(3k-2p)\alpha + \frac{p(2(d-k)+p+1)\beta}{2} + (d-k+1)\min\{\alpha, p\beta\} }{3}.
\eea
The authors in \cite{SasPraKriVajSenKum} improve upon the result in \eqref{eq:soh_final} using repair-matrix techniques, in combination with the bound in Thm.~\ref{thm:away}, leading to the best-known outer bound on the ER tradeoff. For the case of $(n,k=3,d=n-1)$, the outer bound is achieved by the improved layered codes, thus characterizing the ER tradeoff. The bound also characterizes certain interior points when $k=4$ \cite{SasSenKum_isit}. 

\subsection{Determinant Codes for Interior Points}

The construction given in \cite{ElyMoh} has parameters $\alpha = {k \choose m}$, $\beta = {k-1 \choose m-1}$ and file size $B = m{k+1 \choose m+1}$, where $m \in \{1, 2, \ldots, k\}$ is an auxiliary parameter. The message symbols are first precoded to obtain $k {k \choose m}$ symbols, and these are then arranged in a data matrix $M$ of size $(k \times \alpha)$ in a particular manner. The codeword array is then obtained as in the case of the Product-Matrix framework introduced in \cite{RasShaKum_pm}, by setting $C_{n \times \alpha} = \psi_{n \times k} M_{k \times \alpha}$, where $\psi_{n \times k}$ is a Vandermonde matrix. The data collection and repair properties of the code are proved by making use of the Laplace expansion of determinants, and the codes for this reason, are called {\em determinant codes}. The codes achieve an outer bound discussed in the next subsection, and thus form an optimal family of codes for parameters $(n,k,k)$.  An extension of the construction to include the parameter set $(n,k,d=k+1)$ can be found in \cite{ElyMoh_Allerton}.


%

\subsection{ER Tradeoff under Linear Setting}

In \cite{PraKri}, \cite{Duu2}, \cite{ElyMohTan} the authors characterize the ER tradeoff for $(n,k=n-1,d=n-1)$ for the subclass of linear codes, using an approach that involves lower bounding the rank of the parity-check matrix of an RG code. The upper bound in \cite{PraKri} holds in general for any $(n,k,d=k)$.
\begin{thm} \label{thm:rankH_new_bound_k_eq_d} Consider an ER linear regenerating code with parameters $\{(n\geq 4, k, d), (\alpha, \beta)\}$ and file size $B = n\alpha - \rho$. Then
	\begin{eqnarray} \label{eq:bound_rank_H_gen}
	\rho & \geq & \left \{ \begin{array}{cl} \left \lceil \frac{2rn\alpha - n(n-1)\beta}{r^2+r}\right \rceil, & \frac{d\beta}{r} \leq \alpha \leq \frac{d\beta}{r-1}, \ \ 2 \leq r \leq n - 2 \\ 
	2\alpha - \beta, & \frac{d\beta}{n-1} \leq \alpha \leq \frac{d\beta}{n-2} \end{array} \right. .
	\end{eqnarray}
\end{thm}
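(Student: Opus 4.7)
The plan is to work with the parity-check matrix of the code and lower-bound its rank. Since the code is linear of dimension $B = n\alpha - \rho$, its parity-check matrix $H$ has rank $\rho$, and I would decompose $H = [H_1 \mid H_2 \mid \cdots \mid H_n]$, where $H_i \in \mathbb{F}_q^{\rho \times \alpha}$ is the block of columns corresponding to the symbols stored at node $i$. The objective is to use the exact-repair constraints to extract a sufficiently large collection of linearly independent vectors in $\mathrm{rowspan}(H)$.

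The first step is to recast exact repair as the existence of structured parity checks. Fix a failed node $f$ and a helper set $D \subseteq [n] \setminus \{f\}$ with $|D| = d = k$. Linearity together with exact repair furnishes matrices $T_{i,f} \in \mathbb{F}_q^{\beta \times \alpha}$ such that $c_f$ is recoverable as a linear combination of $\{T_{i,f}\,c_i\}_{i \in D}$. Equivalently, each coordinate of $c_f$ is the target of a parity check lying in $\mathrm{rowspan}(H)$ whose restriction to node $i \ne f$ lies in the $\beta$-dimensional subspace $V_{i,f} := \mathrm{rowspan}(T_{i,f}) \subseteq \mathbb{F}_q^\alpha$. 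This produces, for each $f$, a family of $\alpha$ linearly independent \emph{repair parity checks} whose supports are concentrated on node $f$ and on the helper subspaces $\{V_{i,f}\}_{i \in D}$.

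Having aggregated $n\alpha$ such repair parity checks over all failed nodes, I would lower-bound the rank of their span inside $\mathrm{rowspan}(H)$ by a dimension-count of the form
\begin{equation*}
\rho \;\ge\; n\alpha \;-\; \sum_{f \ne g}\,\mathrm{overlap}(f,g),
\end{equation*}
where the overlap between the repair families for distinct $f$ and $g$ is controlled by the fact that each relevant restriction lives in a $\beta$-dimensional subspace. The role of the parameter $r$ is that in the regime $d\beta/r \le \alpha \le d\beta/(r-1)$, any union of fewer than $r$ helper subspaces leaves a nonzero ``hole'' in $\mathbb{F}_q^\alpha$, which in turn forces pairwise sums $V_{i,f} + V_{i,g}$ to have dimension close to $2\beta$; feeding this into the counting yields $\rho\,(r^2+r) \ge 2rn\alpha - n(n-1)\beta$. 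The degenerate case $r = n - 1$ reduces to a two-node analysis that produces $\rho \ge 2\alpha - \beta$ directly.

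I expect the main obstacle to be tight control over the pairwise overlaps, essentially a subspace-geometry lemma of the flavor: if $n-1$ subspaces $V_{i,f} \subseteq \mathbb{F}_q^\alpha$ of dimension $\beta$ jointly span $\mathbb{F}_q^\alpha$ and $\alpha$ lies in the stated window, then $\dim(V_{i,f} + V_{i,g})$ is forced to be close to $2\beta$ for most pairs $(f,g)$, uniformly in the adversarial choice of the $T_{i,f}$. Establishing this carefully is the technical heart of the proof, and is exactly where the case split between $2 \le r \le n-2$ and $r = n-1$ in the statement becomes unavoidable.
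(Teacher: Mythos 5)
The survey itself does not prove this theorem; it only states it with a citation to the work of Prakash and Krishnan and remarks that the method there ``involves lower bounding the rank of the parity-check matrix of an RG code.'' Your overall framing --- decomposing $H$ into thick columns $H_i$, extracting from the exact-repair condition a family of $\alpha$ repair parity checks per failed node whose restriction to a helper node $i$ lies in the $\beta$-dimensional subspace $V_{i,f}$, and then lower-bounding the dimension of the sum of these repair subspaces --- is consistent with that description, and your two-node analysis for the $r=n-1$ case is essentially sound: the repair subspace $S_f$ projects bijectively onto node $f$'s coordinates (its node-$f$ block is the identity), so $S_i\cap S_j$ injects into $V_{i,j}$, giving $\dim(S_i\cap S_j)\le\beta$ and hence $\rho\ge 2\alpha-\beta$.

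The gap is in the general case $2\le r\le n-2$, and it is twofold. First, the counting inequality you write, $\rho \ge n\alpha - \sum_{f\ne g}\mathrm{overlap}(f,g)$, is not a valid statement about subspace dimensions once more than two subspaces are involved: three distinct lines through the origin in $\mathbb{F}_q^2$ span a $2$-dimensional space, yet the sum of dimensions minus all pairwise intersections equals $3$. Any correct accumulation must be sequential, bounding $\dim\bigl(S_f\cap\sum_{g<f}S_g\bigr)$ at each step, and this intersection with a \emph{sum} of earlier subspaces is precisely what cannot be controlled by pairwise data alone. Second, the ``subspace-geometry lemma'' that is supposed to translate the window $d\beta/r\le\alpha\le d\beta/(r-1)$ into the claimed overlap control --- and thence into $\rho(r^2+r)\ge 2rn\alpha - n(n-1)\beta$ --- is asserted rather than derived; you yourself identify it as the technical heart and leave it open. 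As it stands the proposal is a plausible plan whose load-bearing step is both missing and, in the form stated, built on a false dimension inequality, so it does not constitute a proof.
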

The corresponding bound on file size $B$ coincides with the achievable region of layered codes when $k=d=(n-1)$.  Determinant codes achieve the above bound in general for $(n,k,k)$, thus characterizing the linear ER tradeoff in this case.
\begin{open} Characterization of ER tradeoff for general $(n,k,d)$ in both the linear and non-linear settings.
\end{open}

\section{Variations on the Theme of Regenerating Codes}\label{sec:variations}  

\subsection{Cooperative repair\label{subsec:coop_rep}}
{\it This subsection was contributed at the request of the authors, by Kenneth Shum.} 
The potential benefit of allowing data exchange among the nodes being regenerated while repairing multiple node failures simultaneously, was first investigated by Hu et al. in \cite{HuXuWanZhaLi}.  The cooperative-repair process consists of two phases. In the first phase, each of the new nodes selects a set of $d$ surviving nodes, and downloads a total of $d\beta_1$ symbols from them.  In the second phase, a new node downloads $\beta_2$ symbols from each of the other new nodes. If  $t$ new nodes are re-built at the same time, the repair bandwidth per new node is
$d\beta_1 + (t-1) \beta_2. $  As in the non-cooperative case, there is a tradeoff between the amount of data stored in a node and the repair bandwidth. In the following, we denote the repair bandwidth per new node by $\gamma$. The minimum-storage cooperative regenerating (MSCR) point and minimum-bandwidth cooperative regenerating (MBCR) point are determined in~\cite{KerScoStr} and \cite{ShuHu}, and are given by 
\begin{align*}
	(\alpha_{\text{MSCR}}, \gamma_{\text{MSCR}}) = \big(\frac{B}{k}, \frac{B(d+t-1)}{k(d+t-k)} \big),\ \ \ \ (\alpha_{\text{MBCR}},\gamma_{\text{MBCR}}) = \frac{B(2d+t-1)}{k(2d+t-k)} \ (1,1),
\end{align*}
where $t$ is the number of nodes to be repaired simultaneously. When $t=1$, they reduce to the corresponding operative points for single-node repair. The full FR tradeoff curve between storage and repair bandwidth per node is derived in~\cite{ShuHu}. 
\begin{table}[ht!]
	\caption{Parameters of explicit constructions of cooperative regenerating codes.}
	\label{table:CRC}
	\begin{center}
		\begin{tabular}{|c|c|c|} \hline
			Type & Code Parameters & Ref. \\ \hline \hline
			MBCR & $n, k$, $ k \le d \le (n-t)$, $t\geq 1$ & \cite{WanZhaCoopRegen} \\
			\hline
			MSCR & $n = d+2$, $k=t=2$ & \cite{Sco12} \\ \hline
			MSCR & $n = 2k$, $d=n-2$, $k\geq 2$, $t=2$ & \cite{ShuChe} \\ \hline
			MSCR & $n = 2k$, $d=n-t$, $k\geq 2$, $k\geq t\geq 2$ & \cite{ShuChe} \\
			& (repair of systematic nodes only)&\\ \hline
			MSCR & $n, k$, $ k \le d \le (n-t)$, $t\geq 1$ & \cite{YeBarg_Coop} \\ \hline
		\end{tabular}
	\end{center}
\end{table}
In the case of exact repair, the explicit construction of cooperative regenerating codes for all parameters at the minimum-bandwidth point was first presented in~\cite{WanZhaCoopRegen}. The construction  in~\cite{WanZhaCoopRegen} is presented in an alternate way in~\cite{ShuChe}. Constructions for minimum-storage cooperative codes are relatively rare (see e.g. \cite{Sco12} and \cite{ShuChe}). Table \ref{table:CRC} summarizes the existing constructions of MSCR and MBCR codes. We note that the MSCR codes in \cite{ShuChe} share the same encoding method as in \cite{SuhRam} and \cite{ShaRasKumRam_ia}. It is shown in \cite{ShuChe} that with the MSR codes in \cite{SuhRam} and \cite{ShaRasKumRam_ia}, we can repair multiple systematic nodes with repair bandwidth achieving the MSCR point. In \cite{YeBarg_Coop}, the authors present constructions for any $(n,k, k \le d \le n-t, t)$ MSCR codes.

The cooperative repair model was extended to partial cooperative repair in~\cite{LiuOgg_a}. The first phase of repair is the same as described above. Each of the $t$ new nodes contacts $d$ other nodes and download a total of $\beta_1$ data packets. In the second phase, a new node exchanges $\beta_2$ data packets with $t-s$ other new nodes, where $s$ is a system parameter between 1 and $t$. When $s=t$, it is the original single-loss repair model. When $s=1$, it reduces to the cooperative repair model. The minimum-storage and minimum bandwidth point are derived in~\cite{LiuOgg_a}. With partial collaboration, the minimum-storage and minimum-bandwidth operating points are given respectively by  
\bean
(\alpha,\gamma) = \big(\frac{B}{k}, \frac{B(d+t-s)}{k(d-k+t-s+1)}\big) & \text{ and } & (\alpha,\gamma) =  \frac{B(2d+t-s)}{k(2d-k+t-s+1)} \ (1,1).
\eean
Two explicit codes for partial collaborative repair are presented in~\cite{LiuOgg_b}. The code construction in~\cite{ShuChe} for MBCR codes can be extended to achieve all minimum-bandwidth points with partial collaboration. The security of cooperative regenerating codes is investigated in \cite{KoyRawVish,HuaParXia}.

\subsection{MDS Codes with Repair Capability} 

We discuss in this subsection, vector MDS codes that are not MSR, which nevertheless offer some savings in repair bandwidth in comparison to the conventional repair of RS codes while keeping the sub-packetization level $\alpha$ small. The piggybacking framework introduced in \cite{RasShaRam_Piggyback}, was one of the first such efforts.  In \cite{GurRaw_MDS}, the authors introduce codes that offer a choice of sub-packetization levels, namely, $\alpha = r^p$ for $1 \le p < \left\lceil\frac{n}{r}\right\rceil$. The corresponding repair download from each helper node is given by $\beta = (1+\frac{1}{p})r^{p-1}$. When $p=\left\lceil\frac{n}{r}\right\rceil$ these codes coincide with the construction in \cite{SasAgaKum}. A similar approach was followed by the authors of \cite{KralevskaGligoroski} where they provide constructions for MDS codes for any given $1 \le \alpha \le r^{\left\lceil\frac{k}{r}\right\rceil}$. However, the constructions here are restricted to systematic node repair and the bandwidth needed from each helper node is not uniform. 
These constructions are motivated by the systematic MSR code with $\alpha=r^{\left\lceil\frac{k}{r}\right\rceil}$ appearing in \cite{AgaSasKum}. In more recent work \cite{RawTamGur_epsilonMSR}, the $\epsilon$-MSR framework was introduced to construct MDS codes that somewhat surprisingly, have sub-packetization $\alpha$ that is {\em logarithmic} in $n$ for a modest increase in repair bandwidth by a multiplicative factor $(1+\epsilon)$.\\

{\bf Piggybacking framework:}  The piggybacking framework \cite{RasShaRam_Piggyback} begins with a collection of $\alpha$ codewords drawn form an MDS code and proceeds to modify the code symbols as described below.
Let $\mathcal{C}$ be an MDS code and let $(f_1(\textbf{u}),f_2(\textbf{u}),\dots,f_n(\textbf{u}))$ represent the codeword  corresponding to message $\textbf{u}$. Next, consider codewords of $\mathcal{C}$ corresponding to $\alpha$ distinct messages, $\textbf{u}_1,\dots,\textbf{u}_\alpha$.  The $\alpha$ code symbols $f_j(\textbf{u}_i)$, $i=1,2,\cdots,\alpha$ are stored on node $j$. 
%
We first modify the code by adding a function $g_{ij}(\textbf{u}_1,\dots,\textbf{u}_{i-1})$ to the $j$-th symbol of $i$-th codeword $f_j(u_i)$, for all $i \in \{2,\dots,\alpha\}, j \in \{1,\dots,n\}$.  The values so added are termed as piggybacks. This modification does not affect our ability to decode the code, if the codewords are decoded in sequence.  Applying an invertible linear transform $T_i$ to the $\alpha$ code symbols in the $i$th node, similarly does not affect our ability to decode the $\alpha$ codewords, nor a node's ability to serve as a helper node.  
By carefully choosing the piggybacking functions and the set $T_i$ of invertible linear transformations it possible to reduce the repair bandwidth for the collective repair of the $\alpha$ MDS codewords in comparison with the repair bandwidth needed for the conventional repair of $\alpha$ MDS codewords. 
Three families of piggybacking-based MDS codes with reduced repair bandwidth and disk read are constructed in \cite{RasShaRam_Piggyback}. The piggybacking framework typically provides savings between $25\%$ to $50\%$ depending up on the parameters and choice of piggybacking functions. For example, Fig.~\ref{PB} shows modification of a $[4,2]$ MDS code with sub-packetization level $2$ in such a way that the systematic nodes can be repaired by reading $3$ symbols (instead of the $4$ symbols required for MDS decoding), resulting in a $25\%$ repair bandwidth and disk read saving.
\begin{figure}[!ht]
	\centering
	\includegraphics[width=6.5in]{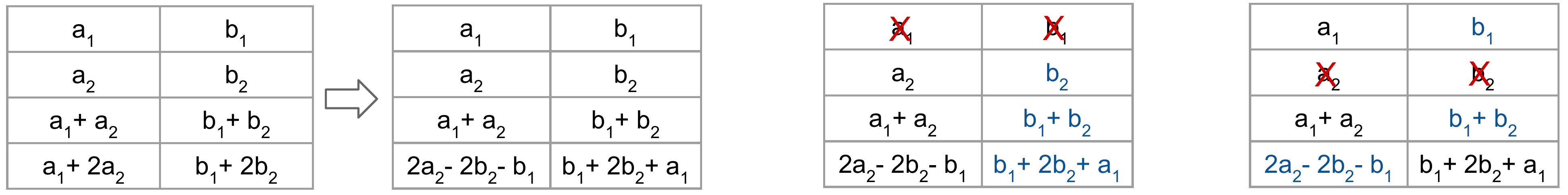}
	\caption{Here two codewords of a [4,2] MDS code are piggybacked. The first systematic node can be repaired by reading $ b_2$, $b_1+b_2$ and $b_1+2b_2+a_1$, whereas the second systematic node repair requires $b_1$, $b_1+b_2$ and $2a_2-2b_2-b_1$}. 
	\label{PB} 
\end{figure}


{\bf $\epsilon$-MSR framework:} The motivation for constructing $\epsilon$-MSR codes \cite{RawTamGur_epsilonMSR} is the larger sub-packetization level of an MSR code, which could possibly prove to be a hurdle in its practical implementation. 
The authors of \cite{RawTamGur_epsilonMSR} provide a generic way  to transform an MSR code into an $\epsilon$-MSR code.
\begin{definition}
	An MDS code $\mathcal{C}$ with sub-packetization $\alpha$ over a finite field $\mathbb{B}$  is said to be an $(n,k,d=n-1,\alpha)_\mathbb{B}$ $\epsilon-$MSR code, $\epsilon > 0$, if for every $i \in [n]$ there exists a linear repair scheme for the code symbol $c_i$ which downloads $\beta_{ij} \le (1+\epsilon)\frac{\ell}{n-k}$ symbols over $\mathbb{B}$ from the $(n-1)$ nodes storing code symbols $c_j$, for $j \in [n]\setminus \{i\}$.
\end{definition}
The construction of an $\epsilon$-MSR code presented in \cite{RawTamGur_epsilonMSR} combines a short block-length MSR code with a code having large minimum distance.
Let $\mathcal{C}_I$ be an $(n=k+r,k,d=n-1,\alpha)_{\mathbb{B}}$ MSR code having parity check matrix,
\begin{equation*}
	H=
	\begin{bmatrix}
		H_{1,1} & H_{1,2} &\dots& H_{1,n} \\
		\vdots & \vdots & \dots & \vdots \\
		H_{r,1} & H_{r,2} &\dots& H_{r,n}
	\end{bmatrix},
\end{equation*}
where the sub-matrices $H_{i,j}$ are of size $(\alpha \times \alpha)$. Next, let $\mathcal{C}_{II}$ be a (not necessarily linear) code having block length $N$, size $M$ and minimum distance $D=\delta N$ over an alphabet $\mathbb{G}$ of size $|\mathbb{G}|\le n$. Let us associate with every codeword $c=(c_1,\dots,c_N)$ of $\mathcal{C}_{II}$, an $(rN\alpha \times N\alpha)$ matrix:
\begin{equation*}
	\mathcal{H}_c=
	\begin{bmatrix}
		u_{1,c}\ \text{Diag}(H_{1,c_1},\dots,H_{1,c_N})\\
		\vdots\\
		u_{r,c}\ \text{Diag}(H_{r,c_1},\dots,H_{r,c_N})\\
	\end{bmatrix},
\end{equation*}
where the $\{u_{i,c}\}$ are non-zero coefficients, drawn from $\mathbb{B}$. Next, using the fact that the number of codewords in $\mathcal{C}_{II}$ is $M$, let us form an $(rN\alpha \times MN\alpha)$ matrix $\mathcal{H}$ with each of the $M$ `thick' columns $\mathcal{H}_c$ corresponding to a different codeword $c \in  \mathcal{C}_{II}$. It can be shown that the code having $\mathcal{H}$ as its parity-check matrix is an $(M,M-r,d=M-1,N\alpha)_{\mathbb{B}}$  $\epsilon$-MSR code, where $\epsilon=(r-1)(1-\delta)$. 
Ensuring this requires judicious selection of the base MSR code $\mathcal{C}_{I}$ as well as the non-zero scalars $\{u_{i,c}\}$. An additional requirement is that for a given $\epsilon>0$, the code $\mathcal{C}_{II}$ should be chosen such that the parameter $\delta$ satisfies $\delta \ge 1-\frac{\epsilon}{r-1}$. The $\epsilon$-MSR codes constructed using this approach can have sub-packetization level scaling logarithmically in the block length. \\
In \cite{RawTamGur_epsilonMSR}, $\epsilon$-MSR codes are constructed by picking the non-optimal-access MSR constructions in \cite{YeBar_1} as $\mathcal{C}_I$. For instance, using $\mathcal{C}_I$ with parameters $(n=3,k=1,d=2,\alpha=2^{3}=8)$ and $\mathcal{C}_{II}$ with parameters $N=20, M=27$ and $D=13$ over $\mathbb{F}_3$ one can construct a $(M=27,M-r=25,M-1=26,N\alpha=160)$ $\epsilon$-MSR code. Note that the MSR code $\mathcal{C}_I$ with parameters $(n=27,k=25,d=26)$ requires a sub-packetization level of $2^{27}$, whereas this $\epsilon$-MSR code has sub-packetization level of 160 $(\ll 2^{27})$ and repair bandwidth is within $1.35$ times that of the MSR code.  
\subsection{Fractional Repetition Codes}
\begin{figure}[ht!]
	\centering
	\includegraphics[width=1.8in]{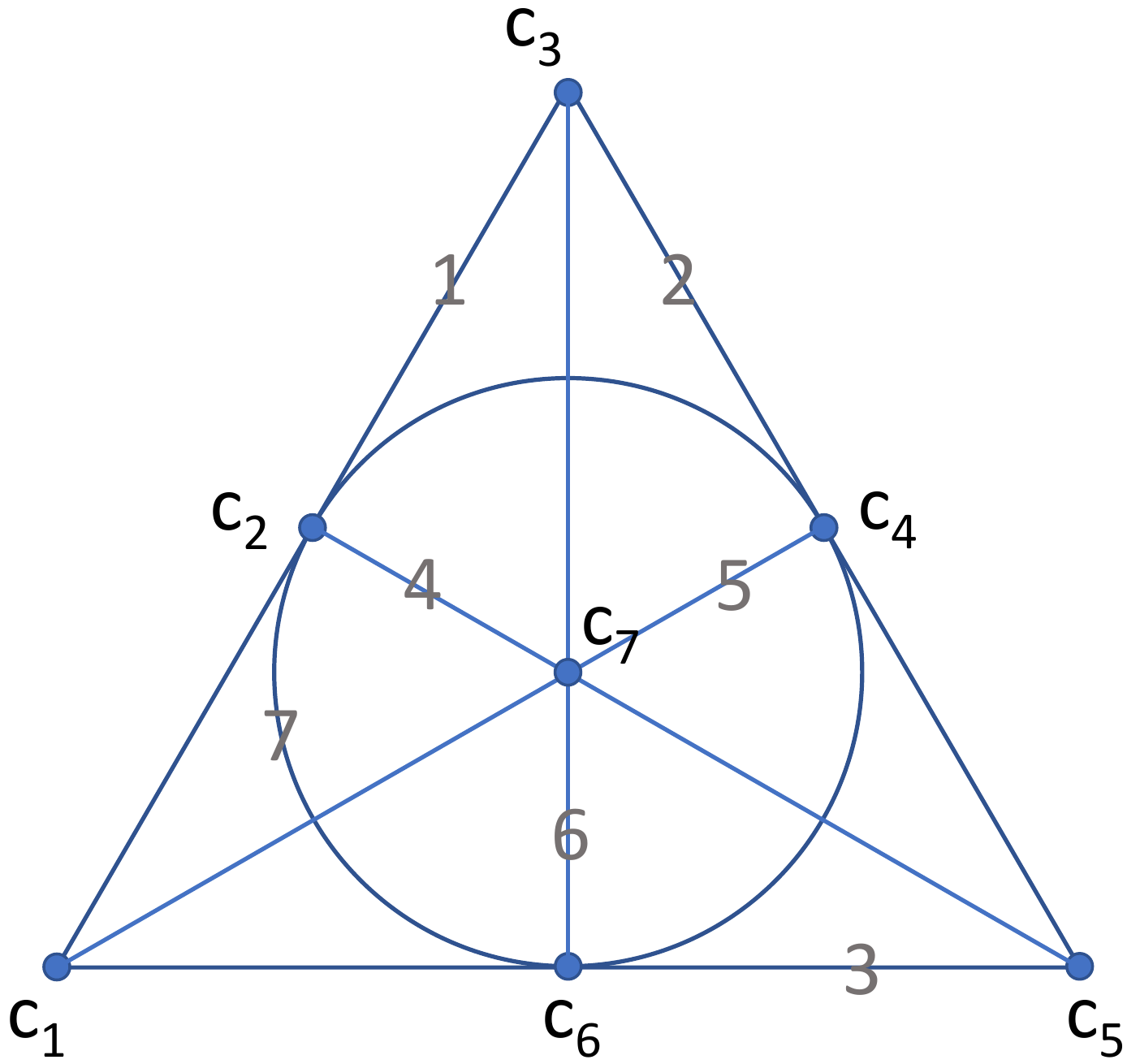}
	\caption{Each of the seven lines in the Fano plane indicates a node and points within a line denote the code symbols stored in the corresponding node. For instance, $N_1=\{c_1,c_2,c_3\}$.   }\label{fig:fano_plane}  
\end{figure}
Fractional Repair (FR) Codes, introduced in \cite{RouRamFR10}, are regarded as codes that generalize the RBT MBR construction in \cite{RasShaKumRam_allerton09}.   An FR code is associated with the parameter set $\{n,k,\alpha,\rho\}$, where $n$ is the number of nodes and $k$ is the smallest number such that one can retrieve the entire data file from connecting to any set of that many nodes. Let $K$ be the file size of the FR code. To encode and store data, an FR code begins by encoding a collection $\{u_1,\cdots, u_K\}$ of message symbols drawn from a finite field $\mathbb{F}_q$ using a scalar $[N,K]$ MDS code ${\cal A}$, also referred to as the DRESS code in \cite{PawNoorRouRamDressCode11}.  Let $(v_1,v_2,\cdots,v_N)$ denote the symbols of a codeword in ${\cal A}$.  Each of the $N$ scalar code symbols is replicated $\rho$ times and the resultant $\rho N$ symbols are stored across the $n$ nodes in such a way that there are $\alpha$  symbols per node and each code symbol is present in precisely $\rho$ distinct nodes. Combinatorial techniques such as $t$-designs are used to make such an assignment possible. For this to happen, we must have that $n\alpha= N\rho$. 
In order to be able to recover the entire data file by connecting to any $k$ nodes we must clearly have that: $R_\mathcal{C}(k)\triangleq\min_{{J\subseteq [n]}:|J|=k}|\cup_{j\in J} N_j|  \geq   K$,
where $N_j$ indicates the set of $\alpha$ code symbols stored in $j^\text{th}$ node, $j\in[n]$. Note that $R_\mathcal{C}(k)$ is defined with respect to a given collection $\{N_j\}_{j=1}^n$. Let $C_{FR}(n,k,\alpha,\rho)$ denote the maximum  $R_\mathcal{C}(k)$ possible across all possibilities of $\{N_j\}_{j=1}^n$, which conform to the parameters $n,\alpha$ and $\rho$. Hence an FR code is said to be {\it $k$-optimal} \cite{SilbEtzOptimalFR15}, if it satisfies: $
K=C_{FR}(n,k,\alpha,\rho)$.

In contrast to an MBR code, an FR code requires the existence of just a single set of $d=\alpha$ helper nodes to perform RBT. However it follows naturally from the $\rho$-replication of code symbols that such a set of $d$ helper nodes is available, even in the presence of $(\rho-1)$ node failures. 

\begin{eg}[\cite{RouRamFR10}]
	Consider an FR code $\mathcal{C}$ with parameters $n=7, k=3, d=3, \rho=3$. The code is described using the Fano plane as shown in Figure \ref{fig:fano_plane}. Here $R_\mathcal{C}(k)=6$. By choosing the outer MDS code to be the $[7,6]$ single parity check code, data collection property follows. As each symbol is shared by three lines, $\rho=3$ and hence $\mathcal{C}$ permits RBT up to $2$ node failures.
\end{eg}

The following bound on the maximum rate of an FR code with parameters $(n,k,\alpha,\rho)$, is derived in \cite{RouRamFR10}.	
\bean
C_{FR}(n,k,\alpha,\rho) & \leq & \min \left\{ \bigg\lceil\frac{n\alpha}{\rho}\bigg(1-\frac{\binom{n-\rho}{k}}{\binom{n}{k}}\bigg)\bigg\rfloor, \ g(n,k,\alpha,\rho) \right\},\\
\text{ where } \ g(n,1,\alpha,\rho)=1,  &  \text{ and } &  g(n,k+1,\alpha,\rho)=g(n,k,\alpha,\rho)+\alpha-\bigg\lceil\frac{\rho g(n,k,\alpha,\rho)-k\alpha}{n-k}\bigg\rceil.
\eean

The paper \cite{SilbEtzOptimalFR15} considers FR codes with parameters $\alpha \geq k$, $\beta=1$ and provides several $k$-optimal constructions. The work \cite{OlmRamFRCodes16} considers FR codes with parameter $\beta\geq 1$ and also introduces a certain notion of locally recoverable FR codes where the parameter $\alpha<k$. In  \cite{KooGillScaleFR11}, the authors study FR codes that have $\alpha$ much larger than replication degree, $\rho$. In \cite{ExistenceFR}, the authors identify necessary and sufficient conditions for the existence of FR codes.

\subsection{Secure Regenerating Codes} 

Three secrecy models in the context of a regenerating code are introduced in \cite{PawarRouayRam}: (a) a passive eavesdropper model, where the eavesdropper can read the contents of any $\ell$ nodes but cannot modify the content of these nodes, (b) an active omniscient adversary model, where the adversary can read the content of $\ell=k$ nodes and can also modify the content of $b$ nodes where $2b \le k$ and (c) an active limited-knowledge adversary model, where the adversary can read the content of $\ell < k$ nodes and can modify the content of $b \le \ell$ nodes. In the case of a passive eavesdropper, the {\em secrecy capacity} ($B_s$)
is the maximum amount of information that can be stored without any information being revealed to the eavesdropper.  In the active eavesdropper model, the {\em resiliency capacity} ($B_r$) is the maximum amount of information that can be stored such that it can be reliably made available to a legitimate data collector, in spite of the tampering on the data in $b$ nodes done by the eavesdropper. In \cite{PawarRouayRam}, the following upper bound on secrecy capacity of the passive eavesdropper model was derived:
\bea
\label{eq:securerc_passive} 
B_s(\alpha, \gamma = d\beta) \le \sum\limits_{i=\ell+1}^k \min \{(d-i+1)\beta, \alpha\}.
\eea If $\alpha$ is not constrained, then the resultant {\em bandwidth-limited secrecy capacity} $B_{s,\text{BL}}$ becomes a function of $(k, d,\beta)$ alone. The value of $B_{s,\text{BL}}$ is determined \cite{PawarRouayRam} for $d=(n-1)$ by providing a bound and an optimal construction. It was also shown that the resiliency capacity satisfies $B_r(\alpha, \gamma) \le \sum\limits_{i=i_0}^k \min \{(d-i+1)\beta, \alpha\}$,
where $i_0$ is equal to $2b+1$ for omniscient case and $b+1$ for the limited knowledge case. 

In an alternate setting, Rashmi et al. in \cite{RashmiShahRamKum_errors} assume a noisy channel for transmission of data during repair and reconstruction, and introduce the notion of an $(s,t)$-resilient regenerating code that can correct up to $t$ errors and $s$ errors during both repair and reconstruction. The model is aligned with the active eavesdropper model where the eavesdropper can tamper the contents of $b$ nodes. An $(s,t)$-resilient regenerating code is shown to satisfy $B \le \sum\limits_{i=1}^{k} \min \{(d-i+1)\beta, \alpha \}$ where, $d = \Delta - 2t-s$, $k = \kappa - 2t-s$ and $\Delta, \kappa$ are the number of nodes contacted during repair and reconstruction respectively.  Constructions of MSR and MBR codes that are $(s,t)$ resilient are also provided in \cite{RashmiShahRamKum_errors}. In \cite{YeBar_1}, the authors extend this model to the repair of multiple nodes and provide MSR constructions that are resilient to $t$ errors during repair. In \cite{ShahRashKum}, the authors extend the passive eavesdropper model to the setting where out of the $\ell$ nodes accessed, the eavesdropper can read the contents of $\ell_1$ nodes and can observe the information passed on for the repair of $\ell_2 = \ell - \ell_1$ nodes. The upper bound in \eqref{eq:securerc_passive} also holds for this extended case. In the case of an MBR code, since the amount of data stored equals the amount of data received for node repair, the breakup between $\ell_1,\ell_2$ is immaterial.   

However in the case of an MSR code, $d\beta > \alpha$. In \cite{ShahRashKum}, the authors provide explicit, secure MBR, and low-rate MSR code constructions that achieve the upper bound \eqref{eq:securerc_passive} for $\ell_2=0$. The secure MSR construction from \cite{ShahRashKum} provides a lower bound to the secure file size of an MSR code: $B_s \ge  (k-\ell)(\alpha - \ell_2 \beta)$ for $\ell_2 > 0$.

The upper bound on secure MSR file size $B_s \le (k-\ell)\alpha$ given by \eqref{eq:securerc_passive} is improved in \cite{TandonAmuruClancyBuehrer,RawatKoyluSilberVish,GopaRouayCladerPoor,HuangParamXian}. In \cite{Rawat_secrecy}, Rawat established that the secrecy capacity of an MSR codes is given by $B_s = (k-\ell)(1-\frac{1}{n-k})^{\ell_2}\alpha$ by providing an MSR construction. An upper bound that matches with Rawat's construction is proved by Goparaju et.al in \cite{GopaRouayCladerPoor} under the constraint of linearity. In \cite{KadheSprinston}, secure MSR codes with smaller field sizes for all parameters were constructed. In \cite{YeShumYeung,ShaoLiuTianShen} the ER tradeoff is studied for secure regenerating codes. 




\section{Locally Recoverable Codes} \label{sec:lrc} 

The earliest-known appearance of locally recoverable codes can be found in \cite{HanMon,HuaChenLi}. A construction for a code with locality appears in \cite{OggDat}. A formal treatment of codes with locality with a bound on minimum distance (discussed below) appears in \cite{GopHuaSimYek}. The extension to the non-linear case for all-symbol and information-symbol locality appear in \cite{PapDim} and \cite{ForYek} respectively.

Let \calc\ be an $[n,k]$ linear code over $\fq$. For a subset $S \subseteq [n]$, we use $\calc |_S$ to denote the {\em restriction} of \calc\ to the coordinates in $S$.
Let $G$ be a $(k \times n)$ generator matrix for \calc\ having columns $\{\ug_i\}_{i=1}^n$, i.e.,
$G = [\ug_1, \ug_2, \cdots, \ug_n]$. 
An information set $E=\{e_1, e_2, \cdots, e_k\}$ is any subset of $[n]$ of size $k$ satisfying:
$\text{rk}(G|_E) = \text{rk}[\ug_{e_1}, \cdots, \ug_{e_k}] = k$.
An $[n,k]$ code \calc\  is said to have $(r, \delta)$ information-symbol (IS) locality if there is an information set $E = \{e_1, e_2, \cdots, e_k\}$ such that for every $e_i \in E$, there exists a subset $S_i \subseteq [n]$, with $e_i \in S_i$, with 
\bea
\label{eq:rdeltaloc}
\dim(\calc |_{S_i} ) \le r, \ d_{\min}(\calc |_{S_i} ) \ge \delta. 
\eea
$\calc$ is said to have $(r,\delta)$ all-symbol (AS) locality if for every coordinate $i \in [n]$, there exists a subset $S_i \subseteq [n]$ with $i \in S_i$, such that \eqref{eq:rdeltaloc} holds. Clearly, a code with AS locality also possesses IS locality.

\subsection{Bound on Minimum Distance} 

A major result in the theory of LR codes is the minimum distance bound derived in \cite{GopHuaSimYek}, which in the context of the theorem below, was derived for $\delta=2$. An analogous  proof for $\delta=2$ and nonlinear codes can be found in \cite{PapDim},\cite{ForYek}. The bound in \cite{GopHuaSimYek} was extended adopting the same approach as in \cite{GopHuaSimYek}, to the general case $\delta >2$ in \cite{PraKamLalKum} and appears in Theorem~\ref{thm:lrcdminbound} below.   The extension to codes over a vector alphabet can be found in \cite{SilRawKoyVis}.  

\bthm \cite{PraKamLalKum} \label{thm:lrcdminbound} 
Let \calc\ be an $[n,k]$ linear code over \fq\ having $(r,\delta)$ IS locality. Then
\bea
\label{eq:lrcdminbound}
d_{\min} \le (n-k+1) - (\left\lceil \frac{k}{r} \right\rceil - 1) (\delta -1). \label{Singleton}
\eea
\ethm
Our proof will make use of the following lemma.

\blem \label{lem:n_minus_s}
Let \calc\ be an $[n,k]$ code and let $S \subseteq [n]$ such that $\text{rk}(G|_S) \le k-1$. Then $d_{\min}(\calc) \le n - |S|$.
\elem
\bpf
Since rk$(G|_S) \le k-1$, it follows that there exists a nonzero message vector \uu\ such that $\uu^T G|_S = 0$. Let $\uc = \uu^T G$, then $0 < \text{wt} (c) \le n - |S|$ and the result follows.
\epf

\bpf (of Theorem~\ref{thm:lrcdminbound})
Let $E = \{e_1, e_2, \cdots, e_k\}$ be the information set with respect to which $\calc$ has information-symbol locality. Let the subsets $S_i \subseteq [n]$, $1 \le i \le k$, be such that $e_i \in S_i$, $\calc|_{S_i}$ is an $(r,\delta)$ code, i.e, $\dim(\calc|_{S_i}) \le r$, $d_{\min}(\calc|_{S_i}) \ge \delta$.  Let $V_i$ denote the column space of $G|_{S_i}$.  Next, over the course of several iterations, we incrementally build up a set $S$, beginning with $S = \phi$.  We use $j$ to indicate the iteration number and begin with $j=1$.  On the $j$-th iteration, $j \geq 1$, we first search for an index $i$ such that $V_i  \not \subset \text{Col}(G\mid_{S})$ ($\text{Col}(A)$ refers to the column space of $A$).  This will always be possible, as we always ensure rk$(G|_{S}) \leq k-1$. Having found such an index $i$, we next examine the rk$(G|_{S \cup S_i})$.  If rk$(G|_{S \cup S_i}) \leq k-2$, we set 
\bea \label{eq:normal_iteration} 
a_j = |S \cup S_i| - |S|, \ \ \gamma_j = \text{rk}(G|_{S\cup S_i}) - \text{rk}(G|_S), \ S = S \cup S_i, \ j=j+1, 
\eea
in order from left to right, and repeat the procedure in $(j+1)$th iteration by searching for an index $i$ such that $V_i  \not \subset \text{Col}(G\mid_{S})$.  
If at the $j$-th iteration, for any $j$, we find that
\begin{enumerate}
	\item[]  Case (i): rk$(G|_{S \cup S_i}) = k-1$, we then replace the procedure in \eqref{eq:normal_iteration} with the steps below: 
	\bean
	a_j = |S \cup S_i| - |S|, \ \ \gamma_j = \text{rk}(G|_{S\cup S_i}) - \text{rk}(G|_S), \ S = S \cup S_i, \ m=j, 
	\eean
	and terminate the program. 
	\item[] Case (ii):  rk$(G|_{S \cup S_i}) = k$. In this case, we replace the procedure in \eqref{eq:normal_iteration} by selecting a subset $T_i \subseteq S_i$ such that rk$(G|_{S \cup T_i}) = k-1$ (this can always be done), and then setting
	\bean
	a_j = |S \cup T_i| - |S|, \ \ \gamma_j = \text{rk}(G|_{S\cup T_i}) - \text{rk}(G|_S), \ S = S \cup T_i, \ m=j, 
	\eean
	and then terminating the program. 
	\een
	Thus $m$ indicates the number of iterations that took place before the program was terminated.
	Note that since for every $i$, rk$(G|_{S_i}) \le r$, we have that $\gamma_j \le r$. Let $j \geq 1$. At the $j$-th iteration, let $i$ be the index chosen such that  $V_i  \not \subset \text{Col}(G\mid_{S})$ and Let $R_i \subseteq S_i \setminus S$ be such that $|R_i|=\gamma_j-1$ and rk$(G|_{R_i})=\gamma_j-1$. Since the code having generator matrix $G|_{S_i}$ has minimum distance $\ge \delta$ and since rk$(G|_{(S \cap S_i) \cup R_i}) \leq r-1$, by Lemma \ref{lem:n_minus_s}, $\delta \leq |S_i|-|(S \cap S_i) \cup R_i|$ = $ |S_i|-|(S \cap S_i)|- | R_i| $ = $|S_i \setminus S|- (\gamma_j-1) $. It follows from this that $a_j \ge \gamma_j + (\delta -1)$.
	
	\begin{enumerate}
		\item[$\bullet$] Algorithm terminates under Case(i): Since the incremental rank is at most $r$, it follows that the number of iterations $m$ satisfies $m \ge \left\lceil \frac{k-1}{r} \right\rceil$. We thus have
		\bean
		|S| = \sum\limits_{j=1}^m a_j \ge \sum\limits_{j=1}^m \left( \gamma_j + \delta-1 \right)  = (k-1)+ (\delta-1) m \ge (k-1) + \left\lceil \frac{k-1}{r} \right\rceil (\delta-1).
		\eean
		\item[$\bullet$] Algorithm terminates under Case(ii): $\text{Arguing similarly, we have that $m \geq \lceil \frac{k}{r} \rceil$ and }$
		\bean
		|S| &=& \sum\limits_{j=1}^m a_i \ge \sum\limits_{j=1}^{m-1} (\gamma_j + \delta-1) + \gamma_m\\
		&=& (k-1)+(\delta-1) (m-1) \ge (k-1) + \left(\left\lceil \frac{k}{r} \right\rceil -1\right) (\delta-1).
		\eean
		\een
		Case (ii) leads to a smaller lower bound on $|S|$.  Hence from Lemma~\ref{lem:n_minus_s}  it follows that
		\bean
		d_{\min} 
		& \leq & (n-k+1) - \left(\left\lceil \frac{k}{r} \right\rceil -1\right) (\delta-1).
		\eean
		\epf
		We note the following: 
		\begin{enumerate}
			\item Setting $\delta=1$ (i.e., no locality constraint) in \eqref{Singleton}, one recovers the classical Singleton bound.  For this reason, the bound in \eqref{Singleton} is commonly referred to in the context of locality as the Singleton bound. 
			\item The Pyramid-Code Construction in \ref{sec:pyramid} provides a general construction of codes with IS locality that achieves the Singleton bound for all parameters $(n,k,r,\delta)$.  
			\item For many parameter sets, one can construct codes with AS locality that achieve the bound in \eqref{Singleton}, include all cases where $(r+1)|n$, see Section~\ref{sec:TB} below.   
			\item For $\delta=2$, bounds for AS locality that are tighter than the Singleton bound for IS locality appearing in \eqref{Singleton}, can be found in \cite{PraLalKum,WanZhaI,ZhaWanGe,MehArd}.  Constructions for codes achieving the tightened bound in \cite{WanZhaI} for the case of $n_1 > n_2$ where $n_1 = \lceil \frac{n}{r+1} \rceil$, $n_2=n_1(r+1)-n$ and having exponential field size can also be found there. 
			\item It is shown in \cite{TamBar_LRC} that one can construct codes with AS locality and field size of order $n$ whose minimum distance is within $1$ of the bound in \eqref{Singleton} provided $ r \nmid k$, $n \neq 1 \pmod{r+1}$.  In \cite{ErnWesHol}, it is shown that this can be achieved for any parameter set if one permits the field size to be exponential in $n$.  
		\end{enumerate}

		\subsection{Constructions} 
		
		\subsubsection{Pyramid Code Construction} \label{sec:pyramid} 
		
		The pyramid code construction technique which appeared in \cite{HuaChenLi}, allows us to construct for any given parameter set $\{n, k, r, \delta\}$ a code with $(r,\delta)$ IS locality achieving the $d_{\min}$ bound in \eqref{eq:lrcdminbound}. We sketch the construction for the case $k=2r$. The general case $k=ar$, $a>2$ or even when $r \nmid k $, follows along similar lines. The construction begins with the systematic generator matrix \gmds\ of an $[n_1,k]$ scalar MDS code $\calc_{\text{\tiny MDS}}$ having block length $n_1 = n-(\delta-1)$.  It then reorganizes the sub-matrices of \gmds\ to create the generator matrix \gpyr\ of the pyramid code: 
		\bean
		\gmds\  = \left[ \begin{array}{cccc}
			I_r & & P_1 & Q_1 \\ 
			& I_r & \underbrace{P_2}_{ (r \times (\delta-1))} & \underbrace{Q_2}_{(r \times s)}
		\end{array} \right]
		& \Rightarrow & 
		G_{\text{\tiny PYR}} = \left[\begin{array}{ccccc}
			I_r & P_1 & & & Q_1\\ & & I_r & P_2 & Q_2
		\end{array}\right].
		\eean
		where $s = n_1 - 2r - (\delta-1)$. 
		It is not hard to show that the $[n,k]$ code $\calc_{\text{\tiny PYR}}$ generated by $G_{\text{\tiny PYR}}$ has $(r, \delta)$ IS locality 
		and that $d_{\min}(\calc_{\text{\tiny PYR}}) \ \ge \ d_{\min}(\calc_{\text{\tiny  MDS}})$. 
		It follows that
		$d_{\min}(\calc_{\text{\tiny PYR}})  \ge  d_{\min}(\calc_{\text{\tiny  MDS}}) = n_1 - k+1 \ = \ (n-k+1)-(\delta-1)$, 
		and the code $\calc_{\text{PYR}}$ is thus optimal w.r.t the $d_{\min}$ bound in \eqref{eq:lrcdminbound}.
		
		\subsubsection{The Tamo-Barg Construction } \label{sec:TB}
		
		\begin{figure}[ht!]
			\centering
			\includegraphics[width=2.5in, height=1.5in]{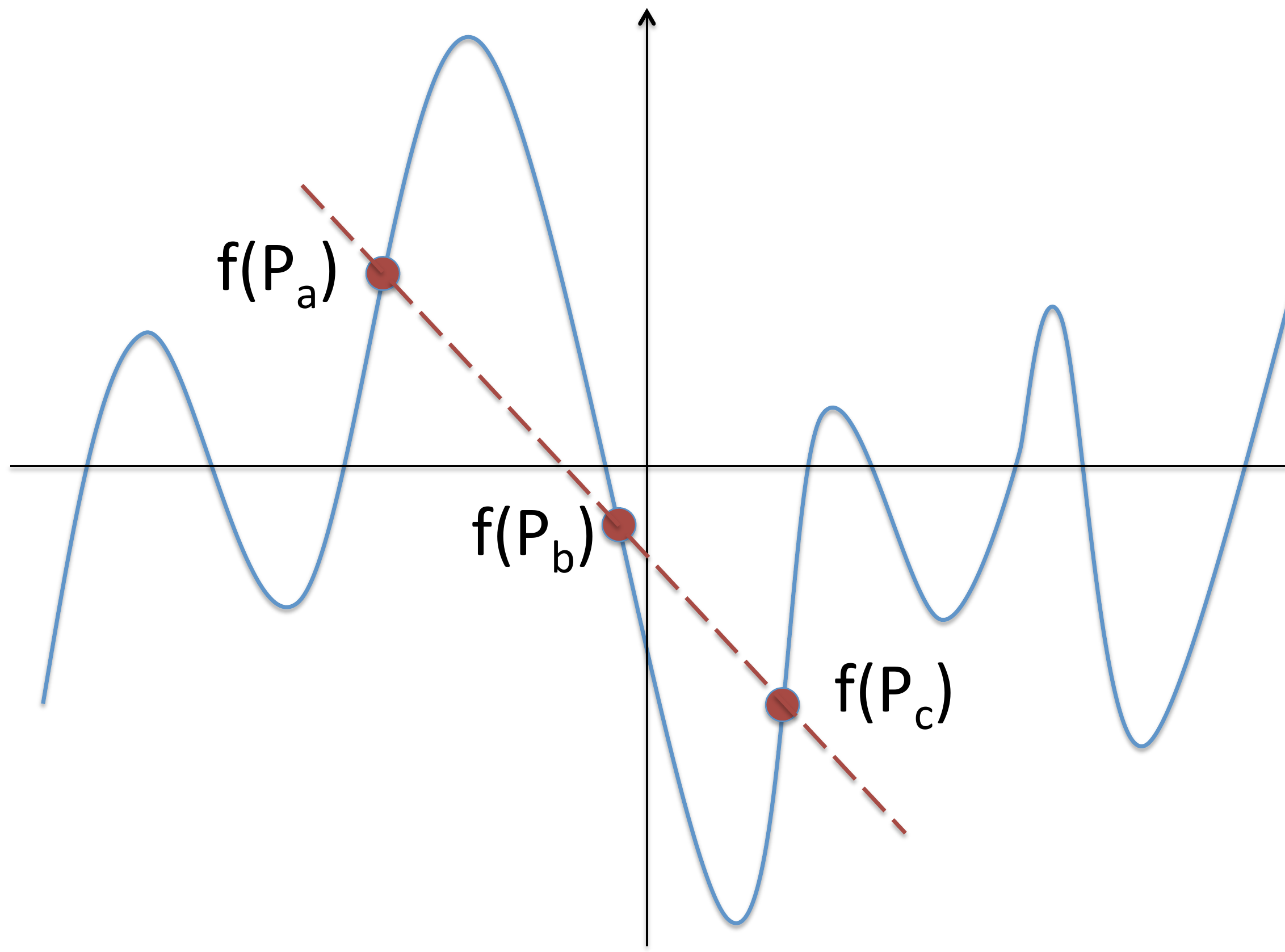}
			\caption{In the T-B construction, code symbols in the local codes of length $(r+1)$ correspond to evaluations of polynomials of degree $\leq (r-1)$. Here,  $r=2$ implying evaluation at $3$ points of a linear polynomial.  }  
		\end{figure} 
		
		The construction below by Tamo and Barg~\cite{TamBar_LRC}, provides a construction for LR codes with AS locality.  While for simplicity, we present the construction for the case $\delta=2$, the construction has a natural extension to the general case $\delta>2$ (see \cite{TamBar_LRC}).  We will refer to the construction in the sequel as the Tamo-Barg (T-B) construction. 
		\begin{thm} Let \fq\ be a finite field of size $q$, let $r \geq 2$, $n =m(r+1) \leq q$, with $ m \geq 2$ and $ 2 \leq k \leq (n-1)$.  Set $k=ar+b, 0 \leq b \leq (r-1)$.  Let $A=\{\theta_1,\theta_2,\cdots,\theta_n\} \subseteq \mathbb{F}_q$  and $A_i \subset A, 1 \leq i \leq m$, $|A_i|=(r+1), A_i \cap A_j = \phi, i \neq j$ represent a partitioning $A = \cup_{i=1}^m A_i$ of $A$.  
			Let $g(x)$ be a `good' polynomial, by which is meant, a polynomial over \fq\ that is constant on each $A_i$ and of degree $(r+1)$.  
			Let 
			\bea 
			f(x) & = & \sum_{j=0}^{a-1} \sum_{i=0}^{r-1} a_{ij}[g(x)]^jx^i \ + \   \sum_{j=a} \sum_{i=0}^{b-1} a_{ij}[g(x)]^jx^i   , \label{eq:poly_TB} 
			\eea
			where the $a_{ij} \in \fq$ are the message symbols and where the second term is vacuous for $b=0$, i.e., when $r \mid k$.  
			Consider the code ${\cal C}$ of block length $n$ and dimension $k$ where the code symbols are obtained through evaluation of the above collection of polynomials at the elements in $A$.    Then \calc\ is an $(r,\delta)$ AS locality code with $\delta=2$ and is optimal with respect to the $d_{\min}$ bound in \eqref{Singleton}. The $i$-th local code has support set $A_i$. 
		\end{thm}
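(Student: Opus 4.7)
The plan is to verify the two assertions in turn: that the restriction of $\mathcal{C}$ to each block $A_i$ is an $[r+1,r,2]$ MDS code (giving the locality), and that the Singleton-type bound \eqref{Singleton} is attained with equality. Both reduce to elementary degree arguments on the encoding polynomial $f$ in \eqref{eq:poly_TB}.

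First I would dispose of the locality claim. Fix $i \in [m]$ and let $c_i = g(\theta)$ for any $\theta \in A_i$, which is well-defined because $g$ is constant on $A_i$. Substituting $g(x) \equiv c_i$ into \eqref{eq:poly_TB} yields
\begin{equation*}
f|_{A_i}(x) \;=\; \sum_{\ell=0}^{r-1} \Bigl(\sum_{j=0}^{a-1} a_{\ell j}\, c_i^{\,j}\Bigr) x^\ell \;+\; \sum_{\ell=0}^{b-1} a_{\ell a}\, c_i^{\,a}\, x^\ell,
\end{equation*}
a polynomial of degree at most $r-1$ in $x$, since $b - 1 \le r - 1$. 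Hence $\mathcal{C}|_{A_i}$ is contained in the $[r+1, r, 2]$ Reed-Solomon code obtained by evaluating polynomials of degree less than $r$ at the $r+1$ distinct points of $A_i$, which immediately supplies the desired all-symbol $(r, \delta = 2)$ locality.

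Next, for the dimension and minimum-distance calculation I would exploit a single degree computation. The $k$ monomials $\{g(x)^j x^i\}$ indexed in \eqref{eq:poly_TB} have pairwise distinct degrees $j(r+1) + i$, the largest of which equals $(a-1)(r+1) + (r-1) = k + a - 2$ when $b = 0$ and $a(r+1) + (b-1) = k + a - 1$ when $b \ge 1$. In both cases this maximum is $k + \lceil k/r\rceil - 2$. Distinct degrees force linear independence, so the coefficient-to-polynomial map $\{a_{ij}\} \mapsto f$ is injective; since $n \le q$, evaluation of any polynomial of degree strictly less than $n$ at the $n$ distinct points of $A$ is also injective. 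A non-zero codeword therefore corresponds to a non-zero polynomial with at most $k + \lceil k/r\rceil - 2$ roots in $A$, so its Hamming weight is at least
\begin{equation*}
n - \bigl(k + \lceil k/r\rceil - 2\bigr) \;=\; (n - k + 1) - (\lceil k/r\rceil - 1),
\end{equation*}
matching the bound \eqref{Singleton}; the same argument simultaneously yields $\dim \mathcal{C} = k$. The only step requiring any real care is the parity bookkeeping that funnels the two sub-cases $b = 0$ and $b \ge 1$ into the common expression $k + \lceil k/r\rceil - 2$; everything else is a direct consequence of the fact that a low-degree polynomial evaluated at more points than its degree must either vanish identically or produce a vector of large Hamming weight.
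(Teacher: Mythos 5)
Your proposal is correct and follows essentially the same route as the paper's proof: restriction to a block $A_i$ collapses $g$ to a constant and leaves a polynomial of degree at most $r-1$, giving the $(r,\delta=2)$ all-symbol locality, and the maximum degree $k+\lceil k/r\rceil-2$ of $f$ (handled in the two sub-cases $b=0$ and $b\geq 1$) yields the matching lower bound on $d_{\min}$. Your version is, if anything, slightly more explicit about why the $k$ monomials are linearly independent (distinct degrees) and why the evaluation map is injective, but the substance is identical.
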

		\begin{proof}
			In \eqref{eq:poly_TB} it can be checked that by varying $\{a_{ij}\}$, one obtains a collection of $k$ linearly independent polynomials and since $k <n$, it follows that the code has dimension $k$.  Let $g(\theta)=\gamma_{\ell}$, all $\theta \in A_{\ell}$.  Next, let $\theta \in A_{\ell}$. Then we have
			\bean
			f(x)|_{\theta \in A_{\ell}} & = & \sum_{j=0}^{a-1} \sum_{i=0}^{r-1} a_{ij}[\gamma_{\ell}]^jx^i \ + \   \sum_{j=a} \sum_{i=0}^{b-1} a_{ij}[\gamma_{\ell}]^jx^i ,
			\eean
			which is a polynomial of degree $\leq (r-1)$ and hence the corresponding evaluation code, when restricted to $A_i$ has $d_{\min}\geq 2$, leading to the desired locality and ability to recover from a single erasure.   To determine $d_{\min}$, assume $b \geq 1$.  
			The maximum degree of a polynomial $f(x)$ then equals 
			\bean
			a(r+1)+b-1 & = & (ar+b) + (a-1) \ = \  k + \lceil \frac{k}{r} \rceil -2 .
			\eean
			When $b=0$ and hence $k=ar$, the maximum degree equals: 
			\bean
			(a-1)(r+1) + (r-1) & = & (ar) + (a-2)  \ = \  k + \lceil \frac{k}{r} \rceil -2 .
			\eean
			It follows that the code is optimal as 
			$
			d_{\min} \ \geq \  (n-k+1) -( \lceil \frac{k}{r} \rceil -1).
			$
		\end{proof} 
		
		An example of how good polynomials may be constructed is given below, corresponding 
		to the annihilator polynomial of a multiplicative subgroup $G$ of $\mathbb{F}_q^*$.    
		\begin{example}
			Let $H < G \le \fqstar$ be a chain of cyclic subgroups, where $|H|=(r+1), |G|=n$ so that $(r+1)$  $|$  $n$  $|$ $(q-1)$.  Let $n=(r+1)t$.  Let $\{A_i = \gamma_i H \mid i \in \{1,2,\cdots, t\}\}$ be the $t$ multiplicative cosets of $H$ in $G$, with $\gamma_1$ being the multiplicative identity so that $A_1=H$.  It follows that 
			\bean
			\prod_{\beta \in A_i} (x-\beta) \ = \ x^{r+1}-\gamma_i^{r+1},
			\eean
			so that $x^{r+1}$ is constant on all the cosets of $H$ in $G$ and may be selected as the good polynomial $g(x)$ i.e., $g(x)=x^{r+1}$ is one possible choice of good polynomial based on multiplicative group $H$. 
		\end{example}
		Further examples may be found in \cite{TamBar_LRC,Sihem_Mesnager,KolBarTamYad}. For constructions meeting the Singleton bound with field size of $O(n)$ and more flexible value of $r$, please see \cite{LinLimCha}. A construction of LR codes achieving the Singleton bound with field size of $O(n)$ closely resembling construction based on parity splitting for a restricted set of parameters can be found in \cite{BalKUmMax}.

		\subsection{Alphabet-Size Dependent Bounds on Code Rate} 
		
		\subsubsection{General Bound} The bound in Theorem~\ref{thm:lrcdminbound} as well as the bounds for non-linear and vector codes derived in \cite{PapDim,SilRawKoyVis} hold regardless of the size $q$ of the underlying finite field.  The theorem below takes the size $q$ of the code symbol alphabet into account and provides a tighter upper bound on the dimension of a code with locality that is valid even for nonlinear codes. The `dimension' of a nonlinear code \calc\ over an alphabet $\mathbb{Q}$ of size $q=|\mathbb{Q}|$ is defined to be the quantity $k=\log_q(| \calc |)$.
		
		\begin{thm} \label{thm:CadMaz}
			\cite{CadMaz} For any $(n,k,d)$ code  \calc\ that is an LR code with parameter $r$ over an alphabet $\mathbb{Q}$ of size $q=|\mathbb{Q}|$,
			\bea
			k \  \leq \  \min_{t \in \mathbb{Z}_{+}} [ tr+ k_{\text{opt}}^{(q)}(n-t(r+1),d) ],  \label{CadMaz}
			\eea
			where $k_{\text{opt}}^{(q)}(n-t(r+1),d)$ is the largest possible dimension of a code over $\mathbb{Q}$ having block length $(n-t(r+1))$ and minimum distance $d$.
		\end{thm}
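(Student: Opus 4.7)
The plan is to combine two classical ingredients: a packing of disjoint local repair sets to exploit the locality structure, and a pigeonhole/shortening argument to extract from $\mathcal{C}$ a shorter code of the same minimum distance. The claimed inequality then reduces to the very definition of $k_{\text{opt}}^{(q)}$ applied to that shorter code.

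First, for a given $t$ I would pick $t$ pairwise disjoint local sets $S_1,\ldots,S_t \subseteq [n]$, each of size $r+1$. A greedy procedure suffices: take any local set $S_1$; at step $i$, pick a coordinate outside $S_1 \cup \cdots \cup S_{i-1}$ and take a local set around it. I would carry the proof out cleanly under the standard assumption that $t$ disjoint local sets exist (the interesting regime), and write $T = S_1 \cup \cdots \cup S_t$, so $|T| = t(r+1)$ and $|[n]\setminus T| = n - t(r+1)$.

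Second, I would bound the image of the projection $\pi : \mathcal{C} \to \mathbb{Q}^T$. Since under $r$-locality the symbol at the repaired coordinate of each $S_i$ is a deterministic function of the other $r$ coordinates, the local ``code'' $\mathcal{C}|_{S_i}$ has at most $q^r$ values (this holds without any linearity assumption). Disjointness of the $S_i$ gives $\pi(\mathcal{C}) \subseteq \prod_{i=1}^{t} \mathcal{C}|_{S_i}$, and hence $|\pi(\mathcal{C})| \leq q^{tr}$.

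Third, by pigeonhole some fiber $\mathcal{C}_u := \pi^{-1}(u) \cap \mathcal{C}$ has size at least $|\mathcal{C}|/q^{tr} = q^{k-tr}$. Restricting $\mathcal{C}_u$ to $[n]\setminus T$ is injective (all elements of the fiber agree on $T$), producing a code $\mathcal{C}'$ of block length $n - t(r+1)$ and size at least $q^{k-tr}$. Two distinct elements of $\mathcal{C}_u$ differ in at least $d$ positions of $\mathcal{C}$, and those positions all lie in $[n]\setminus T$; hence $\mathcal{C}'$ has minimum distance at least $d$. By definition of $k_{\text{opt}}^{(q)}$, $\log_q |\mathcal{C}'| \leq k_{\text{opt}}^{(q)}(n-t(r+1),d)$, which rearranges to $k \leq tr + k_{\text{opt}}^{(q)}(n-t(r+1),d)$. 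Taking the minimum over $t$ yields \eqref{CadMaz}.

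The main obstacle I anticipate is Step~1, namely rigorously producing $t$ disjoint local sets in full generality. A naive greedy choice can loop through overlapping repair sets when every remaining coordinate's local set is forced to reuse already-covered coordinates. The typical remedies are either to restrict attention to the parameter ranges where disjoint packing is feasible (sufficient for the minimisation in \eqref{CadMaz}, since values of $t$ where the packing fails only weaken the bound) or to strengthen the counting in Step~2 by crediting the additional recoverability relations imposed by overlapping local sets, which yields the same numerical bound.
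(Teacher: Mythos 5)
Your overall strategy --- project onto a union $T$ of local sets, bound $|\pi_T(\mathcal{C})|$, find a large fiber by pigeonhole, and read off a length-$(n-t(r+1))$, distance-$\ge d$ code of size $\ge q^{k-tr}$ --- is exactly the shortening argument the paper sketches (there in the linear language of generator matrices, with the fiber argument replaced by the row space of the block $D$). Steps 2 and 3 of your proposal are correct and in fact give the nonlinear case directly, which the paper only alludes to.

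The genuine gap is Step 1, and your first proposed remedy does not repair it. The theorem asserts $k \le tr + k_{\text{opt}}^{(q)}(n-t(r+1),d)$ for \emph{every} admissible $t$ (that is what $k \le \min_t[\cdots]$ means), so you cannot discard the values of $t$ for which a disjoint packing of local sets fails --- doing so proves only $k \le \min_{t\in T'}[\cdots]$ over the subset $T'$ of feasible $t$, which is a strictly weaker statement. And disjoint packings do fail for legitimate LR codes: if a single coordinate belongs to the repair set of every other coordinate, no two local sets are disjoint, yet the bound must still hold for $t\ge 2$. The correct fix is your remedy (b), which needs to be carried out, not just gestured at: build $T$ greedily, at step $j$ adjoining the local set $S_{i_j}$ of some $i_j\notin T_{j-1}$, and track the \emph{deficiency} $|T_j| - \log_q|\pi_{T_j}(\mathcal{C})|$. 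Since $i_j$ is a new coordinate of $T_j$ determined by the remaining coordinates of $S_{i_j}\subseteq T_j$, each step adds at most $|S_{i_j}\setminus T_{j-1}|\le r+1$ coordinates while increasing $\log_q|\pi(\mathcal{C})|$ by at most $|S_{i_j}\setminus T_{j-1}|-1$; hence after $t$ steps the deficiency is at least $t$. Padding $T_t$ with arbitrary extra coordinates (each of which increases size and $\log_q|\pi(\mathcal{C})|$ by at most one apiece) up to $|T|=t(r+1)$ preserves the deficiency, giving $|\pi_T(\mathcal{C})|\le q^{t(r+1)-t}=q^{tr}$ without any disjointness hypothesis; this padding also absorbs the case $|S_i|<r+1$, which your write-up silently assumes away and which would otherwise leave you with a shortened code of the wrong length. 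With this replacement for Step 1, the rest of your argument goes through verbatim.
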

		\bpf (Sketch of proof) 
		The bound holds for linear as well as nonlinear codes. In the linear case, with $\mathbb{Q}=\mathbb{F}_q$, the derivation proceeds as follows.  Let $G$ be a $(k \times n)$ generator matrix of the locally recoverable code ${\cal C}$.  Then it can be shown that for any integer $t > 0$,  there exists an index set ${\cal I}$ such that $|{\cal I}|=\min(t(r+1),n)$ and $\text{rank} \left( G \mid_{\cal I} \right) \ = \ s \leq tr$. This implies that \calc\ has a generator matrix of the form (after permutation of columns): 
		\bean
		G & = & 
		\left[ \begin{array}{cc} 
			\underbrace{A}_{(s \times |{\cal I}|)} & B \\ 
			\left[ 0 \right]  & D
		\end{array} \right] . 
		\eean
		In turn, this implies that the rowspace of $D$ defines an $[n-t(r+1),k-s \geq k-tr,d]$ code over $\mathbb{F}_q$, if $k-tr > 0$.  
		It follows that $k \leq tr + k_{\text{opt}}^{(q)}(n-t(r+1),d)$ and the result follows.  Note that the row space of $D$ corresponds to a shortening $\calc^{S}$ of \calc\ with respect to the coordinates ${\cal I} \subseteq [n]$.  The proof in the general case is a (nontrivial) extension to the nonlinear setting.  \epf 
		
		\begin{note}
			The above bound was obtained by showing that shortening of an $[n,k,d]$ LR code with parameter $r$, leads to an $[n-t(r+1),\geq k-tr,d]$ code.  Classical bounds on coding theory can be applied to this shortened code, to yield ``lifted'' bounds on the parent code having locality. This {\em shortening} approach, presented for the first time in \cite{CadMaz}, has since been employed in subsequent papers in the literature, see \cite{HuaYaaUchSie,BalKum}. 
		\end{note}
		
		\noindent An alphabet-size-dependent bound on {\em $d_{\min}$ } (based on the shortening approach in \cite{CadMaz}), and which uses upper bounds on generalized Hamming weights \cite{Wei} of the dual code derived in \cite{PraLalKum},  appears in \cite{BalKum}.  The approach in \cite{BalKum}, can also be used to derive the following upper bound on dimension which is in general tighter than \eqref{CadMaz}: 
		\bea
		k \  \leq \  \min_{\{i: e_i < n-d+1 \}} [ e_i-i + k_{\text{opt}}^{(q)}(n-e_i,d) ] . \label{GHW_Bound}
		\eea
		The integers $\{e_i\}_i$ appearing here can be recursively computed for a given $(n,r)$, and represent upper bounds on the generalized Hamming weights (GHW) of the dual code (see Section~\ref{sec:available_dmin}). 
		A bound on the dimension of a binary LR code for a given $(n,r,d_{min})$ based on the Hamming bound for $d_{min} \geq 5$ and $2 \leq r \leq \frac{n}{2}-2$ appears in \cite{WanZhaLin}. This bound is shown to be tighter than \eqref{CadMaz} for some cases including $5 \leq d_{min} \leq 8$ for $n$ large.  
		In \cite{HuaYaaUchSie}, the authors employ the shortening approach to derive an alphabet-size-dependent bound on the minimum distance and dimension of codes having IS locality.  An example comparison of the bounds on dimension for linear LR codes in \eqref{GHW_Bound}, \eqref{CadMaz} and the Hamming-bound based bound in \cite{WanZhaLin} is presented in Table~\ref{tab:bd_compare}.
		
		\begin{table}[ht!]
			\caption {A comparison of upper bounds on the dimension $k$ of binary LR code, for given $(n,d_{\min},r,q)$} \label{tab:bd_compare}
			\begin{center}
				\begin{tabular}{ | c | c | c | c | c| c | } 
					\hline
					\multicolumn{6}{|c|}{$n=31$, $q=2$, $d_{min}=5$}   \\
					\hline
					$r$ (locality) & 2 & 3 & 4 & 5 & 6 \\ 
					\hline
					Bound \eqref{CadMaz} & 17 & 19 & 20 & 20 & 20 \\ 
					\hline
					Bound in \cite{WanZhaLin} & 15 & 18 & 20 & 22 & 23 \\
					\hline
					Bound \eqref{GHW_Bound} & 16 & 18 & 19 & 20 & 20 \\
					\hline
				\end{tabular}
			\end{center}
		\end{table} 
		
		\subsubsection{Bounds with Disjoint Repair Groups}
		
		Bounds on the dimension of a binary LR code $\calc$ for a given $n,r,d_{min}$ under the assumption that the local codes ($\calc|_{S_i}$) have pairwise disjoint support  appear in \cite{WanZhaLin,MaGe,AgaBargHuMazTam}. The bound in \cite{WanZhaLin} make use of the Hamming bound and is shown to be tighter than \eqref{CadMaz} for some cases. A tightening of this bound appears in \cite{MaGe}.   The tightest known bounds for this setting appear in \cite{AgaBargHuMazTam} and are based on Linear Programming.
		
		\subsubsection{Bounds on the Dimension of Cyclic LR Code} \label{sec:cyclic_code_bound}
		
		A linear-programming-based upper bound on the dimension of cyclic LR codes appears in \cite{TambarGopCal}.   Other bounds can be found in \cite{GopCal,ZehYak}.
		
		
		\subsubsection{Asymptotic Bounds}
		
		Upper bounds on asymptotic rate $R^q(r,1,\Delta)$ (see Section \ref{sec:asymptotic_rate} for a definition) for a given fractional minimum distance of a binary LR code appear in \cite{AgaBargHuMazTam}, that represent a slight tightening of the asymptotic version of the bound in \eqref{CadMaz}. 
		An achievable asymptotic Gilbert-Varshamov type lower bound for LR code appear in \cite{TamBarFro} to be:
		\bea
		R^q(r,1,\Delta) \geq 1- \min_{0 <s \leq 1} (\frac{1}{r+1} \log_q((1+(q-1)s)^{r+1}+(q-1)(1-s)^{r+1} )-\Delta \log_q(s)). \label{GV}
		\eea
		Constructions achieving the lower bound \eqref{GV} can also be found in \cite{CadMaz}.
		An improved lower bound obtained via a construction that makes use of algebraic-geometric codes based on the Garcia-Stichtenoth curves  appear in \cite{BarTamVla}:
		\bean
		R^q(r,1,\Delta) \geq \frac{r}{r+1} (1-\Delta-\frac{\sqrt{q}+r}{q-1}) \text{ for } (r+1) | (\sqrt{q}+1).
		\eean
		Constructions based on algebraic geometry and covering a wider range of parameters can be found in \cite{LiMaXin}.
		The algebraic-geometry-based constructions improve upon the GV-type bound in \eqref{GV} for some selected range of parameters.
		
		\subsection{Small-Alphabet Constructions}  
		
		\subsubsection{Construction of Binary Codes} 
		
		Constructions for binary codes that achieve the bound on dimension given in \eqref{CadMaz} for binary codes, appear in \cite{NamSo,SilZeh,HaoXiaChe}. While \cite{HaoXiaChe} and \cite{NamSo} provide constructions for $d_{min}=4$ and $d_{\min}=6$ respectively, the constructions in \cite{SilZeh} handle the case of larger minimum distance but have locality parameter restricted to $r \in  \{ 2,3 \}$.  In \cite{HuaYaaUchSie}, the authors give optimal binary constructions with information and all symbol locality with $d_{min} \in \{3,4\}$. The construction is optimal w.r.t a bound similar to \eqref{CadMaz} derived in \cite{HuaYaaUchSie}.  Constructions achieving the bound on dimension appearing in \cite{WanZhaLin} and the further tightened bound for disjoint repair groups given in \cite{MaGe} for binary codes, appear respectively, in \cite{WanZhaLin,MaGe}. These constructions are for the case $d_{min}=6$. 
		In \cite{HaoXiaChe}, the authors present a characterization of binary LR codes that achieve the Singleton bound \eqref{Singleton}. In \cite{ShaKhaArd}, the authors present constructions of binary codes meeting the Singleton bound. These codes are a subclass of the codes characterized in  \cite{HaoXiaChe} for the case $d_{min} \leq 4$.
		
		\subsubsection{Constructions with Small, Non-Binary Alphabet }

		In \cite{HaoXiaChe1}, the authors characterize ternary LR codes achieving the Singleton bound \eqref{Singleton}.  In \cite{HaoXiaChe,ShaKhaArd, HaoXia}, the authors provide constructions for codes over a field of size $O(r)$ that achieve the Singleton bound in \eqref{Singleton} for $d_{min} \leq 5$.   Some codes from algebraic geometry achieving the Singleton bound \eqref{Singleton} for restricted parameter sets are presented in \cite{LiMaXin1}.
		
		\subsubsection{Construction of Cyclic LR Codes}
		
		Cyclic LR codes can be constructed by carefully selecting the generator polynomial $g(x)$ of the cyclic code.  We illustrate a key idea behind the construction of a cyclic LR code by means of an example.  
		\begin{figure}[h!]
			\centering 		\includegraphics[width=4in]{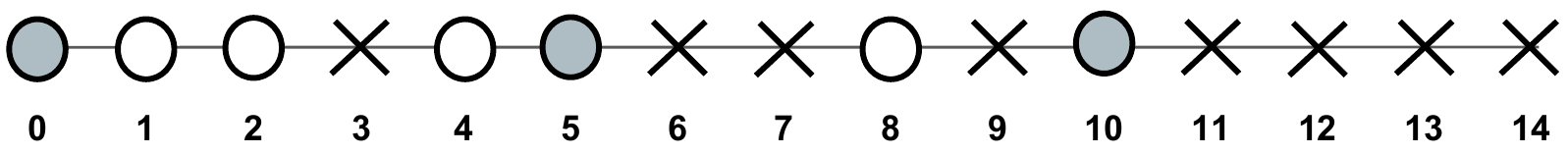}
			\caption{Zeros of the generator polynomial $g(x)=\frac{g_1(x)g_2(x)}{(x+1)}$ of the cyclic code in Example~\ref{eg:zero_train} are identified by circles. The unshaded circles along with the shaded circle corresponding to $\alpha^0=1$ indicate the zeros $\{1,\alpha,\alpha^2,\alpha^4,\alpha^8\}$ of $g_1(x)$ selected to impart the code with $d_{\min} \geq 4$.  The shaded circles indicate the periodic train of zeros $\{1,\alpha^5,\alpha^{10}\}$ introduced to cause the code to be locally recoverable with parameter $(r+1)=5$. The common element $1$ is helpful both to impart increased minimum distance as well as locality.}
			\label{fig:zero_train} 
		\end{figure}
		%
		
		\begin{eg} \label{eg:zero_train} 
			Let $\alpha$ be a primitive element of $\mathbb{F}_{16}$ satisfying $x^4+x+1=0$.  Let $\calc_1$ be a cyclic $[n=15,k=10]$ code having generator polynomial $g_1(x)=(x+1)(x^4+x+1)$.    
			Since the consecutive powers $\{1,\alpha,\alpha^2\}$ of $\alpha$ are zeros of $g_1(x)$, it follows that $d_{\min}(\calc)\geq 3+1=4$ by the BCH bound.  Suppose we desire to ensure that a code \calc\ having generator polynomial $g(x)$ has $d_{\min} \geq 4$ and in addition, is locally recoverable with parameter $(r+1)=5$, then we do the following.  Set $s=\frac{n}{(r+1)}=3$.   Let $g_2(x) \ = \  \prod_{l=0}^{s-1=2} (x-\alpha^{5l})$ and $g(x) = \text{lcm} \{g_1(x), g_2(x)\} = g_1(x)g_2(x)/(x+1)$.
			It follows that $\sum_{t=0}^{14} c_t \alpha^{5lt} \ = \  0, \ \ l=0,1,2$. Summing over $l$ we obtain:
			\bean
			\sum_{l=0}^2 \sum_{t=0}^{14} c_t \alpha^{5lt} \ =  \ 0  & \Rightarrow & 
			\sum_{t: t= 0 \pmod{3}}c_t \ = \ 0.  
			\eean
			It follows that the symbols $\{c_t \mid t= 0 \pmod{3}\}$ of \calc\ form a local code as they satisfy the constraint of an overall parity-check.  Since the code \calc\ is cyclic the same holds for the code symbols $\{c_{t+\tau} \mid t= 0 \pmod{3}\}$, for $\tau = 0,1,2$.   Thus through this selection of generator polynomial $g(x)$, we have obtained a code that has both locality and $d_{\min} \geq 4$.   The zeros of $g(x)$ are illustrated in Fig.~\ref{fig:zero_train}. The code \calc\ has parameters $[n=15,k=8,d_{min} \geq 4]$ and $r=4$. Note that the price we pay for introduction of locality is a loss in code dimension, equal to the degree of the polynomial $\frac{g_2(x)}{ \text{gcd} \{g_1(x), g_2(x)\}}$.  Thus an efficient code will choose the zeros of $g_1(x),g_2(x)$ for maximum overlap. 
		\end{eg}
		The above idea of constructing cyclic LR code was introduced in \cite{GopCal} and extended in \cite{TambarGopCal,ZehYak,KimNo,LuoXinYua}. In \cite{KriPurKumTamBarISIT17}, the use of locality for reducing the complexity of decoding a cyclic code is explored. The same paper also makes a connection with earlier work \cite{VardyBeery94} that can be interpreted in terms of locality of a cyclic code.
		%
		%
		In \cite{GopCal} a construction of binary cyclic LR codes for $r=2$ an $d_{min} \in \{2,6,10\}$  achieving a bound derived within the same paper for binary codes is provided.   In \cite{KimNo}, the authors give constructions of optimal binary, ternary codes meeting the Singleton bound \eqref{Singleton} for $d_{min} = 4, r \in \{1,3\}$ and $d_{min}=6, r=2$ as well as a construction of a binary code meeting the bound given in \cite{WanZhaLin} for $d_{min}=6,r=2$  based on concatenating cyclic codes.  A discussion on the locality of classical binary cyclic codes as well as of codes derived from them through simple operations such as shortening, can be found in \cite{HuaYakUchSie,HuaYaaUchSie}. The principal idea here is that any cyclic code has locality $d^{\perp}-1$ where $d^{\perp}$ is the minimum distance of the dual code $\calc^{\perp}$. In \cite{ZehYak}, the authors construct optimal cyclic codes under the constraint that the local code is either a Simplex code or else, a Reed-Muller code.   In \cite{TambarGopCal}, the authors provide a construction of cyclic codes with field size $O(n)$ achieving the Singleton bound \eqref{Singleton} and also study the locality of subfield subcodes as well as their duals, the trace codes. In \cite{LuoXinYua}, constructions of cyclic LR codes with $d_{min} \in \{3,4\}$ for any $q$ and flexible $n$ are provided. 
		
		\subsection{Maximal Recoverable Codes} 
		
		An $[n,k]$ MDS code can recover from any pattern of $(n-k)$ erasures.  Maximal Recoverable (MR) codes \cite{Chen} are codes that operate under some pre-specified linearity constraints and which can recover from any pattern of $(n-k)$ erasures that is not precluded by the pre-specified linearity constraints imposed. In the context of locality, these constraints are the ones imposed on the local codes. A different perspective of MR codes based on $k$-core subsets (defined below) is given in \cite{GopHuaSimYek}.
		
		\begin{defn}
			Let $H_0$ be an $(\rho \times n)$ matrix over \fq\ whose row space has $m=q^{\rho}-1$ nonzero vectors with respective support sets $A_i \subseteq [n], i=1,2, \cdots, \ m$.  We view $H_0$ as the matrix that imposes locality constraints.   Let us define a subset $S \subset [n]$ to be a $k$-core with respect to $H_0$ if $|S|=k$ and $| A_i \cap S^c | \geq 1, \text{ for all } i =1,2,\cdots,m$.  Then with respect to $H_0$, an MR code is an $[n,k,H_0,q]$ code ${\cal C}$ possessing a $(k \times n)$ generator matrix $G$ with $k \leq n-\rho$ satisfying the property that $H_0G^T =[0]$ and for any $k$-core $S$,  
			\bea \label{eq:mrc} 
			\text{rank} \left( G \mid_S \right) & = & k. 
			\eea
		\end{defn}
		\begin{note}
			Let $H = \left [ \frac{H_0}{H_1} \right] $ denote the parity-check matrix of the MR code, where $H_1$ represents the additional parity-checks that need to be imposed to satisfy the requirements of an MR code. It could happen that the elements of $H_0$ belong to a small base field $\mathbb{B}$ and over that field it is not possible to find a matrix $H_1$ which will result in an MR code.  It turns out that in such instances, one can always choose the elements of $H_1$ to lie in a suitable extension field \fq\ of $\mathbb{B}$, resulting in an MR code over \fq. 
			
		\end{note}
		\begin{note}
			The condition in \eqref{eq:mrc}  imposed on the $k$-core subsets $S$ is equivalent to the following condition: 
			Let $B \subseteq [n]$ be such that $\mid B^c \cap A_i \mid \geq 1$,   $\forall i=1,2,\cdots,m$. Then $G|_B$ is a generator matrix of an $[n=|B|,k]$ MDS code.  This follows since any $k$ columns of $G |_B$ are required to be linearly independent. 
		\end{note}
		
		\subsubsection{General Construction with Exponential Field Size} 
		The following construction is based on parity check matrix. There is an equivalent construction based on generator matrix which is presented in \cite{GopHuaSimYek}. Saying that $S$ is a $k$-core is equivalent to saying that $S$ is an information set since the $k$ underlying message symbols can be uniquely recovered from the $k$ code symbols $\{c_i \mid i \in S\}$.  From the perspective of the parity check matrix $H$, $S$ is a $k$-core if and only if $\text{rk} \left(H \mid_{S^c}\right) = (n-k)$.  This suggests a construction technique.  Setting $H = \left[ \begin{array}{c} H_0 \\ H_1 \end{array} \right]$ as earlier, we regard the symbols in the $((n-k-\rho) \times n)$ matrix $H_1$ as variables.   We need to select $H_1$ such that any $(n-k) \times (n-k)$ sub-matrix of $H$ corresponding to the complement $S^c$ of a $k$-core, has nonzero determinant.  Let $P(H_1)$ be the polynomial in the symbols of $H_1$ obtained by taking the product of these determinants. Note that the definition of a $k$-core ensures that each of these determinants are non-zero polynomials.  The product polynomial is a polynomial in the entries (variables) of the matrix $H_1$ and each variable appears with degree at most ${n-1 \choose n-k-1}$.  By the Combinatorial Nullstellensatz \cite{Alon_Comb}, it follows that there is a field of size  $q > {n-1 \choose n-k-1}$ such that this product of determinants can be made nonzero.    Thus an MR code always exists of field size $q > {n-1 \choose n-k-1}$. The interest is of course, in explicit constructions of MR codes having low field size $q$.   It is also possible to use linearized polynomials to construct MR codes, but while this results in an explicit construction, the field size is still in general,  of exponential size. 
		
		\subsubsection{Partial MDS Codes} 
		
		In the literature, the focus motivated by practical considerations, is on the following subclass of MR codes, also sometimes termed as Partial MDS (P-MDS) codes \cite{Bla_Haf_Het}.
		\begin{defn} \label{MR_PMDS}
			An $(r,\delta,s)$ MR code or partial MDS code is defined as an $[n=m(r+\delta),k=mr-s]$ code over $\mathbb{F}_q$ in which the $n$ code symbols can be arranged as an array of $(m \times (r+\delta) \ )$ code symbols in such a way that each row in the array forms a $[r+\delta,r,\delta+1]$ MDS code and upon puncturing any $\delta$ code symbols from each row  of the array, the resulting code becomes an $[mr,mr-s]$ MDS code.
		\end{defn}
		A tabular listing of some constructions of partial-MDS codes appears in Table \ref{tab:pmds}. 
		\begin{table}[h!] 
			\caption{Constructions for partial MDS codes.} 
			\label{tab:pmds} 
			\begin{center}
				\begin{tabular}{ | m{2cm} | m{2cm}| m{12cm} |} 
						\hline \hline 
						\multicolumn{1}{|c|}{Reference} &  \centering Parameters of MR Code &  \multicolumn{1}{|c|}{Field Size} \\
						\hline \hline 
						\multicolumn{3}{|c|}{General $r,\delta,s$} \\ \hline 
						\hline
						\cite{Cal_Koy} & $(r,\delta,s)$ &  $(q')^{mr}$ where $q'$ is a prime power $\geq r+\delta$. \\
						\hline
						\cite{Gab_Yak_Bla_Sie} & $(r,\delta,s)$ &  $\geq \max((q')^{\delta+s}m^{s-1},(q')^{s(\delta+s)})$ with $q'$ a prime power $\geq r+\delta$. \\
						\hline \hline 
						\multicolumn{3}{|c|}{$\delta=1$} \\   \hline 
						\hline 
						\cite{Bla_Haf_Het} & $(r,1,s)$ & $O(2^n)$ \\
						\hline 
						\cite{GopHuaJenYek} &  $(r,1,s)$ & $O(m^{\lceil (s-1)(1-\frac{1}{2^r}) \rceil})$ or $\geq n^{\frac{m+s}{2}}$ for $m+s$ even and  $\geq 2n^{\frac{m+s-1}{2}}$ for $m+s$ odd, when $r+1$ and $m$ are powers of $2$. \\
						\hline 
						\cite{Hu_Yek} & $(r,1,s)$ &  $\geq (q')^{\lfloor (1-\frac{1}{m})s \rfloor +m-1}$ ($q'$ is prime power $\geq n $) and for some special case, the field size of their construction is $\geq (q')^{\lfloor (1-\frac{1}{m})s \rfloor +m-2}$. For  $m=2$, $4 | s$, $\geq (q')^{\frac{s}{2}}$ where $q'\geq n$ is a power of $2$.\\
						\hline
						\cite{Gab_Yak_Bla_Sie} & $(r,1,s)$ &  $\geq {2^{\ell}}^{(1+(s-1)\lceil \log_{2^{\ell}}(m) \rceil)}$ where $\ell = \lceil \frac{s+1}{2} \rceil \lceil \log_2(r+\delta) \rceil$. \\
						\hline \hline 
						\multicolumn{3}{|c|}{$s=1$}  \\ \hline 
						\hline 
						\cite{Bla_Haf_Het} & $(r,\delta,1)$ & $O(\max(m,r+\delta))$ \\ 
						\hline 
						\cite{Che_Shu_Yu_Sun} &  $(r,\delta,1)$ & $O(r+\delta)$ \\
						\hline \hline 
						\multicolumn{3}{|c|}{$s=2$}  \\ \hline 
						\hline 
						\cite{Blaum_1} & $(r,1,2)$ & $O(n)$ \\
						\hline 
						\cite{Blaum_3} &  $(r,\delta,2)$ & $\geq m((\delta+1)(r-1)+1)$ $\approx$ $\delta \times n$ \\
						\hline 
						\cite{BalKUmMax} & $(r,\delta,2)$ & $O(n)$ \\
						\hline \hline 
						\multicolumn{3}{|c|}{$s=3$}  \\ \hline 
						\hline 
						\cite{GopHuaJenYek} &  $(r,1,3)$ & $O(k^{\frac{3}{2}})$ \\
						\hline 
						\cite{Gab_Yak_Bla_Sie} &  $(r,\delta,3)$ & if $m < (r+\delta)^3$ then $O((r+\delta)^{3(
							\delta+3)})$ otherwise $O((r+\delta)^{\delta+3} m^{1.5})$ \\
						\hline \hline 
						\multicolumn{3}{|c|}{$s=4$}  \\ \hline  \hline 
						\hline 
						\cite{GopHuaJenYek} &  $(r,1,4)$ & $O(k^{\frac{7}{3}})$ \\
						\hline       \hline                  
					\end{tabular}
			\end{center}
		\end{table}
		In \cite{LalLok}, the authors characterize the weight enumerators and higher support weights of an $(r,1,s)$ MR code.

\section{LR Codes for Multiple Erasures}  \label{sec:multiple_erasures} 

We begin with an overview of the different classes (see Fig.~\ref{fig:LRCClassification}) of LR codes that are capable of recovering from multiple erasures proposed in the literature.   All the codes defined in this section are over the finite field \fq.

\subsection{Various Classes of Multiple-Erasure LR Codes} 


\noindent {\em Sequential-Recovery LR Codes:}  An $(n,k,r,t)$ sequential-recovery LR code (abbreviated as S-LR code) is an $[n,k]$ linear code $\calc$ having the following property: Given a collection of $s \le t$ erased code symbols, there is an ordering $(c_{i_1}, c_{i_2}, \cdots, c_{i_s})$ of these $s$ erased symbols such that for each index $i_j$, there exists a subset $S_j \subseteq [n]$ satisfying (i) $|S_j| \le r$ , (ii) $\ S_j \cap \{i_j, i_{j+1}, \cdots, i_s \} = \phi$,  and 
\begin{eqnarray} \label{eq:locality}
\text{(iii)  \  }  \ c_{i_j} & = & \sum \limits_{\ell \in S_j} u_{\ell} c_{\ell}, \ u_{\ell} \in \fq  .
\end{eqnarray} 
It follows from the definition that an $(n,k,r,t)$ S-LR code can recover from the erasure of $s$ code symbols $c_{i_1}, c_{i_2}, \cdots, c_{i_s}$, for $1 \leq s \leq t$ by using \eqref{eq:locality} to recover the symbols $c_{i_j}, \ j=1,2,\cdots,s$, in succession. 

\begin{figure}[ht!]
	\centering
	\includegraphics[width=4.5in]{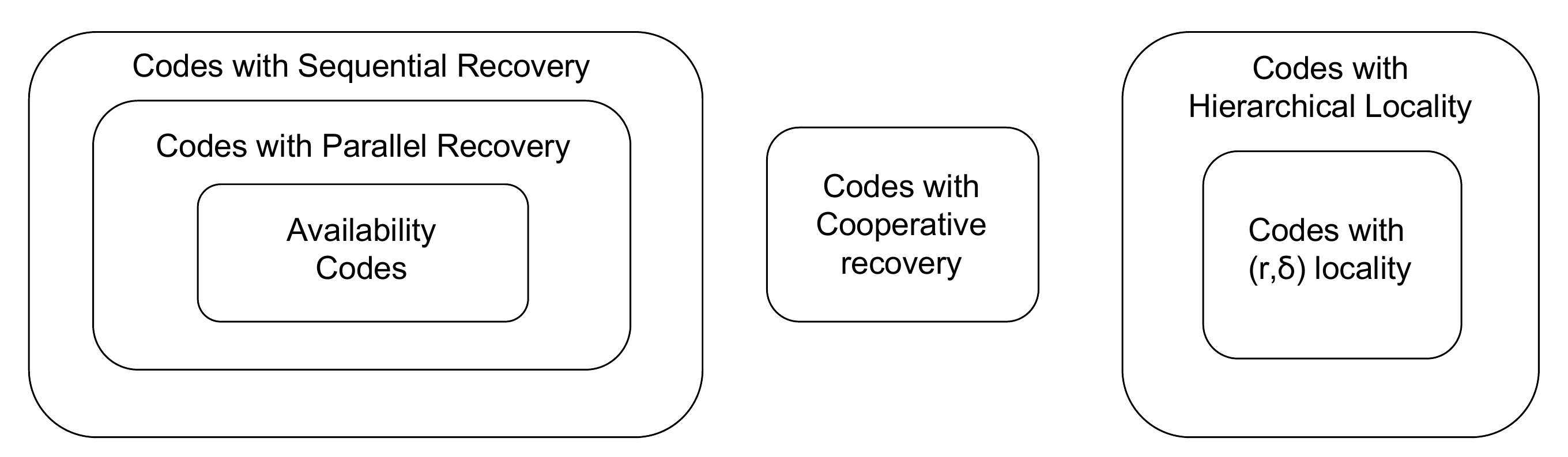}
	\caption{The various code classes corresponding to different approaches to recovery from multiple erasures.}
	\label{fig:LRCClassification}
\end{figure}

\noindent {\em Parallel-Recovery LR Codes:} \ \ If in the definition of the S-LR code, we replace the condition (ii) in \eqref{eq:locality} by the more stringent requirement $S_j \cap \{i_1, i_2, \cdots, i_s \} = \phi,$
then the LR code will be referred to as a \noindent {\em parallel} recovery LR code, abbreviated as P-LR code.  Clearly the class of P-LR codes is a subclass of S-LR codes.  From a practical perspective, P-LR codes are preferred since as the name suggests, the erased symbols can be recovered in parallel. However, this will in general come at the expense of storage overhead.  We note that under parallel recovery, depending upon the specific code, this may require the same helper (i.e., non-erased) code symbol to participate in the repair of more than one erased symbol $c_{i_j}$.

\noindent {\em Availability Codes:} An $(n,k,r,t)$ \noindent {\em availability} LR code, is an LR code having the property that in the event of a single but arbitrary erased symbol $c_i$, there exist $t$ recovery sets $\{R^i_j\}_{j=1}^t$ which are pair-wise disjoint and of size $|R^i_j| \leq r$ with $R^i_j \subseteq [n]-\{i\}$ such that for each $j, 1 \leq j \leq t$, $c_i$ can be expressed in the form:
\bean
c_i = \sum\limits_{\ell \in R^i_j} a_{i\ell} c_{\ell}, \ \ a_{i\ell} \in \fq.
\eean
An $(n,k,r,t)$ availability code is also an $(n,k,r,t)$ P-LR code.  This follows because the presence of at most $t$ erasures implies, that there will be at least one recovery set for each erased code symbol all of whose symbols remain unerased. If the $t$ disjoint recovery sets are available only for code symbols corresponding to an information set, the code is said to be an information-symbol (IS) availability code as opposed to the all-symbol (AS) availability implicit in the previous definition. 

\noindent {\em $(r,\delta)$ Codes:} Recovery from $t$ erasures can also be accomplished by using the codes with $(r,\delta)$ locality introduced in the previous section, Section~\ref{sec:lrc}, if one ensures that the code has $d_{\min} \geq t+1$.  However in this case, repair is local only in those cases where the erasure pattern is such that the number of erasures $e_i$ within each local code satisfies $e_i \leq \delta-1$. 
Thus one may regard $(r,\delta)$ codes as offering probabilistic guarantees of local recovery in the presence of $\leq t$ erasures in exchange for a potential increase in code rate.  Of course, one could always employ an $(r,\delta)$ locality with each local code being an MDS code and $\delta \geq t+1$, but this would result in a significant rate penalty. 


\noindent {\em Cooperative Recovery Codes:} A \noindent {\em cooperative} recovery $(n,k,r,t)$ LR (C-LR) code is an LR code such that if a subset $(c_{i_1}, c_{i_2}, \cdots, c_{i_s})$, $1 \leq s \le t$ of symbols are erased, then there exists a subset $\{c_{j_1}, c_{j_2}, \cdots, c_{j_r}\}$ of $r$ other code symbols 
(i.e., $i_a \ne j_b$ for any $a, b$) such that for all $a \in [s]$,
$c_{i_{a}} = \sum\limits_{b=1}^r \theta_{a,b} c_{j_b}, \  \theta_{a,b} \in \fq.  $
Clearly an $(n,k,r,t)$ C-LR code is also an $(n,k,r,t)$ P-LR code, but the $r$ in the case of a C-LR code will tend to be significantly larger.  One may regard C-LR codes as codes that seek to minimize the number of unerased symbols contacted per erased symbol on average, rather than insist that each code symbol be repaired by contacting $r$ other code symbols.

\subsection{Availability Codes}

\subsubsection{Bounds on Code Rate} 

The following upper bound on the rate of an availability code was given in \cite{TamBarFro}. 
\begin{thm}[\cite{TamBarFro}] If \calc\ is an $(n,k,r,t)$ availability code, then its rate $R$ must satisfy: 
	\bea
	R \ = \ \frac{k}{n} \leq \frac{1}{\prod_{j = 1}^{t}(1+\frac{1}{jr})}. \label{TamoBargRate}
	\eea
\end{thm}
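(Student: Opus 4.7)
My plan is to prove \eqref{TamoBargRate} by induction on $t$, since the right-hand side factors as $\prod_{j=1}^t \frac{jr}{jr+1}$, which exactly suggests a multiplicative reduction step. For the base case $t=1$, an $(n,k,r,1)$ availability code is a standard locally recoverable code with locality $r$; Theorem~\ref{thm:lrcdminbound} with $\delta=2$ (or, more directly, greedy extraction of $\lceil n/(r+1)\rceil$ linearly independent parity checks from the dual, one per disjoint repair group) yields $k/n\le r/(r+1) = 1/(1+1/r)$, matching the claim at $t=1$.

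For the inductive step, given an $(n,k,r,t)$ availability code $\mathcal{C}$ with $t\ge 2$, the target is to exhibit a derived availability code $\mathcal{C}'$ with parameters $(n',k',r,t-1)$ satisfying
\[
\frac{k'}{n'} \;\ge\; \frac{tr+1}{tr}\cdot\frac{k}{n},
\]
since then the inductive hypothesis applied to $\mathcal{C}'$ multiplied by the reduction factor yields the full product bound for $\mathcal{C}$. A natural attempt is to pick a coordinate $i^*$ and one of its $t$ disjoint recovery sets $R^{i^*}_t$ and shorten $\mathcal{C}$ on $T = \{i^*\}\cup R^{i^*}_t$: the shortened code has length $n-|T| \ge n-r-1$ and dimension at least $k-|T|$, and every surviving coordinate inherits whichever of its original recovery sets avoided $T$.

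The main obstacle, and where I expect most of the work to go, is that a single deterministic shortening will not in general deliver the precise ratio $(tr+1)/(tr)$: a surviving coordinate $i\ne i^*$ may have several of its original recovery sets intersecting $T$, losing more than one and possibly dropping below availability $t-1$. To overcome this I would use an averaging/probabilistic argument: either (i) randomise the choice of the pair $(i^*, R^{i^*}_t)$ and show that on average the dimension loss is at most a $1/(tr)$-fraction of $k$ while the length loss is a $1/(tr+1)$-fraction of $n$, so some deterministic choice achieves both bounds simultaneously while preserving availability $t-1$ at \emph{every} surviving coordinate; or (ii) replace the single shortening by a carefully chosen matching of disjoint $(r+1)$-sets, each of the form $\{i\}\cup R^{i}_j$, that collectively cover a controlled fraction of $[n]$. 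A fallback, should this combinatorics resist, is to abandon induction in favour of a direct entropy argument on a uniformly random codeword $c \in \mathcal{C}$: express $k\log q = H(c)$ and iteratively apply the relations $H(c_i\mid c_{R^i_j})=0$ together with submodularity to derive the product bound directly — the delicate point there being to order the conditioning so that each step contributes exactly a factor $\frac{jr}{jr+1}$.
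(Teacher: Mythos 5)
First, note that the survey itself does not prove this theorem: it is stated with a citation to \cite{TamBarFro} and the text immediately moves on, so there is no in-paper argument to compare against. Judged on its own terms, your proposal is a plan rather than a proof. The skeleton is right — induction on $t$ with a per-step factor $\frac{tr}{tr+1}$, and the base case $t=1$ does follow from the $\delta=2$ Singleton-type bound (Theorem~\ref{thm:lrcdminbound} forces $n \ge k + \lceil k/r\rceil$, hence $k/n \le r/(r+1)$). But the inductive step, which is the entire content of the theorem, is not carried out: you name the obstacle yourself and then offer three alternative, unexecuted strategies.

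The gap is not merely one of detail; the specific mechanism you lead with cannot work. Shortening on a single set $T=\{i^*\}\cup R^{i^*}_t$ removes only $|T|\le r+1$ coordinates, so it changes the rate by $O(r/n)$, whereas the induction requires a multiplicative gain of $1+\frac{1}{tr}$ that is bounded away from $1$ independently of $n$. To realize that gain you must delete on the order of $\frac{n}{tr+1}$ coordinates — a constant fraction of $[n]$ — arrange that all of them are linearly determined by the survivors (so that $k'=k$ rather than $k'=k-|T|$; note that $\frac{k-c}{n-c}<\frac{k}{n}$, so losing dimension proportionally to length moves the rate the wrong way), and simultaneously guarantee that every surviving coordinate retains $t-1$ intact recovery sets. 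There is also a structural wrinkle: shortening does not degrade availability at all, since a recovery set meeting $T$ still functions with the shortened coordinates set to zero, so the passage from availability $t$ to $t-1$ must come from a genuine restriction/puncturing argument, not shortening. Constructing such a large deletion set with all these properties (or, in your fallback, ordering the entropy-chain conditioning so that level $j$ contributes exactly the factor $\frac{jr}{jr+1}$) is precisely the hard combinatorial core of the Tamo--Barg--Frolov argument, and it is the part your proposal leaves open.
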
 
The parity check matrix of an availability code can be written in the form  $H^T=[H_a^T \ H_b^T]$ where the rows of $H_a$ are the distinct parity checks associated with the recovery sets $R^i_j$, $\forall i \in [n], j \in [t]$ and where the matrix $H_b$ contains all the remaining parity checks.  Clearly the Hamming  weight of each row of $H_a$ is $\leq (r+1)$ and the column weight $\geq t$. 

\noindent {\em Codes with Strict Availability:} Codes with Strict Availability (SA-LR codes) are simply the subclass of availability codes where each row of $H_a$ has weight equal to $(r+1)$ and each column of $H_a$ has weight equal to $t$.  Thus the number $m$ of rows of $H_a$ must satisfy $m(r+1)=nt$.  Further, if the support sets of the rows in $H_a$ having a non-zero entry in the $i^\text{th}$ column are given respectively by $S^{(i)}_j, j=1,2,\cdots t$, then we must have by the disjointness of the recovery sets, that $S^{(i)}_j \cap S^{(i)}_l =   \{i \} , \forall\ 1 \leq j \neq l \leq t$.       Each code symbol $c_i$ in an SA-LR code is thus protected by a collection of $t$ `orthogonal' parity checks, each of weight $(r+1)$. 
\begin{thm} \cite{BalKum} Let $R=\frac{k}{n}$  be the maximum possible rate of an $(n,k,r,t)$ SA-LR code.  Then $R$ must satisfy the upper bound: 
	\bea
	R &\leq& 1- \left( \frac{t}{r+1}\right) + \left( \frac{t}{r+1} \right) \left( \frac{1}{\prod_{j=1}^{r+1}(1+\frac{1}{j(t-1)})}\right).  \label{Rate_2}
	\eea
\end{thm}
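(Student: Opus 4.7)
The plan is to use a duality trick that recasts the rate bound for an SA-LR code as an instance of the Tamo-Barg availability rate bound \eqref{TamoBargRate} applied to an auxiliary code built from $H_a$. First, note that the rows of $H_a$ are $m = nt/(r+1)$ parity-check constraints on $\calc$, so letting $\rho = \mathrm{rank}(H_a)$ and $e = m - \rho$, we have $n-k \ge \rho = m - e$ and hence
\begin{equation*}
R \ \le \ 1 - \frac{m}{n} + \frac{e}{n} \ = \ 1 - \frac{t}{r+1} + \frac{e}{n}.
\end{equation*}
Thus it suffices to upper bound $e/n$ by $\frac{t}{r+1}\cdot\frac{1}{\prod_{j=1}^{r+1}(1+\frac{1}{j(t-1)})}$, or equivalently, to upper bound $e/m$ by $\frac{1}{\prod_{j=1}^{r+1}(1+\frac{1}{j(t-1)})}$.

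The key step is to interpret $e$ as the dimension of an auxiliary availability code. Let $\calc' = \{ v \in \fq^m : v^T H_a = 0 \}$, the left null space of $H_a$; by construction $\dim \calc' = e$. Viewing the $n \times m$ matrix $H_a^T$ as a parity-check matrix for $\calc'$, its row weights equal the column weights of $H_a$ (each equal to $t$) and its column weights equal the row weights of $H_a$ (each equal to $r+1$). Consequently $\calc'$ has block length $m$, admits $n$ parity-checks each of weight $t$, and each of its coordinates $j \in [m]$ participates in exactly $r+1$ of these parity checks, each supplying a candidate recovery set of size $t-1$ for $v_j$.

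The crucial step is to verify that the $r+1$ recovery sets so associated with each coordinate $j$ of $\calc'$ are pairwise disjoint, so that $\calc'$ is an $(m, e, r'=t-1, t'=r+1)$ all-symbol availability code. Here the strict-availability hypothesis is used in an essential way. Fix $j \in [m]$ and any two columns $i_1 \ne i_2$ of $H_a$ with $H_a[j,i_1], H_a[j,i_2] \ne 0$; suppose some $j' \ne j$ lies in both associated recovery sets, i.e.\ $H_a[j',i_1], H_a[j',i_2] \ne 0$. Then rows $j$ and $j'$ of $H_a$ are supports of two distinct parity checks of $\calc$, both containing $i_1$ and $i_2$, contradicting the SA-LR property $S^{(i_1)}_{a} \cap S^{(i_1)}_{b} = \{i_1\}$ applied to the symbol $i_1$ (since $i_2 \ne i_1$). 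Hence the recovery sets are disjoint, and each single-erasure recovery is a genuine linear combination because the corresponding diagonal entry $H_a[j,i]$ is nonzero and can be inverted.

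Applying the Tamo-Barg availability bound \eqref{TamoBargRate} to $\calc'$ with parameters $(r',t')=(t-1,r+1)$ gives
\begin{equation*}
\frac{e}{m} \ = \ R(\calc') \ \le \ \frac{1}{\prod_{j=1}^{r+1}\left(1+\frac{1}{j(t-1)}\right)}.
\end{equation*}
Substituting back via $m/n = t/(r+1)$ yields exactly the claimed bound \eqref{Rate_2}. The main obstacle, which is where the strictness of the SA-LR condition is indispensable, is the pairwise-disjointness verification for the auxiliary recovery sets; without it the derived code $\calc'$ would satisfy only a weaker locality property and Tamo-Barg would not apply directly.
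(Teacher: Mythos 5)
Your proof is correct. The survey states this theorem without including a proof, but your argument --- passing to the transpose of $H_a$, showing via the strict-availability orthogonality condition $S^{(i)}_j \cap S^{(i)}_l = \{i\}$ that the left null space of $H_a$ is an $(m, e, t-1, r+1)$ availability code, and then applying the Tamo--Barg bound \eqref{TamoBargRate} to it --- is exactly the ``transpose''-based rate-equation approach of \cite{BalKum} that the survey alludes to in the sentence following the theorem.
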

The above bound \eqref{Rate_2}, derived in \cite{BalKum}, is tighter than \eqref{TamoBargRate} as $r$ increases for any fixed $t$.  An upper bound on rate of an $(n,k,r=2,t)$ SA-LR code over $\mathbb{F}_2$ that for large $t$, becomes tighter in comparison with the bounds in either \eqref{TamoBargRate} or \eqref{Rate_2}, is presented in \cite{KadCal}.  Also contained in \cite{KadCal}, is an upper bound on the rate of an $(n,k,r,3)$ SA-LR code over $\mathbb{F}_2$ which is tighter than the bound in either \eqref{Rate_2} or \eqref{TamoBargRate} for $r > 72$ and which makes use of the ``transpose''-based rate equation appearing in \cite{BalKum}.  

\subsubsection{Constructions}

\noindent {\em The Product Code:} Consider the $[(r+1)^t, r^t]$ product code in $t$ dimensions. Clearly this is an $(n=(r+1)^t,k=r^t,r,t)$ availability code, having rate $R=(\frac{r}{r+1})^t$. 

\noindent {\em The Wang et al. Construction:} For any given parameter pair $(r,t)$,  Wang et al. \cite{WanZhaLiu} provide  a construction for an $(n,k,r,t)$ availability code which is defined through its parity-check matrix.  Let $S$ be a set of $m=(r+t)$ elements.  Then in the construction, each row of $H$ corresponds to a distinct subset of $S$ of cardinality  $(t-1)$ and each column, to a distinct subset of $S$ of cardinality $t$.  We set $h_{ij}=1$ if the $i$-th $(t-1)$-subset belongs to the $j$-th $t$-subset and zero otherwise.  Thus $H$ is of size ${m \choose t-1} \times {m \choose t} $.  It is easy to verify that each row of $H$ has constant row weight $(r+1)$ and each column of $H$ has constant weight $t$.  It turns out that the rank of $H$ is given by ${m-1 \choose t-1}$ and that $H$ defines an $(n,k,r,t)$ availability code, having parameters: $n= {m \choose t}, k= {m \choose t} - {m-1 \choose t-1}$ and rate $R= \frac{r}{r+t}$. 
Thus this code provides improved rate in comparison with the product code.  Since $ {r+t \choose t} \ <  \ (r+1)^t$,  the code has smaller block length as well. 

\noindent {\em Direct-Sum Construction:} It is shown in \cite{KadCal} that the direct sum of $m$ copies of the $[7,3]$ Simplex code yields an SA-LR code with parameters $(7m,3m,2,3)$ having maximum possible rate for $n=7m,r=2,t=3,q=2$.  

\subsubsection{Bounds on Minimum Distance} \label{sec:available_dmin} 

Let $d_{\text{min}}(n,k,r,t) $ be the maximum possible minimum distance of an $(n,k,r,t)$ availability code. In \cite{WanZha}, the following bound on the minimum distance of an information symbol availability code (and hence applicable to the case of all-symbol availability codes as well) was presented:
\bea
d_{\text{min}}(n,k,r,t) \leq n-k+2 - \left \lceil \frac{t(k-1)+1}{t(r-1)+1} \right \rceil .   \label{WangDmin}
\eea
This bound was derived by adopting the approach employed in Gopalan et al. \cite{GopHuaSimYek} to bound the minimum distance of an availability code.   An improved minimum-distance estimate appears in \cite{TamBarFro}:
\bea
d_{\text{min}}(n,k,r,t) \leq n-\sum_{i=0}^{t} \left \lfloor \frac{k-1}{r^i} \right \rfloor. \label{TamoBargDmin}
\eea

\noindent {\em Approach via Minimum Support Weights:} The next bound on minimum distance relies upon an easy-to-compute sequence that  represents upper bounds on the GHW of the dual of an availability code.  Let there be $b$ subsets $\{S_1,\cdots,S_b\}$ of $[n]$, each of size at most $r+1$. We assume that $[n] = \cup_{i=1}^b S_i$. Let $f_i$ be the minimum size of the union of any $i$ out of the $b$ subsets i.e.,  $f_i = \min_{\{T: T \subseteq [b] : |T|=i\}} |\cup_{j \in T} S_j|$. Then $f_i \leq e_i$ \cite{PraLalKum} where the $\{e_i\}_{i=1}^b$ are recursively calculated in the reverse direction as follows:
set $e_b=n$, and for $2 \leq i \leq b$, set 
\bea
e_{i-1} = \min \{e_i, e_i-\left \lceil \frac{2e_i}{i} \right \rceil + r+1\}. \label{Sequence}
\eea
From the definition of $e_j$, it is clear that $e_j$ is an upper bound on the $j$-th minimum support weight or $j$-th GHW of a code containing $b$ linearly independent codewords with the $i$-th codeword having support $S_i$, $i \in [b]$.  We will refer to the sequence $\{e_i\}$ associated with a given parameter set $(n,r,b)$as the {\em minimum-support-weight (msw) sequence} associated to $(n,r,b)$.   The bound below in \eqref{balkumT}  appeared in \cite{BalKum} and  makes use of the fact that shortening of an $(n,k,r,t)$ availability code results in a second availability code with parameters $(n-\Delta_n,k-\Delta_k,r,t)$ having the same or larger $d_{\min}$.  By applying the bound in \eqref{TamoBargDmin} to the shortened code, one often obtains a bound on the original code (i.e., the parent code before shortening) that is significantly tighter.  To estimate $(\Delta_n,\Delta_k)$, the bound makes use of the msw sequence discussed above. 
\begin{thm} \cite{BalKum}
	Let $b  \ =  \ \left \lceil n(1-\rho(r,t)) \right \rceil$ and $e_i$ be calculated as per \eqref{Sequence}, where
	\bea
	\rho(r,t) \ = \ \begin{cases}
		\frac{r}{r+t}, & \text{ if } t \in \{1,2\} \\ 		
		\frac{r^2}{(r+1)^2}, & \text{ if } t=3, \\
		\frac{1}{\prod_{j=1}^{t}(1+\frac{1}{jr})}, & \text{ if } t>3	. 				
	\end{cases} \label{Rate_Bound} 
	\eea
	\bea
	\text{Then, }	d_{\text{min}}(n,k,r,t)  & \leq  &   \min_{\ \ 1 \leq i \leq b,  \ e_i-i <k   \ } \   \left\{ n-k-i+1-\sum_{j=1}^{t} \left \lfloor \frac{k+i-e_i-1}{r^j} \right \rfloor \right\}. \label{balkumT} 
	\eea
\end{thm}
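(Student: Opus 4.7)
The plan is to combine the \emph{shortening} technique of \cite{CadMaz} with the Tamo-Barg-Frolov bound \eqref{TamoBargDmin}, using the msw sequence $\{e_i\}$ to control how much one can shorten while retaining the availability property. The key observation is that the $t$ disjoint recovery sets at each coordinate of an $(n,k,r,t)$ availability code furnish parity-check vectors in $\calc^{\perp}$ of Hamming weight at most $(r+1)$; combined with the rate upper bounds \eqref{TamoBargRate}, \eqref{Rate_2}, this implies that $\calc^{\perp}$ contains at least $b = \lceil n(1-\rho(r,t)) \rceil$ linearly independent such parity checks, whose respective supports $S_1, \dots, S_b \subseteq [n]$ satisfy the hypotheses under which the msw sequence $\{e_i\}$ is defined. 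Consequently, for any $1 \leq i \leq b$ there exists a subset $T \subseteq [n]$ of size $|T| \leq e_i$ that contains the supports of $i$ linearly independent weight-$\leq(r+1)$ parity checks.

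First I would shorten $\calc$ on $T$ to obtain a code $\calc^S$ of length $n' = n - |T|$. Using the duality identity
\bean
\dim \calc^S \ = \ k - |T| + \dim\{h \in \calc^{\perp} : \mathrm{supp}(h) \subseteq T\},
\eean
together with the fact that the $i$ selected parity checks contribute at least $i$ to the rightmost term, I would conclude $k' := \dim \calc^S \geq k - |T| + i$. Next I would verify that $\calc^S$ is itself an $(n',k',r,t)$ availability code: for each surviving coordinate $\ell \in [n]\setminus T$ and each of its $t$ original recovery sets $R^\ell_j$, the restriction $R^\ell_j \setminus T$ serves as a recovery set in $\calc^S$ of size $\leq r$, since the coordinates in $T$ are zero in every codeword of $\calc^S$ and hence contribute nothing to the recovery relation; pairwise disjointness of the $t$ sets is preserved under restriction.

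With these structural properties in hand, I would apply $d_{\min}(\calc) \leq d_{\min}(\calc^S)$ together with the Tamo-Barg-Frolov bound \eqref{TamoBargDmin} for $\calc^S$, pulling out the $j=0$ term to obtain
\bean
d_{\min}(\calc) \ \leq \ n - |T| - (k' - 1) - \sum_{j=1}^{t} \left\lfloor \frac{k' - 1}{r^j} \right\rfloor.
\eean
Substituting the lower bound $k' \geq k - |T| + i$ (valid by monotonicity of $\lfloor \cdot \rfloor$) and then $|T| \leq e_i$ (again by monotonicity) simplifies the right-hand side to $n - k - i + 1 - \sum_{j=1}^{t} \lfloor (k + i - e_i - 1)/r^j \rfloor$. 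Finally, minimizing over $i$ in the admissible range $1 \leq i \leq b$ with $e_i - i < k$ (the latter condition ensuring $k' \geq 1$ so that the Tamo-Barg-Frolov bound is non-vacuous) yields the claimed inequality.

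The main obstacle will be justifying the existence of $b$ linearly independent weight-$\leq(r+1)$ parity checks: extracting such a set from the collection of recovery-set-induced parity-check vectors requires more than the bare dimension count $\dim \calc^{\perp} = n - k \geq b$ afforded by the rate bound, since one must ensure linear independence and low weight simultaneously; an indirect extremal argument---considering the subcode of $\calc^{\perp}$ generated by the recovery-induced parity checks and applying the rate bound to the code it annihilates---seems the natural route. A secondary subtlety is confirming that shortening preserves the parameters $(r,t)$ exactly, particularly the disjointness and size-at-most-$r$ property of the restricted recovery sets; this should follow readily since both properties are inherited under restriction.
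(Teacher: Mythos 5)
Your proposal is correct and follows exactly the route the paper sketches for this bound: shorten the availability code on the union $T$ of the supports of $i$ of the $b$ linearly independent weight-$\le(r+1)$ dual codewords (with $|T|\le e_i$ by the msw sequence), verify that the shortened code is an $(n-|T|,\,k'\ge k-|T|+i,\,r,t)$ availability code with no smaller $d_{\min}$, and apply the bound in \eqref{TamoBargDmin} to it. You also correctly identify and resolve the one point the paper leaves implicit, namely that the existence of $b$ independent low-weight parity checks follows by applying the rate bound $\rho(r,t)$ to the (larger) availability code annihilated by the recovery-induced parity checks rather than to $\calc$ itself.
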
 
The calculation of $\rho(r,t)$ for $t=1$ was not explicitly stated in \cite{BalKum} but is well known.   Also contained in  \cite{BalKum} is an improved upper bound on $d_{\text{min}}$ in the case of codes with strict availability.

\subsubsection{Alphabet-Size Dependent Bounds on $d_{\min}$}  
Let $d^q_{min} (n,k,r,t)$ be the maximum possible minimum distance of an $(n,k,r,t)$ availability code over $\mathbb{F}_q$.
In \cite{HuaYaaUchSie}, the authors provide a bound on minimum distance of an $(n,k,r,t)$ IS availability code (the bound thus also applies to AS  availability codes as well) that depends on the size $q$ of the underlying finite field $\mathbb{F}_q$:
\bea
d^q_{min} (n,k,r,t) \leq \min_{\scriptsize \begin{array}{c} 1 \leq x \leq \lceil \frac{k-1}{(r-1)t+1} \rceil, \\ x \in \mathbb{Z}^{+} , y \in [t]^x, A(r,x,y) < k \end{array}} d^q(n-B(r,x,y),k-A(r,x,y)), \label{dminAlp}
\eea
where $A(r,x,y) = \sum_{j=1}^{x} (r-1)y_j + x$, \ $B(r,x,y) = \sum_{j=1}^{x} r y_j + x$ and $d^q(n,k)$ is the maximum possible minimum distance of a classical (i.e., no locality necessary) $[n,k]$ block code over $\mathbb{F}_q$.
There is a similar bound on the dimension of an availability code with parameters $n,r,t,d_{\min}$ over $\mathbb{F}_q$.

The following bound on the minimum distance of an $(n,k,r,t)$ availability code over $\mathbb{F}_q$ that is tighter than the bound in \eqref{dminAlp} appears in \cite{BalKum} and is currently the tightest-known bound on $d^q_{\text{min}}(n,k,r,t)$:
\bea
d^q_{\text{min}}(n,k,r,t) \leq  \min_{i \in S} \ d^q_{\text{min}}(n-e_i,k+i-e_i,r,t), \ \label{Mindist_Bound1}
\eea
where $S=\{ i: e_i-i < k, 1 \leq i \leq b \}$ and $b  \ =  \ \left \lceil n(1-\rho(r,t)) \right \rceil$ and $e_i$ is calculated as per \eqref{Sequence}.  This bound is also based on the shortening approach introduced in \cite{CadMaz}. 


\subsubsection{Asymptotic Bounds on Rate} \label{sec:asymptotic_rate}
Let $R^q(r,t,\Delta) = \limsup_{n \rightarrow \infty} \frac{\log_q(A_q(n,r,t,\lceil \Delta n \rceil))}{n}$, where $A_q(n,r,t,d)$ is the maximum number of codewords in an  availability code with parameters $(n,r,t)$ with minimum distance $d$ over $\mathbb{F}_q$. The only known upper bounds on $\sup_{q} R^q(r,t,\Delta)$ are based on converting the minimum distance bounds appearing in \eqref{WangDmin}, \eqref{TamoBargDmin} and \eqref{balkumT} into asymptotic bounds. There are constructions which provide lower bounds on $ R^q(r,t,\Delta)$.  A lower bound on $\sup_q R^q(r,t,\Delta)$ for any $r\geq t$ is provided in \cite{TamBarFro}.  A lower bound on $\sup_q R^q(r,t,\Delta)$ appears in \cite{KruFro}. For the specific case $t=2$, \cite{TamBarFro} provides lower bounds on $R^q(r,2,\Delta)$: 
\bea
R^q(r,2,\Delta) \geq \frac{r}{r+2} - \min_{0 <s \leq 1} \left( \frac{1}{{r+2 \choose 2}} \log_q(g^{(2)}_q(s)) - \Delta \log_q(s) \right) & & \text{ valid for any $q$}, \\
g^{(2)}_2(s)=\frac{1}{2^{r+2}} \sum_{i=0}^{r+2} {r+2 \choose i} (1+s)^{{r+2 \choose 2} - i(r+2-i)} (1-s)^{i(r+2-i)} & & \text{ valid only for $q=2$}. 
\eea
The reader is referred to \cite{TamBarFro} for an expression for $g^{(2)}_q(s)$ for general $q$ as well as a lower bound on $\sup_q R^q(r,t,\Delta)$ for any $r\geq t$.  A further lower bound on $R^q(r,t,\Delta)$ for the case $t=2$ and based on algebraic geometry codes appears in \cite{BarTamVla}.

\subsection{Codes with Sequential Recovery}

Somewhat surprisingly, the maximum possible rate of an $(n,k,r,t)$ S-LR code has been precisely determined via a tight upper bound and a matching construction. The case $t=2,3$ is respectively settled in \cite{PraLalKum} and \cite{SonYue}, where the authors derive the respective bounds:
\bean
n\geq k+\left\lceil\frac{2k}{r}\right\rceil  \ \ \text{ for  $t=2$, \ \ } n\geq k+\left\lceil\frac{2k+\lceil\frac{k}{r}\rceil}{r}\right\rceil  \text{ for  $t=3$},
\eean
and provide matching constructions in each case. Matching constructions for the $t=2$ case can be derived either from complete graphs or Turan graphs \cite{PraLalKum}.  Interestingly, the construction based on Turan graphs turns out to be optimal with respect to GHW as well.
The general $t \geq 4$ case was settled in \cite{BalKinKum_ISIT,BalKinKum} and is presented below.  				
\begin{thm} \cite{BalKinKum_ISIT,BalKinKum}
	Let $\mathcal{C}$ be an $(n,k,r,t)$ S-LR code over a finite field $\mathbb{F}_q$. Let $r \geq 3$. Then 
	\bea
	\frac{k}{n} & \leq  & \left\{ \begin{array}{rl} \frac{r^{\frac{t}{2}}}{r^{\frac{t}{2}} + 2 \sum_{i=0}^{\frac{t}{2}-1} r^i}, & t \text{ even}, \\ 
		\frac{r^{s}}{r^{s} + 2 \sum_{i=1}^{s-1} r^i + 1}, & \text{for $t$ odd,} \end{array} \right. , \label{eq:seq_rate_bound}
	\eea
	where $s = \frac{t+1}{2}$.  Moreover, there exist \noindent {\em binary} codes (i.e., codes over $\mathbb{F}_q$ with $q=2$) that achieve this bound. 
\end{thm}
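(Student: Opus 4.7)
The plan is to establish the rate upper bound and then exhibit matching binary constructions. For the upper bound, I would start from a parity-check matrix $H$ of the code and use the sequential-recovery property to extract a sub-matrix $H_1$, each of whose rows has Hamming weight at most $r+1$. The rows of $H_1$ correspond to the local parity-check equations invoked during the sequential repair of each code symbol: for every column $i$ there must exist at least one row of $H_1$ whose support contains $i$ and otherwise lies within a ``helper'' set of size at most $r$. After discarding redundant rows, $H_1$ still has enough independent rows to cover every coordinate as a leading position in some repair step, and $\mathrm{rank}(H_1) \leq n-k$.

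The central step is to associate a Tanner-style bipartite graph $\mathcal{G}$ with $H_1$, with the $n$ code symbols on one side and the selected weight-$(r+1)$ parity-check equations on the other, and to translate the sequential-recovery hypothesis into a forbidden-subgraph / girth condition on $\mathcal{G}$. The key structural claim is that the existence of a short cycle (or short alternating walk) in $\mathcal{G}$ would produce a set of at most $t$ coordinates whose simultaneous erasure blocks every possible sequential repair ordering, contradicting the hypothesis. This gives a lower bound on the bipartite girth of $\mathcal{G}$ as a function of $t$, with the even-$t$ and odd-$t$ cases naturally splitting because the corresponding minimally blocking sub-configurations differ in parity. Pinning down exactly which sub-configurations are ruled out --- with the quantifier running over all possible recovery orderings --- is, I expect, the main obstacle, and is the primary content of \cite{BalKinKum_ISIT,BalKinKum}.

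Once the girth lower bound is in place, a Moore-type vertex-counting argument on a graph of degree $r+1$ lower-bounds $n-k$ in terms of $n$ and $r$. The expressions $r^{t/2} + 2\sum_{i=0}^{t/2-1} r^i$ (for even $t$) and $r^{s} + 2\sum_{i=1}^{s-1} r^i + 1$ (for odd $t$) appearing in \eqref{eq:seq_rate_bound} are precisely the classical Moore-bound formulas for regular graphs of degree $r+1$ of appropriate girth, which both explains the form of the bound and signals the correct graph to analyse. Rearranging the Moore inequality and writing $R = k/n$ then yields \eqref{eq:seq_rate_bound}; the fact that these formulas collapse to the known $t=2,3$ bounds of \cite{PraLalKum, SonYue} serves as a useful sanity check.

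For the matching binary construction, the plan is to exhibit a parity-check matrix $H$ over $\mathbb{F}_2$ whose Tanner graph is an $(r+1)$-regular bipartite graph meeting the relevant Moore bound with equality. Incidence graphs of generalized polygons and related finite-geometry structures furnish extremal graphs for infinitely many $(r,t)$, and one then verifies that the binary code so defined admits sequential recovery of any $s \leq t$ erasures by peeling off symbols in an order determined by the girth structure: the absence of short cycles guarantees that, among any $s \leq t$ erased coordinates, at least one has a parity-check neighbor whose other $\leq r$ symbols are unerased, and iterating completes the repair. Verifying this peeling property carefully, and confirming realizability of the Moore-extremal graphs over $\mathbb{F}_2$ (rather than only over larger alphabets), are the remaining technical items needed to close the proof.
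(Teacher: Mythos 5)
Your high-level intuition --- forbidden configurations, a graph whose edges are code symbols and whose vertices are parity checks, Moore-type formulas --- points at the right objects, and your sanity check is sound: when the degree-$(r+1)$ cage of girth $t+1$ exists, the binary code defined by its incidence matrix does achieve \eqref{eq:seq_rate_bound}. But the route you propose for the upper bound has a genuine gap at its central step. Sequential recovery from $t$ erasures does \emph{not} imply that the graph associated with the low-weight parity checks has girth $\geq t+1$: a short cycle among the selected weight-$\leq(r+1)$ checks need not yield an unrecoverable erasure pattern, because the erased symbols may be repairable through \emph{other} low-weight codewords of the dual or through a different ordering. The girth-$\geq(t+1)$, degree-$(r+1)$ ``tree-like graph'' characterization is a \emph{conclusion} of the proof in \cite{BalKinKum_ISIT,BalKinKum}, valid only for codes meeting the bound with equality --- it is not a hypothesis one can extract from an arbitrary S-LR code (whose restricted parity-check matrix need not even have constant column weight $2$, so the ``edges $=$ symbols'' graph is not defined). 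Moreover, even granting a girth bound, the Moore bound lower-bounds the number of \emph{vertices} of a graph of given degree and girth; it does not upper-bound $k/n$. For any $(r+1)$-regular incidence structure the ratio of checks to symbols is $2/(r+1)$ regardless of girth, giving rate $\frac{r-1}{r+1}+\frac{1}{n}$, which \emph{exceeds} \eqref{eq:seq_rate_bound} for small $n$; ruling out such small codes (regular or not) is exactly the content of the theorem, and it is obtained in \cite{BalKinKum} by deriving the sparse staircase block structure of $H$ displayed after the theorem directly from the sequential-recovery property and then counting rows, columns and ranks of the blocks $D_i, A_i$ --- a linear-algebraic argument, not a girth argument.

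The construction half of your plan also cannot be completed as stated: Moore cages of degree $r+1$ and girth $t+1$ exist only for $t\in\{2,3,4,5,7,11\}$ when the girth is even, and essentially never for odd girth $\geq 5$ (so for most even $t$), as the survey itself notes in connection with \cite{RawMazVis}. The matching binary constructions of \cite{BalKinKum,BalKinKum_ISIT,BalKinKum_NCC} are instead built from tree-like graphs of girth $\geq t+1$ in which \emph{most but not all} vertices have degree $r+1$; these exist for all $(r,t)$ and realize the exact constants in \eqref{eq:seq_rate_bound}, whereas exactly regular graphs give only $\frac{r-1}{r+1}+\frac{1}{n}$. Your peeling argument for the recovery property of a girth-$\geq(t+1)$ incidence code is fine as far as it goes, but it must be run on these tree-like graphs rather than on generalized polygons.
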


The rate bound given in equation \eqref{eq:seq_rate_bound} proves a conjecture given in \cite{SongCaiYue_L} for maximum achievable rate of an $(n,k,r,t)$ S-LRC. 
The proof of the bound \eqref{eq:seq_rate_bound} given in \cite{BalKinKum_ISIT,BalKinKum}, shows that a code achieving the above rate bound must have a parity check matrix  (upto a permutation of rows and columns) with a specific, sparse, staircase structure. An example of this for the case $t=8$ is shown below. 
\bean
H = \left[ \begin{array}{c|c|c|c|c}
		D_0 & A_1 & 0 & 0 & 0\\ \hline
		0 & D_1 & A_2 & 0 & 0\\ \hline
		0 & 0 & D_2 & A_3 & 0\\ \hline
		0 & 0 & 0 & D_3 & C\\ 
\end{array}	\right]	
\eean
Therefore, it can be shown that a binary code achieving the rate bound  \eqref{eq:seq_rate_bound} must be based on a tree-like graph with girth $\geq t+1$ with  degree $r+1$ for most nodes, where each edge of the graph represents a code symbol and each node represents a parity check of the code symbols incident on it. 
Codes achieving the rate bound \eqref{eq:seq_rate_bound} appeared in \cite{BalKinKum,BalKinKum_ISIT,BalKinKum_NCC} and are based on constructing these tree-like graphs with girth $\geq t+1$.

We note that a construction of codes based on $(r+1)-$regular bipartite graphs having girth $t+1$ and achieving rate close to \eqref{eq:seq_rate_bound} was suggested earlier in \cite{RawMazVis}. It was noted that these codes have rate $\geq \frac{r-1}{r+1}$.   It is not hard to show that these codes have rate equal to  $\frac{r-1}{r+1} + \frac{1}{n}$, see \cite{BalKinKum_ISIT}.  For certain $n$, the resultant codes achieve the rate bound in \eqref{eq:seq_rate_bound}.  However these values of $n$ correspond to the existence of Moore graphs of degree $r+1$, and girth = $t+1$ with that number $n$ of edges. For $r \geq 2$, Moore graphs exist only for $t \in \{2,3,4,5,7,11\}$ (see \cite{DynCageSur}).  
%
%
%

\begin{figure}[ht!]
	\centering
	\includegraphics[width=3.8in]{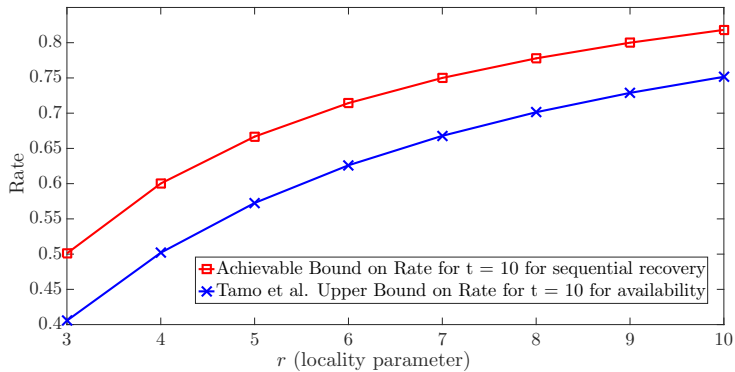}
	\caption{Comparison of rate bounds on codes with sequential recovery \eqref{eq:seq_rate_bound} and codes with availability \eqref{TamoBargRate} for $t=10$.}
	\label{fig:Comparison_Tamo}
\end{figure}

In Fig \ref{fig:Comparison_Tamo}, we compare the tight bound in \eqref{eq:seq_rate_bound} on the rate of an S-LR code with the upper bound in \eqref{TamoBargRate}, due to Tamo et al. on the rate of a code with availability.   The plots suggest that codes with sequential recovery offer a significant rate advantage.  

%

\subsection{$(r,\delta)$ Codes}

The (Singleton) bound on the minimum distance of a code with $(r,\delta)$ locality was presented above in \eqref{Singleton}.  We collect together in this subsection, other results on this class of codes that have appeared in the literature. 

\subsubsection{Constructions and Characterization of Distance Optimal $(r,\delta)$ Codes}

We focus here only on optimal constructions having low field size. A construction achieving Singleton bound with field size of $O(n)$ based on parity splitting appears in \cite{PraKamLalKum} for a restricted set of parameters. A detailed investigation of codes which achieve the Singleton bound on minimum distance of a code with $(r,\delta)$ locality for all symbols appears in \cite{SonDauYueLi} (see in particular, Fig. 2 of \cite{SonDauYueLi} which provides a characterization of the existence of codes achieving the Singleton bound). In \cite{TamBar_LRC}, a construction of codes achieving \eqref{Singleton} with field size $O(n)$ for the case $(r+\delta-1) | n$ is provided.  A construction of cyclic codes with $(r,\delta)$ locality achieving the bound \eqref{Singleton} for $(r+\delta-1) | n$ and field size of $O(n)$ appears in \cite{BinShuJieFan}. 
\subsubsection{$(r,\delta)$ Codes with Small Alphabet Size}

\noindent {\em Upper Bounds on Dimension:} Several alphabet-size dependent bounds on dimension for a code with $(r,\delta)$ AS locality and given minimum distance $d_{\min}$ appear in \cite{AgaBargHuMazTam}.  The bounds take on the form:
\bean
k \leq \left(\left \lceil{\frac{n-d+1}{r+\delta-1}} \right \rceil + 1 \right) \log_q(B(r+\delta-1,\delta)),
\eean
where $B(r+\delta-1,\delta)$ is an upper bound on the number of codewords in a code of block length $(r+\delta-1)$ and minimum distance $\delta$ and is log-convex in the block length. The different bounds are obtained by substituting various bounds for $B(r+\delta-1,\delta)$.  The authors also present bounds for disjoint local codes derived based on association schemes and linear programming which provide the tightest-known bounds in the literature on codes with $(r,\delta)$ locality with disjoint local codes. 

\noindent {\em Binary Codes with $(r,\delta)$ locality:}
In \cite{HaoXiaChe2}, distance-optimal (codes achieving the Singleton bound) binary codes are characterized and the authors of \cite{HaoXiaChe2}, prove that there are only 2 classes of binary, distance-optimal codes for $\delta > 2$. They make use of the fact in their proof that since the code is binary and achieves the Singleton bound on minimum distance, the code after shortening a sufficient number of selected symbols must be an $[\ell,1,\ell]$ MDS code for some $\ell <n$. 

\subsubsection{Achievability Results on Asymptotic Rate}
In \cite{BarTamVla} the following GV-type bound is derived:
\bean
R^q(r,\delta,\Delta) \geq \frac{r}{r+\delta-1} - \min_{0 <s \leq 1} \left(\frac{\log_q(b_{\delta}(s))}{r+\delta-1} - \Delta \log_q(s) \right),
\eean
where $\Delta$ denotes the fractional minimum distance and $\delta$ is the parameter associated with $(r,\delta)$ locality and where 
\bean
b_{\delta}(s)=1+(q-1)\sum_{w=\delta}^{r+\delta-1} {r+\delta-1 \choose w} s^w q^{w-\delta} \sum_{j=0}^{w-\delta} {w-1 \choose j} (-q)^{-j}.
\eean
A second lower bound, based on a construction appearing in \cite{BarTamVla} applies whenever $r+\delta-1=\sqrt{q}$ and improves upon the above GV-type bound in some parameter range: 
$R^q(r,\delta,\Delta) \geq \frac{r}{r+\delta-1} (1-\Delta-\frac{3}{\sqrt{q}+1}). $

\subsection{Codes with Hierarchical Locality}

\begin{figure}[ht!]
	\begin{center}
	\includegraphics[width=3.5in]{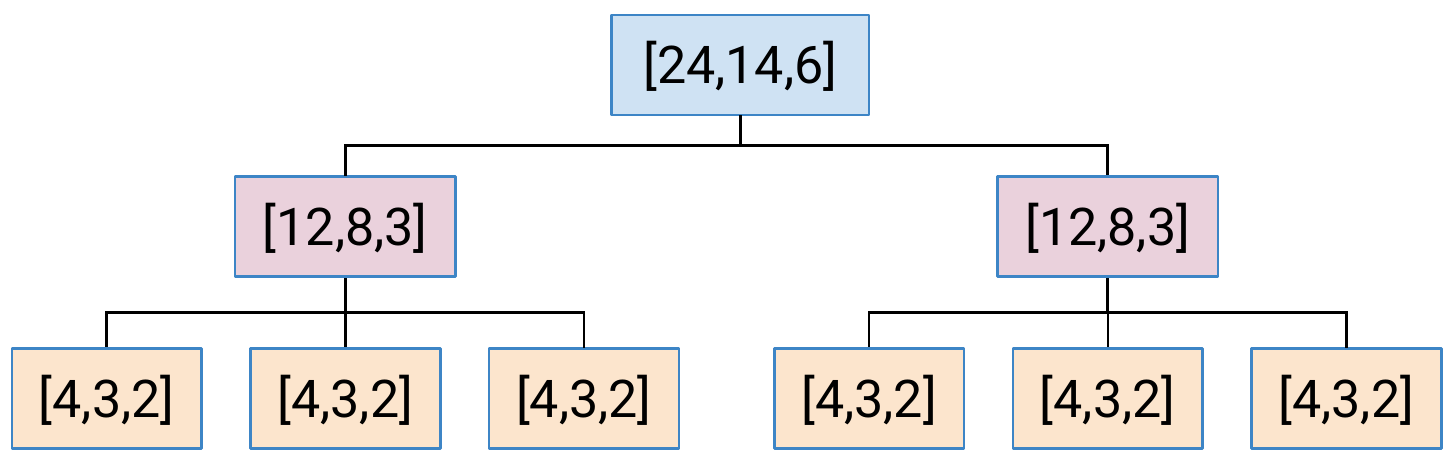}
	\caption{Illustration of a code with hierarchical locality. Each code symbol is protected by a $[4,3,2]$ local code. Each local code is contained in a $[12,8,3]$ middle code. \label{fig:chl}}  
\end{center}
\end{figure}
Codes with hierarchical locality are codes proposed in \cite{SasAgaKum_loc} having multiple tiers of locality.  We restrict the discussion for simplicity here to $2$ tiers.  The motivation here is that in a code with $2$-tier locality, the higher probability single-erasure event can be repaired with the help of a short local code, while the lower-probability, multiple-erasure event can be handled by accessing a larger number of symbols from the next level local code, termed here  as the `middle' code.  A hierarchical topology of local codes as illustrated by the example shown in Fig.~\ref{fig:chl} is proposed in \cite{SasAgaKum_loc} and a bound on the minimum distance derived for the general case. The bound for a two-level hierarchy is presented below.

\bthm \label{thm:bound} Let $\mathcal{C}$ be an $[n,k,d]$-linear code with hierarchical locality with the local and middle codes having  dimensions at most $r_1, r_2$ respectively, and minimum distances at least $\delta_1$, $\delta_2$ respectively. Then
\bea \label{eq:bound}
d \leq n-k+1 - \left( \left\lceil \frac{k}{r_2} \right\rceil -1 \right)(\delta_2 -1) - \left(\left\lceil \frac{k}{r_1} \right\rceil -1 \right)(\delta_1 - \delta_2).
\eea
\ethm
Optimal constructions are provided in \cite{SasAgaKum_loc,BalBar}.  We note that in the context of a practical distributed-storage system, the authors in \cite{DumBie} had previously suggested the topology of hierarchical codes and compared hierarchical codes with Reed-Solomon codes in terms of repair-efficiency using real data. 

\subsection{LR Code with Cooperative Recovery (C-LR code)}
Let $d_{\min}(n,k,r,t)$ be the maximum possible minimum distance of a C-LR code with parameters $(n,k,r,t)$. In \cite{RawMazVis}, the authors introduce the notion of cooperative local repair and provide the following bound on minimum distance for both linear as well as non-linear codes:
\bean
d_{\min}(n,k,r,t) \leq n-k+1-t \left \lfloor \frac{k-t}{r} \right \rfloor .
\eean
They also give a second bound for $r \geq t$.  The paper also contains the following alphabet-size dependent bound on dimension:
\bean
k \leq \min_{\gamma \leq \min(\lfloor \frac{n}{r+t} \rfloor,\lfloor \frac{k-1}{r} \rfloor)} rt+ \log_q(A_q(n-\gamma (r+t),d)),
\eean
where $A_q(n,d)$ is the maximum size of a $q$-ary code of block length $n$ and minimum distance $d$.\\


\begin{open}[\textbf{Codes for Multiple Erasures}]
	
	\begin{enumerate}
		\item[]
		\item For a given $(n,k,r,\delta)$, what is the maximum achievable minimum distance of codes having $(r,\delta)$ locality for a given constraint on field size ? 
		\item For a given $(n,k,r)$, what is the minimum field size over which we can construct a code with locality ($\delta=2$) meeting the Singleton bound ? 
		\item The construction of codes with locality ($\delta=2$) over a field $\mathbb{F}_q$ of size $q=O(1)$ for a larger range of $(d_{min},r)$ (say large $d_{min},r$) which are $d_{min}$ optimal over $\mathbb{F}_q$.
		\item The construction of MR codes with smaller field size for a wide range of parameters. 
		\item What is the maximum achievable rate $\frac{k}{n}$ for a given $(r,t)$ of codes with availability and C-LR codes ?
		\item For a given $(n,k,r,t)$, what is the maximum achievable minimum distance of a S-LR code, a code with availability, or a C-LR code ?
		\item Questions 5 and 6 when restricted to a finite field $\mathbb{F}_q$.
		\item All the above questions on minimum distance can be rephrased as a question on maximum achievable dimension for a given $(n,d_{\min},r,t)$ over a finite field $\mathbb{F}_q$.
	\end{enumerate} 
\end{open}

\section{Locally Regenerating Codes}  \label{sec:lrgc} 

As is clear from the discussion in the preceding sections, while RG codes aim to minimize the repair bandwidth, LR codes focus in keeping the repair degree low.  It is natural to ask if it is possible to construct codes that possess both low repair bandwidth and repair degree.  
The class of Locally Regenerating (LRG) codes introduced independently in \cite{KamPrakLalKumLRGC14} and \cite{RawKoySilbVishLRGC12}, answers this question in the affirmative.  These codes are perhaps best viewed as codes with locality in which the local codes are regenerating codes.   

\subsection{Locality in Vector Codes}

We begin by studying the notion of locality in a vector code, i.e., a code over a vector alphabet.  Let $\mathcal{C}$ be an $[[n,K,d_{\text{min}},\alpha]]$ vector code over the vector alphabet $\mathbb{F}_q^\alpha$ having block length $n$ and minimum Hamming distance $d_{\min}$.  Let $K$ be the dimension of the code viewing the code as a vector space over $\mathbb{F}_q$.  Let $\mathcal{C}_s$ be the scalar code of length $n\alpha$ obtained from $\mathcal{C}$ by replacing each vector symbol by the corresponding $\alpha$ scalar symbols. Let $\mathbf{G}$ be a generator matrix for $\mathcal{C}_s$, where the first $\alpha$ columns correspond to the first vector code symbol of $\mathcal{C}$ and so on. For $1\leq i\leq n$, we use the terminology $i$-th {\it thick column} to denote the set of columns $[(i-1)\alpha+1, i\alpha]$ of $\mathbf{G}$ corresponding to the $i$-th vector code symbol of $\mathcal{C}$. Clearly, the scalar code $\mathcal{C}_s$ has dimension $K$. 

For a subset $S\subseteq [n]$, of indices, let $\mathcal{C}|_S$ denote the vector code obtained by restricting the code $\mathcal{C}$ to the thick columns associated with the indices in $S$.  We similarly define $\mathbf{G}|_S$ to be the restriction of $\mathbf{G}$ to the thick columns associated to $S$.  The definition below is a natural extension of the notion of locality to a code over  vector alphabet. 

\begin{defn}For $i\in[0,n-1]$ and $\delta\geq 2$, the $i$-th vector code symbol of $\mathcal{C}$ is said to have {\it $(r,\delta)$ locality} if there exists a set $S_i\subseteq [n]$ such that $i\in S_i$, $|S_i|\leq r+\delta-1$ and  $d_{\text{min}}(\mathcal{C}|_{S_i})\geq \delta$. The restriction of $\mathcal{C}$ to $S$, i.e., code  $\mathcal{C}|_{S_i}$ will be referred to as the {\it local code associated to $S_i$}.
\end{defn}

\begin{defn} A vector code $\mathcal{C}$ is said to have {\it $(r,\delta)$ information-symbol locality} if there exists $\mathcal{I}\subseteq [n]$ such that $\text{rank}(\mathbf{G}|_{\mathcal{I}})=K$ and the $i$-th vector code symbol of $\mathcal{C}$ has $(r,\delta)$ locality for all $i\in\mathcal{I}$.
	
\end{defn}
$\mathcal{C}$ is said to have {\it$(r,\delta)$ all-symbol locality} if $\mathcal{I}$ can be set to be $[n]$ in the definition above. If for a code having $(r,\delta)$ all-symbol locality,  $S_i=S_j$ or $|S_i\cap S_j|=0$, for all $i\neq j$, then the code is said to have {\it disjoint locality}.

\begin{defn}
	An $[[n,K,d_\text{min},\alpha]]$ vector code $\mathcal{C}$ is said to have the {\it Uniform Rank Accumulation (URA)} property if there exists a sequence $\{a_i\}_{i=1}^{n}$ of non-negative integers satisfying: (i) $a_1=\alpha$ (ii) $\text{rank}(\mathbf{G}|_{\mathcal{I}})=\sum_{j=1}^ia_i$, $\forall \mathcal{I}\subseteq [n]:|\mathcal{I}|=i$. The integer sequence $\{a_i, i \in [n]\}$ is referred to as the {\it rank profile} of $\mathcal{C}$.
\end{defn}

\begin{rem}\label{rem:msr_mbr_rankprofile}
	It is shown in \cite{ShahHBTNonexistence13} that both MSR and MBR codes possess the URA property.  The rank profile in the case of $((n,k,d),(\alpha,\beta),K)$ MSR, MBR codes, are respectively given by: 
	\begin{equation*}
		\underbrace{a_i}_{\text{MSR}} = \begin{cases}
			\alpha &  1\leq i\leq k\\
			0 & (k+1)\leq i\leq n
		\end{cases}, \ \ \ \
		\underbrace{a_i}_{\text{MBR}} = \begin{cases}
			\alpha-(i-1)\beta &  1\leq i\leq k\\
			0 & (k+1)\leq i\leq n
		\end{cases}.
	\end{equation*}
	
\end{rem}

\begin{defn}\label{defn:URA_loc}
	An $[[n,K,d_\text{min},\alpha]]$ vector code $\mathcal{C}$ is said to have {\it URA locality}, if the code  has either information or all-symbol locality and if in addition, local codes are $[[n_\ell,K_\ell,d_\ell,\alpha]]$ vector codes having the URA property with identical rank profiles.
\end{defn}

Consider the vector code $\mathcal{C}$ having  URA locality with parameters as in Definition \ref{defn:URA_loc}.  The rank profile for any given $[[n_\ell,K_\ell,d_\ell,\alpha]]$ local code is denoted by $\{a_i\}_{i=1}^{n_\ell}$. Let  $\{b_i\}_{i=1}^\infty$ be a periodic sequence, where $b_i=a_i$ for $1\leq i\leq n_\ell$ and $b_{n_\ell+j}=b_{j}$ for $j\geq 1$. Define
$P(s)\triangleq\sum_{i=1}^sb_i:\ s\geq1$. For $x\geq 1$, set $P^{\text{(inv)}}(x)$ to be the smallest integer $y$ such that $P(y)\geq x$, i.e.,  $P^{\text{(inv)}}(x)=y$.
\begin{thm}[\cite{KamPrakLalKumLRGC14}] \label{thm:URA_min_dist_bound}
	Let $\mathcal{C}$ be an $[[n,K,d_\text{min},\alpha]]$ code with URA locality, where the local codes have parameter set $[[n_\ell,K_\ell,d_\ell,\alpha]]$. Then, we have  
	$d_{\text{min}}(\mathcal{C}) \leq n -P^{(\text{inv})}(K) + 1.$
\end{thm}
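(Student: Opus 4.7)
The strategy is to generalize the greedy, rank-counting argument employed in the proof of Theorem~\ref{thm:lrcdminbound}. The key reduction is provided by Lemma~\ref{lem:n_minus_s} applied to the scalar code $\mathcal{C}_s$ of length $n\alpha$ generated by $\mathbf{G}$: if I can exhibit a set $T \subseteq [n]$ of \emph{thick} column indices such that $\text{rank}(\mathbf{G}|_T) \leq K-1$, then there is a nonzero codeword of $\mathcal{C}_s$ supported entirely outside $T$, and this codeword corresponds to a nonzero codeword of $\mathcal{C}$ of vector Hamming weight at most $n - |T|$; hence $d_{\min}(\mathcal{C}) \leq n - |T|$. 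My objective therefore reduces to producing such a $T$ with $|T| \geq P^{(\text{inv})}(K) - 1$.

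I would construct $T$ by a greedy iteration. Starting from $T = \emptyset$, at each step pick a local code $\mathcal{C}|_{S_j}$ whose support is not yet contained in $T$ and adjoin the coordinates of $S_j \setminus T$ to $T$, fully absorbing each local code into $T$ before starting a new one. The URA hypothesis on the local codes, combined with submodularity of rank, yields a clean upper bound on the accumulated rank: by URA, any $j$ columns of a single local code contribute rank exactly $\sum_{u=1}^{j} a_u$, independently of \emph{which} $j$ columns are chosen, and so by submodularity the marginal rank increase of these columns over any previously included $T$ is at most this same quantity. Consequently, if $T$ currently consists of $q$ fully-included local codes together with $r$ coordinates from a partial local code, so that $|T| = q n_\ell + r$, then $\text{rank}(\mathbf{G}|_T) \leq q K_\ell + \sum_{u=1}^{r} a_u = P(|T|)$ by the definition of $P$. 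The greedy process may therefore be continued as long as $P(|T|) < K$, and by the definition of $P^{(\text{inv})}$ this permits reaching $|T| = P^{(\text{inv})}(K) - 1$ with $\text{rank}(\mathbf{G}|_T) < K$. Invoking the reduction of the first paragraph then yields the desired bound $d_{\min}(\mathcal{C}) \leq n - P^{(\text{inv})}(K) + 1$.

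The principal obstacle in formalizing this plan is guaranteeing feasibility of the greedy step all the way up to the target size. One must show that at each iteration a local code $\mathcal{C}|_{S_j}$ with $S_j \not\subseteq T$ actually exists: for all-symbol locality with pairwise disjoint local codes this is immediate whenever $|T| < n$, but for information-symbol locality or for overlapping local-code supports, the argument is more delicate---one typically reasons by contradiction that if no such local code were available while the target had not yet been reached, then $T$ would contain an information set, contradicting $\text{rank}(\mathbf{G}|_T) < K$. A subsidiary technical point is the validity of the rank bound $P(|T|)$ when the partial local code at termination overlaps earlier iterations; here the URA property (which makes the rank contribution of a local code depend only on the \emph{number} of its chosen columns) combined with the inequality $\text{rank}(A \cup B) - \text{rank}(A) \leq \text{rank}(B)$ keeps the bound intact regardless of the ordering chosen within the partial local code.
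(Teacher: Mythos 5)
First, a point of reference: the paper states this theorem without proof, citing \cite{KamPrakLalKumLRGC14}, so there is no in-paper argument to compare against. Your architecture is the right one and matches the standard approach: a vector-code analogue of Lemma~\ref{lem:n_minus_s} applied to the thick columns of $\mathbf{G}$, a greedy accumulation of local codes, and a rank-versus-size bookkeeping driven by the URA property; your feasibility argument (if no further local code is available, $T$ contains an information set, contradicting $\text{rank}(\mathbf{G}|_T)<K$) is also sound. When the local codes have pairwise disjoint supports, your proof is complete.

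The gap is in the invariant $\text{rank}(\mathbf{G}|_T)\leq P(|T|)$ when local codes overlap, which the definition of URA locality does not exclude. Under overlaps the identification $|T|=qn_\ell+r$ is false ($|T|$ is strictly smaller), so the quantity you bound the rank by is $P(qn_\ell+r)$, which exceeds the $P(|T|)$ you actually need. The incremental version fails too: adding $m=|S_j\setminus T|$ new thick columns to a set of size $s$ increases $P$ by $\sum_{i=s+1}^{s+m}b_i$, a window of $m$ consecutive terms of the periodic sequence starting mid-period whenever $s\not\equiv 0\pmod{n_\ell}$. Since the rank profile of a URA code is non-increasing (strictly decreasing for MBR local codes, by Remark~\ref{rem:msr_mbr_rankprofile}), such a window sums to \emph{less} than your marginal-rank bound $\sum_{u=1}^{m}a_u$, so $\text{rank}\leq P(s)+\sum_{u=1}^{m}a_u$ does not yield $\text{rank}\leq P(s+m)$; already with three pairwise-overlapping local codes the required inequality reduces to statements of the form $a_3\leq a_5$, which fail for a strictly decreasing profile. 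The repair is to use the sharper submodularity bound $\text{rank}(\mathbf{G}|_{T\cup S_j})-\text{rank}(\mathbf{G}|_{T})\leq \text{rank}(\mathbf{G}|_{S_j})-\text{rank}(\mathbf{G}|_{T\cap S_j})=K_\ell-\sum_{u=1}^{|T\cap S_j|}a_u=\sum_{u=n_\ell-m+1}^{n_\ell}a_u$ — the \emph{tail} sum of the rank profile, obtained by applying URA to the intersection $T\cap S_j\subseteq S_j$, which is where URA does real work — combined with the observation that $P(y)$ is the maximum (and the tail sum the minimum) of the sums of $y$ consecutive terms of $\{b_i\}$, so that the per-step increments add up to at most $P(|T|)$. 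Handling the final, partially absorbed local code under overlap needs the same intersection-based accounting; invoking only $\text{rank}(A\cup B)-\text{rank}(A)\leq\text{rank}(B)$, as you do, is not enough.
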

\begin{cor}
	Consider the case of a vector with locality, where the local codes are $((n_\ell,r,d)$, $(\alpha,\beta),K_\ell)$ MSR codes. Using Remark \ref{rem:msr_mbr_rankprofile} and Theorem \ref{thm:URA_min_dist_bound}, it follows that \cite{KamPrakLalKumLRGC14} \cite{RawKoySilbVishLRGC12}:
	\begin{equation*}\label{eq:msr_locality_bound}
		d_{\text{min}}(\mathcal{C})\leq n-\bigg\lceil\frac{K}{\alpha}\bigg\rceil+1-\bigg(\bigg\lceil\frac{K}{\alpha r}\bigg\rceil-1\bigg)(\delta-1).
	\end{equation*}
\end{cor}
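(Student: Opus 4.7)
The plan is to directly invoke Theorem~\ref{thm:URA_min_dist_bound} and simply compute $P^{(\text{inv})}(K)$ for the specific rank profile of an MSR local code. First I would record the parameters of the local code. Since each local code is an $((n_\ell, r, d), (\alpha, \beta), K_\ell)$ MSR code, it is an MDS code over the vector alphabet $\mathbb{F}_q^\alpha$ with dimension $r$ and minimum distance $\delta$; by the Singleton bound this forces $n_\ell = r + \delta - 1$. From the MSR entry in Remark~\ref{rem:msr_mbr_rankprofile}, the rank profile of each local code is $a_i = \alpha$ for $1 \leq i \leq r$ and $a_i = 0$ for $r+1 \leq i \leq r+\delta-1$. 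In particular, all local codes share this profile, so the code $\mathcal{C}$ does have URA locality in the sense of Definition~\ref{defn:URA_loc}, and Theorem~\ref{thm:URA_min_dist_bound} applies.

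Next I would evaluate $P^{(\text{inv})}(K)$ using the periodic extension $\{b_i\}$ of the above profile, which consists of blocks $(\underbrace{\alpha, \ldots, \alpha}_{r}, \underbrace{0, \ldots, 0}_{\delta-1})$ repeated indefinitely. To make $P(s) = \sum_{i=1}^s b_i \geq K$ we need at least $j := \lceil K/\alpha \rceil$ nonzero terms among $b_1, \ldots, b_s$. Writing $j = (m-1)r + t$ with $1 \leq t \leq r$, the smallest such $s$ is obtained by taking $m-1$ full periods plus $t$ additional positions in the next period, giving
\begin{equation*}
P^{(\text{inv})}(K) \;=\; (m-1)(r+\delta-1) + t \;=\; j + (m-1)(\delta-1).
\end{equation*}
Here $m = \lceil j/r \rceil = \lceil \lceil K/\alpha\rceil / r\rceil = \lceil K/(\alpha r)\rceil$, where the last equality is the standard nested-ceiling identity for positive integers. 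Therefore
\begin{equation*}
P^{(\text{inv})}(K) \;=\; \Big\lceil \tfrac{K}{\alpha} \Big\rceil + \Big(\Big\lceil \tfrac{K}{\alpha r}\Big\rceil - 1\Big)(\delta - 1).
\end{equation*}

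Finally, substituting this into the bound $d_{\min}(\mathcal{C}) \leq n - P^{(\text{inv})}(K) + 1$ from Theorem~\ref{thm:URA_min_dist_bound} yields exactly the claimed inequality. I do not expect any serious obstacle: the only nontrivial point is verifying the nested-ceiling simplification $\lceil \lceil K/\alpha\rceil / r\rceil = \lceil K/(\alpha r)\rceil$, which is standard, and confirming that the sequence $b_i$ produced from the MSR profile indeed has the simple block-of-$\alpha$-then-zeros structure that makes the combinatorial count above exact.
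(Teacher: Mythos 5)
Your proposal is correct and is exactly the computation the paper intends (the corollary is stated in the paper as an immediate consequence of Remark~\ref{rem:msr_mbr_rankprofile} and Theorem~\ref{thm:URA_min_dist_bound}, with the evaluation of $P^{(\text{inv})}(K)$ left implicit). Your identification of the local code length $n_\ell = r+\delta-1$, the block structure of the periodic sequence $\{b_i\}$, and the nested-ceiling simplification $\lceil \lceil K/\alpha\rceil / r\rceil = \lceil K/(\alpha r)\rceil$ are all the right details to fill in.
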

In \cite{KamPrakLalKumLRGC14}, the authors give minimum-distance bounds for general vector codes with locality and a tighter bound for the case when the local codes have the URA property. LRG codes with MSR or MBR all-symbol locality, and information-symbol locality that meet the minimum-distance bound, are provided for various parameters. The field-size requirement here is at least $O(n^2)$ for the all-symbol locality code constructions.  In \cite{RawKoySilbVishLRGC12}, the authors present an explicit construction of a vector code with MSR all-symbol locality, that requires a field-size that is exponential in $n$. In \cite{GligKralJenSimHashTagLRGC17}, the authors construct a related family of vector codes with information-symbol locality, where the local codes are vector MDS codes with near-optimal bandwidth and small sub-packetization ($\alpha$) levels.  In \cite{Hollmann14}, \cite{AhmWanLocRepMeetRGC15}, the authors consider vector codes with locality featuring functional repair and achieving a reduction in repair bandwidth by carefully choosing for each failed node, a set of $r\leq k$ helper nodes. In \cite{KriNarKumLRGC18}, the authors provide linear, field-size constructions for LRG codes with all-symbol locality, where the local codes are either MSR or MBR.

\subsection{Codes where local codes are MSR/MBR}

It is possible to construct LRG codes which are minimum-distance optimal where the local codes are MSR or MBR using the Tamo-Barg (T-B) construction of optimal scalar LR codes. 

\begin{eg}[\cite{KamPrakLalKumLRGC14}]
	An LRG code $\mathcal{C}$ having parameters $[[n=15,K=20,d_{\text{min}}=5,\alpha=4]]$ where the local codes are $((n_\ell=5,r=3,d=4),(\alpha=4,\beta=1),K_\ell=9)$ MBR codes, can be constructed as follows. Let $N_\ell\triangleq{n_\ell \choose 2}=10$, $\delta'=N_\ell-K_\ell+1=2$, $\nu=\frac{n}{n_\ell}=3$. Take a minimum-distance optimal $[\nu N_\ell=30,K=20,9]$ scalar T-B code $\mathcal{C}'$ with $(K_\ell=9,\delta'=2)$ all-symbol locality. Note that each local code of $\mathcal{C}'$ is a $[N_\ell=10,K_\ell=9]$ MDS code. The LRG code  with the required parameters is obtained by mapping each such local MDS code to an MBR code, using the polygonal MBR construction. The resultant code (see Fig.~\ref{fig:gen_example_n_eq_8}) is shown to be minimum-distance optimal in \cite{KamPrakLalKumLRGC14}.
	\begin{figure}
		\centering
		\includegraphics[width=5.5in]{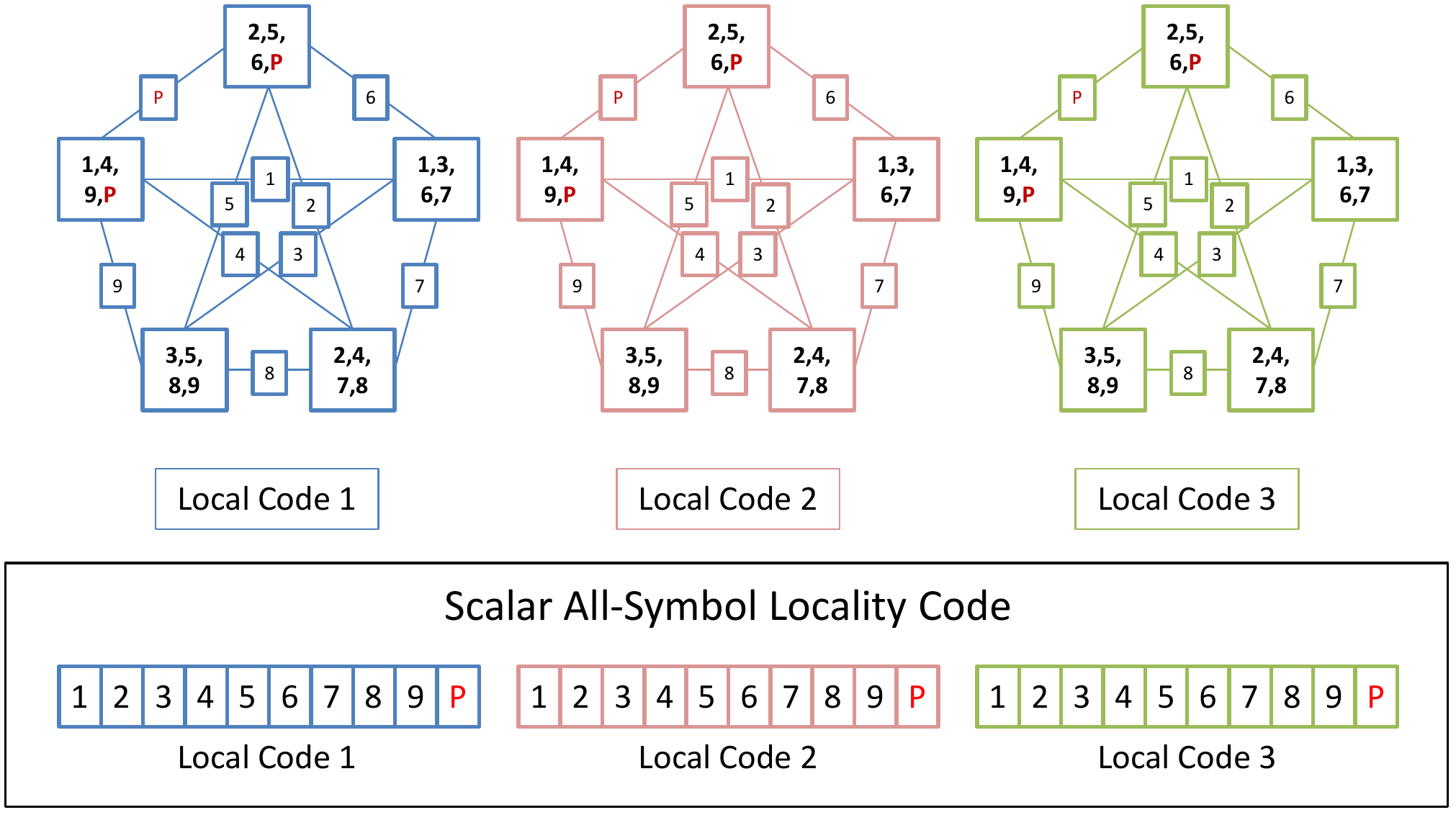}
		\caption{An $[[n=15,K=20,d_{\text{min}}=5,\alpha=4]]$ LRG code $\mathcal{C}$ where local codes are MBR codes. Here the local codes are $((n_\ell=5,r=3,d=4),(\alpha=4,\beta=1),K_\ell=9)$ MBR codes.}
		\label{fig:gen_example_n_eq_8}
	\end{figure}
\end{eg}

\begin{eg}[\cite{KriNarKumLRGC18}]
	From the discussion in Section \ref{sec:TB}, it can be inferred that each local code in a T-B code is an MDS code. Let $(n_\ell-r)\mid r$ and $n_\ell\mid n$. In order to construct a code with MSR local regeneration, we initially stack $\alpha=(n_\ell-r)^{\frac{n_\ell}{n_\ell-r}}$ independent layers of codewords from an $[n,k,d_\text{TB}]$ T-B code with $(r,\delta)$ all-symbol locality. We then perform the pairwise forward transform (introduced in Section \ref{par:clay}) independently, for each local code. This results in an $[[n,K=k\alpha,d_\text{min},\alpha]]$ LRG code $\mathcal{C}$ where local codes are $((n_\ell,r,d),(\alpha,\beta),K_\ell)$ MSR codes, with $d=n_\ell-1$. Let $d_\text{TB}$ denote the (optimal) minimum-distance of the underlying T-B code. The code will be minimum-distance optimal if $d_\text{TB}\leq 2(n_\ell-r+1)$.
\end{eg}

\section{Repairing RS codes} \label{sec:rs_repair} 

The conventional repair of an $[n,k]$ scalar MDS code treats each code symbol as an indivisible unit and leads to a total repair bandwidth of $k$ times the amount of data stored in the failed node, where $k$ is the dimension of the code.  Over the past couple of years, new techniques have surfaced that present a different picture for the repair of scalar MDS codes, particularly for RS codes. These techniques realize that the code symbols (say, over $\mathbb{F}_q$) of a scalar MDS code can be viewed as vectors whose entries are over some subfield, $\mathbb{B}\subseteq\mathbb{F}_q$.  For example, consider the $[16,8]$ RS code obtained by evaluating message polynomials of degree $\leq 7$ over all the elements in $\mathbb{F}_{2^4}$. Under the traditional repair, the repair bandwidth will be $8$ code symbols over $\mathbb{F}_{2^4}$, which is equivalent to $32$ bits. As we will shortly see, it is possible to perform single-node repair in this instance, by downloading just $1$ bit from each of the fifteen surviving nodes. This results in a repair bandwidth of $15$ bits, which is a clear improvement over the $32$ bits downloaded under the conventional scheme. This line of work which vectorizes scalar MDS codes and performs repair operations over a suitable subfield $\mathbb{B}$ for bandwidth gains, began with the pioneering work of Shanmugam et al. \cite{ShanPapDimCaiMDSRepair14} who showed the existence of an efficient repair scheme for systematic node repair, when $k=n-2$, that improved up on the traditional repair bandwidth. In a  subsequent paper, Guruswami and Wootters \cite{GuruWootRSRepair17} consider Generalized Reed-Solomon (GRS) codes and all-node repair. There have been other papers since as well.  

Let $t$ be the degree of the field extension $[\mathbb{F}_q:\mathbb{B}]$.  Clearly, through vector representation over the subfield $\mathbb{B}$ of over $\mathbb{F}$, $t$ can be regarded as the sub-packetization level of the MDS code. Traditional RS codes have code lengths typically on the order of $|\mathbb{F}_q|$ corresponding to a sub-packetization level which is logarithmic in code-length. On the other hand, there are fundamental bounds (see Section \ref{sec:intro_msr}) that require the sub-packetization to be exponential in code length (for a fixed $r$) in order to achieve the cut-set bound. This leads to the natural and interesting question: what is the least possible repair bandwidth that can be achieved in a low-sub-packetization-level setting ?

\subsection{Linear Repair Schemes for Scalar MDS Codes}
In this section, we consider the single-node repair of linear, scalar, MDS codes over $\mathbb{F}_q$, where $q=p^t$ for $p$ a prime power and $t$ a positive integer. Let $\mathbb{B}$ be a subfield of $\mathbb{F}_q$ of size $|\mathbb{B}|=p$. In this setting, by linear repair scheme, we will mean that all repair operations correspond to linear operations over $\mathbb{B}$. For $i\in[n]$, let $b_i$ denote the least possible repair bandwidth (measured by the number of $\mathbb{B}$-symbols downloaded) to repair the $i$-th code symbol. The repair bandwidth $b$ is then defined as: $b\triangleq \max_{i\in[n]}b_i$.  In the discussion below, by dimension we will throughout mean dimension as a vector space over $\mathbb{B}$.
\begin{thm}[\cite{GuruWootRSRepair17}]\label{thm:rs_linear_repair}
	Let $\mathcal{C}$ be a scalar MDS code.  Then a linear repair scheme for $\mathcal{C}$ with repair bandwidth $b$ exists iff for each code coordinate $i\in[n]$, there exists a subset $\mathcal{A}_i$ of $t$ codewords in the dual code $\mathcal{C}^\perp$ such that:
	\bean
	\dim(<a_i, \ \underline{a}\in\mathcal{A}_i>)=t  \ \text{ \ \ and \ \ }  \  \max_{i\in[n]}\bigg(\sum_{j\in[n]\setminus i}\dim(<a_j, \ \underline{a}\in\mathcal{A}_i>)\bigg)\leq b.
	\eean
\end{thm}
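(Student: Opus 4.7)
The backbone of the argument is the trace map $\mathrm{Tr} = \mathrm{Tr}_{\mathbb{F}_q/\mathbb{B}}:\mathbb{F}_q \to \mathbb{B}$, which is $\mathbb{B}$-linear and non-degenerate. Via the bilinear pairing $(\mu,x) \mapsto \mathrm{Tr}(\mu x)$, every $\mathbb{B}$-linear functional $\phi:\mathbb{F}_q \to \mathbb{B}$ is uniquely of the form $\phi(x) = \mathrm{Tr}(\mu x)$ for some $\mu \in \mathbb{F}_q$, and under this identification the $\mathbb{B}$-span of a collection of such functionals has dimension equal to the $\mathbb{B}$-dimension of the $\mathbb{B}$-span of the corresponding $\mu$'s. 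I plan to prove both directions using this correspondence, together with the fact that a $\mathbb{B}$-linear scheme for repair is equivalent to computing, for each failed coordinate $i$, the values of a spanning family of $\mathbb{B}$-linear functionals of $c_i$ from $\mathbb{B}$-linear functionals of the helper symbols.

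For sufficiency, assume $\mathcal{A}_i = \{\underline{a}^{(1)},\ldots,\underline{a}^{(t)}\}\subseteq \mathcal{C}^\perp$ satisfies the two conditions. Applying the trace to each codeword relation $\sum_{j} a^{(\ell)}_j c_j = 0$ gives
\begin{equation*}
\mathrm{Tr}\bigl(a^{(\ell)}_i c_i\bigr) \;=\; -\sum_{j \neq i} \mathrm{Tr}\bigl(a^{(\ell)}_j c_j\bigr), \qquad \ell = 1,\ldots,t.
\end{equation*}
Because $\{a^{(\ell)}_i\}_\ell$ $\mathbb{B}$-spans $\mathbb{F}_q$, the $t$ functionals $x \mapsto \mathrm{Tr}(a^{(\ell)}_i x)$ span the entire $\mathbb{B}$-dual of $\mathbb{F}_q$, and so the left-hand side values uniquely determine $c_i$. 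The $t$ values $\mathrm{Tr}(a^{(\ell)}_j c_j)$ that helper $j$ must provide are $\mathbb{B}$-linear functionals of $c_j$ whose span has dimension $\dim_\mathbb{B}\langle a^{(\ell)}_j : \ell \in [t]\rangle$, so node $j$ transmits exactly that many $\mathbb{B}$-symbols. Summing over $j \neq i$ bounds the repair bandwidth by $b$.

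For necessity, consider any $\mathbb{B}$-linear repair scheme for coordinate $i$. By the trace identification, each helper $j$ effectively transmits the values of a family of $\mathbb{B}$-linear functionals indexed by some subspace $V_j \subseteq \mathbb{F}_q$, contributing $\dim_\mathbb{B} V_j$ symbols. Fix a basis $\{\alpha_1,\ldots,\alpha_t\}$ of $\mathbb{F}_q$ over $\mathbb{B}$; then the replacement node's computation of $c_i$ is equivalent to producing each $\mathrm{Tr}(\alpha_\ell c_i)$ as a $\mathbb{B}$-linear combination of the received functionals, which forces the existence of $\mu^{(\ell)}_j \in V_j$ satisfying
\begin{equation*}
\mathrm{Tr}\bigl(\alpha_\ell c_i\bigr) \;=\; \sum_{j \neq i}\mathrm{Tr}\bigl(\mu^{(\ell)}_j c_j\bigr), \qquad \forall\, \underline{c}\in \mathcal{C}.
\end{equation*}
Setting $a^{(\ell)}_i = \alpha_\ell$ and $a^{(\ell)}_j = -\mu^{(\ell)}_j$ for $j \neq i$ yields a vector $\underline{a}^{(\ell)} \in \mathbb{F}_q^n$ whose inner product with every codeword has vanishing trace.

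The main obstacle is to upgrade this trace-orthogonality to genuine $\mathbb{F}_q$-orthogonality, so that $\underline{a}^{(\ell)}$ truly lies in $\mathcal{C}^\perp$ and not merely in the larger trace-dual. Here I plan to exploit the $\mathbb{F}_q$-linearity of $\mathcal{C}$: since $\lambda\underline{c} \in \mathcal{C}$ for every $\lambda \in \mathbb{F}_q$ and every $\underline{c} \in \mathcal{C}$, the identity above implies $\mathrm{Tr}\bigl(\lambda \langle \underline{a}^{(\ell)},\underline{c}\rangle\bigr) = 0$ for all $\lambda$, and non-degeneracy of the trace then forces $\langle \underline{a}^{(\ell)},\underline{c}\rangle = 0$, giving $\underline{a}^{(\ell)}\in \mathcal{C}^\perp$. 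The two required properties of $\mathcal{A}_i = \{\underline{a}^{(\ell)}\}$ are then immediate: the $i$-th entries $\{\alpha_\ell\}$ $\mathbb{B}$-span $\mathbb{F}_q$, while for each $j \neq i$ the entries $\{a^{(\ell)}_j : \ell\}$ lie in $V_j$, so $\dim_\mathbb{B}\langle a^{(\ell)}_j : \ell\rangle \leq \dim_\mathbb{B} V_j$, and summing over $j \neq i$ gives a total bounded by the scheme's bandwidth $b$.
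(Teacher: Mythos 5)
Your proof of the ``if'' direction is the same as the paper's: apply the trace to the $t$ dual-codeword relations, note that the functionals $x\mapsto \mathrm{Tr}(a^{(\ell)}_i x)$ span the full $\mathbb{B}$-dual of $\mathbb{F}_q$ because the $i$-th coordinates of $\mathcal{A}_i$ have full $\mathbb{B}$-span, recover $c_i$ via the trace-dual basis, and observe that helper $j$ need only send a $\mathbb{B}$-basis-worth of trace values, i.e.\ $\dim_{\mathbb{B}}\langle a_j : \underline{a}\in\mathcal{A}_i\rangle$ symbols. Where you go beyond the paper is the ``only if'' direction, which the paper dismisses with ``it is easy to see the `if' part'' and never addresses. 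Your necessity argument is correct and is essentially the one in Guruswami--Wootters: the trace pairing identifies every $\mathbb{B}$-linear functional of $c_j$ with multiplication by some $\mu\in\mathbb{F}_q$ followed by $\mathrm{Tr}$, so the scheme yields vectors $\underline{a}^{(\ell)}$ with $\mathrm{Tr}(\langle \underline{a}^{(\ell)},\underline{c}\rangle)=0$ for all $\underline{c}\in\mathcal{C}$, and you correctly isolate the one nontrivial step --- promoting this trace-orthogonality to genuine membership in $\mathcal{C}^\perp$ --- and resolve it by replacing $\underline{c}$ with $\lambda\underline{c}$ for all $\lambda\in\mathbb{F}_q$ and invoking non-degeneracy of the trace. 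The bandwidth accounting ($a^{(\ell)}_j\in V_j$, hence $\sum_{j\neq i}\dim_{\mathbb{B}}\langle a^{(\ell)}_j:\ell\rangle\le\sum_{j\neq i}\dim_{\mathbb{B}}V_j\le b$) and the distinctness of the $t$ dual codewords (their $i$-th entries form a basis) both check out. In short: same approach as the paper on the half it proves, plus a correct completion of the half it omits.
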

It is easy to see the `if' part above.  The {\em trace function} from $\mathbb{F}_q$ to $\mathbb{B}$ is the $\mathbb{B}$-linear map  given by: $T(\gamma)=\sum_{m=0}^{t-1}\gamma^{p^m}$.   Given a basis $\{\rho_m\}_{m=1}^t$ for $\mathbb{F}_q$ over $\mathbb{B}$, it is known~\cite{McWSlo} that there always exists a second basis $\{\gamma_m\}_{m=1}^t$ for $\mathbb{F}_q$ over $\mathbb{B}$, termed the trace-dual basis of $\{\rho_m\}$, such that any $x\in \mathbb{F}_q$ can be expressed in the form $x=\sum_{m=1}^tT(x\rho_m)\gamma_m$.  Let $\mathcal{A}_i$ be as defined in Theorem \ref{thm:rs_linear_repair}. For $\underline{a}\in\mathcal{A}_i \subseteq \mathcal{C}^\perp$ and $\underline{c}\in\mathcal{C}$, we have that $c_ia_i=-\sum_{j=1,j\neq i}^{n}c_j a_j$. Hence 
\begin{equation}\label{eq:trace_equation}
T(c_ia_i)=-\sum_{j=1,j\neq i}^{n}T(c_j a_j), \ \ \text{ for } \underline{a}\in\mathcal{A}_i.
\end{equation}

The definition of $\mathcal{A}_i$ implies that $\dim(<a_i, \ \underline{a}\in\mathcal{A}_i>)=t$. Let $b_{ij}$ denote the dimension of the set $\{a_j\}_{\underline{a}\in\mathcal{A}_i}$ and let $\mathcal{B}_{ij}$ denote a basis for the vector space spanned by  $\{a_j\}_{\underline{a}\in\mathcal{A}_i}$. Using the $\mathbb{B}$-linearity of the trace function, it suffices to compute the $b_{ij}$ trace values $\{T(c_j x): x\in \mathcal{B}_{ij}\}$ which can be used to obtain  $\{T(c_j a_j): \underline{a}\in \mathcal{A}_i\}$. Hence by downloading $\sum_{j=1,j\neq i}^n b_{ij}$ symbols over $\mathbb{B}$, one can compute $\{T(c_ia_i):\underline{a}\in\mathcal{A}_i\}$ using \eqref{eq:trace_equation}. Using the trace-dual basis, $c_i$ can be reconstructed from these $t$ traces.

Next, consider the specific case of an $[n,k]$ GRS code $\mathcal{C}$, whose symbols are $n$ (scaled) evaluations of message polynomials $f(x) \in \mathbb{F}[x]$ of degree $\leq k-1$. Let the evaluation points be denoted by the set $\mathcal{A}=\{\alpha_j\}_{j=1}^n$. As the dual of a  GRS code is a GRS code, codewords in the dual are scaled evaluations of message polynomials of degree $\leq (n-k-1)$.  Thus in the context of a GRS code and ignoring w.o.l.o.g. the scaling coefficients, \eqref{eq:trace_equation} takes on the form:

\begin{equation}\label{eq:trace_equation_GRS}
T(f(\alpha_i)g(\alpha_i))=-\sum_{j=1,j\neq i}^{n}T(f(\alpha_j)g(\alpha_j)), \ \text{  for all } \ \ g(x)\in\mathcal{P}_i,
\end{equation}
where $f(x)$ and $g(x)$ are polynomials having degrees at most $k-1$ and $n-k-1$, respectively, $\mathcal{P}_i$ is the set of $t$ message polynomials having degree at most $n-k-1$ corresponding to the $t$ dual codewords in $\mathcal{A}_i$. 

\subsection{Guruswami-Wootters GRS Repair Scheme}
Let $k\leq n-p^{t-1}$ for a GRS code. Then it is possible repair each code-symbol (say, $i$-th) by downloading just one symbol over $\mathbb{B}$ each from the remaining $(n-1)$ nodes. The scheme is as follows. Consider the set of $t$ polynomials $\mathcal{P}_i=\{g_{i,m}(x)\}_{m=1}^t$ and a basis $\{\rho_m\}_{m=1}^t$ , where:
\begin{equation*}
	g_{i,m}(x)=\frac{T\big(\rho_m(x-\alpha_i)\big)}{(x-\alpha_i)}=\sum_{s=0}^{t-1}\rho_m^{p^s}(x-\alpha_i)^{p^s-1}.
\end{equation*}
Each polynomial $g_{i,m}(x)$ has degree $p^{t-1}-1\leq n-k-1$. Hence the evaluations of this polynomial represent a codeword in $\mathcal{C}^\perp$. Note that $\{g_{i,m}(\alpha_i)=\rho_m,m\in[t]\}$ forms a basis for $\mathbb{F}$ over $\mathbb{B}$, i.e., $\text{dim}_\mathbb{F}\langle\{g_{i,m}(\alpha_i),m\in[t]\}\rangle=t$.
Also, $\text{dim}_\mathbb{F}\langle\{g_{i.m}(\alpha_j),m \in[t]\}\rangle=1 ~\forall j\in [n]\setminus\{i\}$.
\begin{thm}\label{thm:mds_bandwidth_lb}
	Let $\mathcal{C}$ be an  $[n,k]$ MDS code over $\mathbb{F}_q$. For any linear repair scheme for $\mathcal{C}$ over $\mathbb{B}$, the repair bandwidth, $b$ (counted according to the number of symbols from $\mathbb{B}$) satisfies the following:
	\begin{equation*}
		b\geq (n-1)\log_{|B|}\Bigg(\frac{n-1}{n-k+\frac{k-1}{|F|}}\Bigg).
	\end{equation*}
\end{thm}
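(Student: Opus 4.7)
The plan is to leverage Theorem \ref{thm:rs_linear_repair}, which recasts a linear repair scheme as a choice of $\mathbb{B}$-subspaces of the dual code, then to exploit the MDS structure of $\mathcal{C}^\perp$ via a sharp double-counting argument, and finally to apply Jensen's inequality to pass from an arithmetic-mean constraint to the claimed logarithmic lower bound on bandwidth.

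Fix a coordinate $i \in [n]$ to be repaired. By Theorem \ref{thm:rs_linear_repair} there exists a set $\mathcal{A}_i$ of $t$ codewords of $\mathcal{C}^\perp$ whose $i$-th entries $\mathbb{B}$-span $\mathbb{F}_q$. Let $V_i = \text{span}_{\mathbb{B}}(\mathcal{A}_i) \subseteq \mathcal{C}^\perp$; since $\mathbb{B}$-independence of the $a_i$'s forces $\mathbb{B}$-independence of the $\underline{a}$'s, we have $\dim_{\mathbb{B}} V_i = t$. Set $b_{ij} = \dim_{\mathbb{B}} \pi_j(V_i)$, where $\pi_j$ is the $j$-th coordinate projection; then the bandwidth incurred in repairing node $i$ is $b_i = \sum_{j\neq i} b_{ij}$. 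Moreover, since $\dim_{\mathbb{B}} \pi_i(V_i) = t = \dim_{\mathbb{B}} V_i$, the map $\pi_i|_{V_i}$ is injective, so every nonzero $\underline{a} \in V_i$ satisfies $a_i \neq 0$.

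The key structural input is that $\mathcal{C}$ being $[n,k]$ MDS forces $\mathcal{C}^\perp$ to be $[n,n-k,k+1]$ MDS, so each nonzero $\underline{a} \in V_i$ has at most $n-k-1$ zero coordinates, all necessarily lying in $[n]\setminus\{i\}$. Letting $Z_j = \ker(\pi_j|_{V_i})$ so that $|Z_j| = p^{\,t-b_{ij}}$ with $p = |\mathbb{B}|$, double-counting the pairs $(\underline{a},j)$ with $\underline{a}\in V_i\setminus\{0\}$, $j \neq i$, and $a_j = 0$ yields
\[
\sum_{j\neq i}\bigl(p^{\,t-b_{ij}}-1\bigr) \ = \ \sum_{\underline{a}\in V_i\setminus\{0\}}\bigl|\{j\neq i : a_j=0\}\bigr| \ \leq \ (p^t-1)(n-k-1).
\]
Dividing by $p^t = q$ and gathering the lower-order terms produces an inequality of the form $\sum_{j\neq i} p^{-b_{ij}} \leq n - k + (k-1)/q$, matching the constant appearing in the denominator of the theorem.

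Finally, since the map $x \mapsto p^{-x}$ is convex, Jensen's inequality gives
\[
\frac{1}{n-1}\sum_{j\neq i} p^{-b_{ij}} \ \geq \ p^{-\,b_i/(n-1)}.
\]
Combining with the counting bound and taking $\log_p$ of both sides yields $b_i \geq (n-1)\log_p\bigl((n-1)/(n-k+(k-1)/q)\bigr)$; maximizing over $i$ completes the proof. The main technical effort lies in the sharp double-counting step, where the dual-MDS distance of $\mathcal{C}^\perp$ must be combined with the observation that $a_i \neq 0$ on $V_i\setminus\{0\}$ to obtain the right constant; once that inequality is in place, the passage to the logarithmic bound via convexity is routine.
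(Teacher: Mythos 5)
The survey states this theorem without proof (it is quoted from Guruswami--Wootters), so there is no in-paper argument to compare against; your proof is correct and is essentially the counting argument from that source, assembled from the characterization in Theorem~\ref{thm:rs_linear_repair}. One remark: your double count, which uses both that every nonzero element of $V_i$ is a dual codeword of weight at least $k+1$ (so at most $n-k-1$ zeros) and that $a_i\neq 0$ on $V_i\setminus\{0\}$, actually yields $\sum_{j\neq i}|\mathbb{B}|^{-b_{ij}}\leq (n-k-1)+k/q$, which is strictly sharper than the stated $(n-k)+(k-1)/q$; since the former implies the latter, your conclusion stands and in fact gives a marginally stronger lower bound on $b$. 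The supporting steps --- the $\mathbb{B}$-independence of the codewords in $\mathcal{A}_i$, the injectivity of $\pi_i|_{V_i}$, the identity $|\ker(\pi_j|_{V_i})|=|\mathbb{B}|^{t-b_{ij}}$, and the Jensen/convexity passage from the sum constraint to the logarithmic bound --- are all sound.
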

By Theorem \ref{thm:mds_bandwidth_lb}, the repair scheme discussed above is optimal when $\mathcal{A}=\mathbb{F}_q$ and $n-k=p^{t-1}$.
\subsection{Other Related Work}
In \cite{DauMilenRSRepair}, the authors improve the Guruswami-Wootters approach to a larger class of parameters. In \cite{YeBargAsymptOptimalRS16}, the authors provide a family of RS codes that has asymptotically optimal repair bandwidth with respect to the cut-set bound. This result is further developed in \cite{ChowVardy17} to reduce the sub-packetization levels. In \cite{TamYeBargRSRepair17}, the authors present RS codes that meet the MSR point for all parameters: $k<d<n-1$. Bandwidth-efficient recovery from multiple erasures in RS codes is addressed in \cite{DauDuursmaKiahMilenRSMulti16} and is further extended to include general scalar MDS codes in \cite{BartWootMDSMulti17}. In \cite{YeBargRSRepairMultiErasure17}, the authors present codes that universally achieve the optimal bandwidth points for all parameters $h\leq n-k$ and $k\leq d\leq n-h$ simultaneously.

\begin{table}[ht!]
	\caption{A summary of schemes appearing in the literature for the repair of RS codes.}
	\centering
	\begin{tabular}{|M{1.8cm}|M{4cm}|M{2cm}|M{2.3cm}|M{5cm}|}
		\hline \hline
		Reference &  Bandwidth & Sub-packetization & Cut-set bound achievability & Remarks \\ \hline
		\cite{GuruWootRSRepair17} & $n-1$ & $\log_{p}n$ & No & Single node repair; $(n-k)\geq p^{t-1}$ \\ \hline
		\cite{DauMilenRSRepair} & $(n-1)t(1-\log_n(n-k))$ & $\log_{p}n$ & No & Single node repair; $(n-k)\geq p^{\ell};\ell\in[t-1]$\\
		\hline
		\cite{YeBargAsymptOptimalRS16} & $<\frac{t(n+1)}{n-k}$ & $(n-k)^n$ &  asymptotically & Single node repair \\
		\hline
		\cite{ChowVardy17} & $<\frac{t(n-1+3(n-k))}{n-k}; (n-k)=s^m$ & $s^{m+n-1}$ &  asymptotically & Single node repair \\	\hline
		\cite{TamYeBargRSRepair17} & $\frac{td}{d-k+1}$ & $\approxeq n^n$ & Yes & Codes exist for any given  $d\in[k,n-1]$\\
		\hline
		\cite{DauDuursmaKiahMilenRSMulti16} & $2(n-1)$ $(2 \text{-erasures})$; $3(n-1) (3\text{-erasures})$ &$\log_pn$ & No & Distributed repair \\
		\hline
		\cite{DauDuursmaKiahMilenRSMulti16} & $2(n-2)(2\text{-erasures})$; $3(n-3) (3\text{-erasures})$ &$\log_pn$ & No & Centralized repair \\
		\hline
		\cite{BartWootMDSMulti17} & $h (n-h)-(p-1)(h-1)(h\text{-erasures})$ & $\log_pn$ & No & Centralized repair $h\leq \sqrt{\log n}$ \\
		\hline
		\cite{YeBargRSRepairMultiErasure17} & $\frac{tdh}{d-k+h}$ &$\approxeq n^n$ & Yes; bound in \cite{CadJafMalRamSuh} & Code works simultaneously for any given no. of failures, $h:h\in[1,n-k]$ and any $d: d\in[k,n-h]$\\
		\hline
		\hline \hline
		\end{tabular}	
\end{table}





%
%
%
%

\section{An Information Capacity Approach} \label{sec:liquid} 

{\em Capacity Bounds:} In \cite{Luby}, a generic distributed storage system model is introduced and fundamental limits presented. The notion of information capacity of a distributed system is introduced. Let $m$ denote the source data size in bits.  Consider a distributed storage system with $N$ nodes, each storing $s$ bits of data. If $\Delta$ denotes the average time between node failures, the erasure rate $\epsilon$ can be defined as $\epsilon=\frac{s}{\Delta}$.  When a node failure takes place, a repairer carries out node repair in a manner which ensures that the source data can be recovered from the data in the surviving nodes at any point of time.  The Mean Time to Data Loss (MTTDL) is the average amount of time over which the source data can be recovered. Let $\gamma$ denote the repair rate, which is the rate at which the repairer reads and writes data. Let $\sigma=\frac{\gamma}{\epsilon}$ denote the repair rate to erasure rate ratio. The information capacity of a distributed storage system is then defined as the largest amount of source data $m$ for which a large MTTDL is possible. In \cite{Luby} it is shown that the information capacity approaches $(1-\frac{1}{2\sigma})Ns$ bits as $\sigma$ and $N$ grow. 

{\em Liquid Storage:} In \cite{LubyPadRic} the idea of liquid cloud storage was proposed in which codes of large block length (for example, authors use a code of block length 3010 in one of their simulations) are used to spread data stored pertaining to every object over a large number of nodes. Liquid storage employs a lazy repair strategy where the repair runs slowly in the background. The authors present simulation results that shows that liquid storage gives better MTTDL performance in comparison with systems based on small block length codes. The performance of liquid storage systems is shown to approach the fundamental limits proved in \cite{Luby}. 


\section{Codes in Practice\label{sec:applns}} 

Distributed systems such as Hadoop, Google File System and Windows Azure have evolved to include support for erasure codes within their systems, in order to enjoy the benefits of improved storage efficiency in comparison with triple replication.  However, the use of traditional erasure codes results in additional repair traffic resulting in larger repair times. This led to several theoretical code constructions for efficient node repair and these were discussed in the preceding sections of this article.  Among the biggest success stories is undoubtedly the adoption of  LR codes in the Windows Azure production cluster. 

{\em LR codes:} In \cite{HuaSimXu_Azure}, the authors compare performance-evaluation results of an $(n=16,k=12,r=6)$ LR code with that of an $[n=16,k=12]$  RS code in the Azure production cluster and demonstrate the repair savings offered by the LR code. Subsequently, the authors implemented an $(n=18,k=14,r=7)$ LR code in Windows Azure Storage and showed that this code has repair degree comparable to that of an $[9,6]$ RS code, but has storage overhead $1.29$ versus $1.5$ in the case of the RS code. This code has reportedly resulted in the savings of millions of dollars for Microsoft \cite{microsoft}. The authors of \cite{SathiaAstPap_Xorbas} implemented HDFS-Xorbas which uses LR codes in place of RS codes in HDFS-RAID. Xorbas LR code is build on top of an RS code by adding extra local XOR parties. The experimental evaluation of Xorbas was carried out in Amazon EC2 and a cluster in Facebook, in which the repair performance of $(n=16,k=10,r=5)$ LR code  was compared against a $[14,10]$ RS code. A second distributed storage system that has an LR code plug-in \cite{lrc_plugin} is Ceph.

{\em MDS codes with bandwidth savings:} The Hitchhiker erasure coded system presented in \cite{RasShaGu_Hitchhiker} is a practical implementation of the piggybacking framework introduced in \cite{RasShaRam_Piggyback}. 
The authors implemented the Hitchhiker in HDFS and evaluated its performance on a data-warehouse cluster at Facebook. The Hitchhiker has now been incorporated into Apache Hadoop. In \cite{KraGliJen_HashTag}, the HDFS implementation of a class of MDS array codes called HashTag codes is discussed. The theoretical framework of HashTag codes was presented in \cite{KraGliOVe}. These codes allow low sub-packetization levels at the expense of increased repair bandwidth and are designed to efficiently repair systematic nodes.

{\em Regenerating Codes:} The NCCloud \cite{HuChenLee_NCcloud} is one of the earliest works that dealt with the practical performance evaluation of regenerating codes. The NCCloud storage system is build on top of a 2-parity functional MSR code. In \cite{KriPraLal_Pentagon}, the performance of the pentagon code (which is a repair-by-transfer MBR code) and a heptagon-local code (which is a locally regenerating code) in a Hadoop setting are studied. These two codes possess inherent double replication of code symbols, have storage overhead slightly greater than $2$ and their performance is compared against double and triple replication. In \cite{RasNakWan_PMRBT}, the authors present an optimal-access version of the product-matrix MSR code, which they refer to as the PM-RBT code. The results of an experimental evaluation of a rate $\frac{1}{2}$ PM-RBT code on Amazon EC2 instances is reported. In \cite{LiLi_Beehive}, the authors introduced erasure codes termed Beehive that are built on top of MSR codes. These codes repair multiple failures simultaneously and are implemented using the Product-Matrix MSR in C++ using the Intel storage acceleration library (ISAL). In \cite{JuaBlaMat_Butterfly}, the authors present the evaluation of a high-rate MSR code known as the Butterfly code in both Ceph and HDFS. This code is a simplified version of the MSR codes presented in \cite{GadMatBla} corresponding to the presence of two parity nodes. This code possesses the optimal-access property except in the case of the repair of a single parity node, and has sub-packetization level $\alpha=2^{k-1}$.   More recently in \cite{VajRamPur_Clay}, the authors present Clay code that corresponds to the codes in \cite{YeBar_2,LiTangTian,SasVajKum_arxiv}. The Clay code is  implemented over Ceph based on the coupled-layer perspective in \cite{SasVajKum_arxiv} and is evaluated over an Amazon AWS cluster. The Clay code is 
simultaneously optimal in terms of storage overhead, repair bandwidth, optimal access and sub-packetization level. As a part of this work, vector code support has been added to Ceph and the Clay code is under consideration to become a part of Ceph's master code-base.  


\bibliography{master}

\begin{thebibliography}{100}
\providecommand{\url}[1]{#1}
\csname url@samestyle\endcsname
\providecommand{\newblock}{\relax}
\providecommand{\bibinfo}[2]{#2}
\providecommand{\BIBentrySTDinterwordspacing}{\spaceskip=0pt\relax}
\providecommand{\BIBentryALTinterwordstretchfactor}{4}
\providecommand{\BIBentryALTinterwordspacing}{\spaceskip=\fontdimen2\font plus
\BIBentryALTinterwordstretchfactor\fontdimen3\font minus
  \fontdimen4\font\relax}
\providecommand{\BIBforeignlanguage}[2]{{%
\expandafter\ifx\csname l@#1\endcsname\relax
\typeout{** WARNING: IEEEtran.bst: No hyphenation pattern has been}%
\typeout{** loaded for the language `#1'. Using the pattern for}%
\typeout{** the default language instead.}%
\else
\language=\csname l@#1\endcsname
\fi
#2}}
\providecommand{\BIBdecl}{\relax}
\BIBdecl

\bibitem{RashmiShahGuKuang}
K.~V. Rashmi, N.~B. Shah, D.~Gu, H.~Kuang, D.~Borthakur, and K.~Ramchandran,
  ``{A Solution to the Network Challenges of Data Recovery in Erasure-coded
  Distributed Storage Systems: {A} Study on the Facebook Warehouse Cluster},''
  in \emph{Proc. 5th {USENIX} Workshop on Hot Topics in Storage and File
  Systems, San Jose, CA, USA}, 2013.

\bibitem{SathiaAstPap_Xorbas}
M.~Sathiamoorthy, M.~Asteris, D.~S. Papailiopoulos, A.~G. Dimakis, R.~Vadali,
  S.~Chen, and D.~Borthakur, ``{XORing Elephants: Novel Erasure Codes for Big
  Data},'' \emph{{PVLDB}}, vol.~6, no.~5, pp. 325--336, 2013.

\bibitem{DimRamWuSuhSurvey}
A.~G. Dimakis, K.~Ramchandran, Y.~Wu, and C.~Suh, ``{A Survey on Network Codes
  for Distributed Storage},'' \emph{Proceedings of the {IEEE}}, vol.~99, no.~3,
  pp. 476--489, 2011.

\bibitem{DattaOggSurvey}
A.~Datta and F.~E. Oggier, ``An overview of codes tailor-made for better
  repairability in networked distributed storage systems,'' \emph{{SIGACT}
  News}, vol.~44, no.~1, pp. 89--105, 2013.

\bibitem{LiLiSurvey}
J.~Li and B.~Li, ``{Erasure coding for cloud storage systems: A survey},''
  \emph{Tsinghua Science and Technology}, vol.~18, no.~3, pp. 259--272, 2013.

\bibitem{liuOggSurvey}
S.~Liu and F.~Oggier, ``An overview of coding for distributed storage
  systems,'' in \emph{{Network Coding and Subspace Designs}}.\hskip 1em plus
  0.5em minus 0.4em\relax Springer, 2018, pp. 363--383.

\bibitem{DimGodWuWaiRam}
A.~Dimakis, P.~Godfrey, Y.~Wu, M.~Wainwright, and K.~Ramchandran, ``{Network
  coding for distributed storage systems},'' \emph{IEEE Trans. Inf. Theory},
  vol.~56, no.~9, pp. 4539--4551, Sep. 2010.

\bibitem{Wu}
Y.~Wu, ``{Existence and Construction of Capacity-Achieving Network Codes for
  Distributed Storage},'' \emph{{IEEE} Journal on Selected Areas in
  Communications}, vol.~28, no.~2, pp. 277--288, 2010.

\bibitem{ShaRasKumRam_ia}
N.~B. Shah, K.~V. Rashmi, P.~V. Kumar, and K.~Ramchandran, ``{Interference
  Alignment in Regenerating Codes for Distributed Storage: Necessity and Code
  Constructions},'' \emph{IEEE Trans. Inf. Theory}, vol.~58, no.~4, pp.
  2134--2158, Apr. 2012.

\bibitem{HuChenLee_NCcloud}
Y.~Hu, H.~C.~H. Chen, P.~P.~C. Lee, and Y.~Tang, ``{NCCloud: applying network
  coding for the storage repair in a cloud-of-clouds},'' in \emph{Proc. 10th
  {USENIX} conference on File and Storage Technologies, San Jose, CA, USA},
  2012, p.~21.

\bibitem{KrishKumMBRRep16}
M.~N. Krishnan and P.~V. Kumar, ``{On {MBR} codes with replication},'' in
  \emph{Proc. {IEEE} International Symposium on Information Theory, Barcelona,
  Spain}, 2016, pp. 71--75.

\bibitem{ShahHBTNonexistence13}
N.~B. Shah, ``{On Minimizing Data-Read and Download for Storage-Node
  Recovery},'' \emph{{IEEE} Communications Letters}, vol.~17, no.~5, pp.
  964--967, 2013.

\bibitem{RasShaKumRam_allerton09}
K.~Rashmi, N.~Shah, P.~Kumar, and K.~Ramchandran, ``{Explicit construction of
  optimal exact regenerating codes for distributed storage},'' in \emph{Proc.
  47th Annu. Allerton Conf. Communication, Control, and Computing},
  Urbana-Champaign, IL, Sep. 2009, pp. 1243--1249.

\bibitem{RasShaKum_pm11}
K.~V. Rashmi, N.~B. Shah, and P.~V. Kumar, ``{Optimal Exact-Regenerating Codes
  for Distributed Storage at the {MSR} and {MBR} Points via a Product-Matrix
  Construction},'' \emph{{IEEE} Trans. Inf. Theory}, vol.~57, no.~8, pp.
  5227--5239, 2011.

\bibitem{LinChungNovelRBTMBR14}
S.~Lin and W.~Chung, ``{Novel Repair-by-Transfer Codes and Systematic Exact-MBR
  Codes with Lower Complexities and Smaller Field Sizes},'' \emph{{IEEE} Trans.
  Parallel Distrib. Syst.}, vol.~25, no.~12, pp. 3232--3241, 2014.

\bibitem{HanPaiZhengVarshUpdateEfficientPM15}
Y.~S. Han, H.~Pai, R.~Zheng, and P.~K. Varshney, ``{Update-Efficient
  Error-Correcting Product-Matrix Codes},'' \emph{{IEEE} Trans.
  Communications}, vol.~63, no.~6, pp. 1925--1938, 2015.

\bibitem{Raviv2AsymptOptAnyFieldMSRMBR17}
N.~Raviv, ``{Asymptotically optimal regenerating codes over any field},'' in
  \emph{Proc. {IEEE} International Symposium on Information Theory, Aachen,
  Germany}, 2017, pp. 1416--1420.

\bibitem{MahdKhisMohajBWAdaptiveMBR17}
K.~Mahdaviani, A.~Khisti, and S.~Mohajer, ``{Bandwidth Adaptive {\&} Error
  Resilient {MBR} Exact Repair Regenerating Codes},'' \emph{CoRR}, vol.
  abs/1711.02770, 2017.

\bibitem{WuDim}
Y.~Wu and A.~G. Dimakis, ``{Reducing repair traffic for erasure coding-based
  storage via interference alignment},'' in \emph{Proc. IEEE International
  Symposium on Information Theory, Seoul, Korea}, June 2009, pp. 2276--2280.

\bibitem{SuhRam}
C.~Suh and K.~Ramchandran, ``{Exact-repair MDS code construction using
  interference alignment},'' \emph{IEEE Trans. Inf. Theory}, vol.~57, no.~3,
  pp. 1425--1442, Mar. 2011.

\bibitem{RasShaKum_pm}
K.~V. Rashmi, N.~B. Shah, and P.~V. Kumar, ``{Optimal Exact-Regenerating Codes
  for Distributed Storage at the MSR and MBR Points via a Product-Matrix
  Construction},'' \emph{IEEE Trans. Inf. Theory}, vol.~57, no.~8, pp.
  5227--5239, Aug. 2011.

\bibitem{LinChungHanAl}
S.~J. Lin, W.~H. Chung, Y.~S. Han, and T.~Y. Al-Naffouri, ``{A Unified Form of
  Exact-MSR Codes via Product-Matrix Frameworks},'' \emph{IEEE Trans Inf
  Theory}, vol.~61, no.~2, pp. 873--886, Feb 2015.

\bibitem{PapDimCad}
D.~Papailiopoulos, A.~Dimakis, and V.~Cadambe, ``{Repair Optimal Erasure Codes
  through Hadamard Designs},'' \emph{IEEE Trans. Inf. Theory}, vol.~59, no.~5,
  pp. 3021--3037, 2013.

\bibitem{CadJafMalRamSuh}
V.~Cadambe, S.~A. Jafar, H.~Maleki, K.~Ramchandran, and C.~Suh, ``{Asymptotic
  Interference Alignment for Optimal Repair of MDS Codes in Distributed
  Storage},'' \emph{IEEE Trans. Inf. Theory}, vol.~59, no.~5, pp. 2974--2987,
  2013.

\bibitem{TamWanBru}
I.~Tamo, Z.~Wang, and J.~Bruck, ``Zigzag codes: {MDS} array codes with optimal
  rebuilding,'' \emph{IEEE Trans. Inf. Theory}, vol.~59, no.~3, pp. 1597--1616,
  2013.

\bibitem{WangTamoBruck}
Z.~Wang, I.~Tamo, and J.~Bruck, ``On codes for optimal rebuilding access,'' in
  \emph{Proc. 49th Annual Allerton Conference on Communication, Control, and
  Computing}, Sept 2011, pp. 1374--1381.

\bibitem{CadHuaLiMeh}
V.~R. Cadambe, C.~Huang, J.~Li, and S.~Mehrotra, ``{Polynomial length {MDS}
  codes with optimal repair in distributed storage},'' in \emph{Proc. Forty
  Fifth Asilomar Conference on Signals, Systems and Computers, Pacific Grove,
  CA, USA}, 2011, pp. 1850--1854.

\bibitem{WanTamBru_long}
Z.~Wang, I.~Tamo, and J.~Bruck, ``{Long {MDS} codes for optimal repair
  bandwidth},'' in \emph{Proc. {IEEE} International Symposium on Information
  Theory, Cambridge, MA, USA}, 2012, pp. 1182--1186.

\bibitem{TamWanBru_access_tit}
I.~Tamo, Z.~Wang, and J.~Bruck, ``{Access Versus Bandwidth in Codes for
  Storage},'' \emph{{IEEE} Trans. Inf. Theory}, vol.~60, no.~4, pp. 2028--2037,
  2014.

\bibitem{GopTamCal}
S.~Goparaju, I.~Tamo, and A.~R. Calderbank, ``{An Improved Sub-Packetization
  Bound for Minimum Storage Regenerating Codes},'' \emph{{IEEE} Trans. on Inf.
  Theory}, vol.~60, no.~5, pp. 2770--2779, 2014.

\bibitem{SasAgaKum}
B.~Sasidharan, G.~K. Agarwal, and P.~V. Kumar, ``A high-rate {MSR} code with
  polynomial sub-packetization level,'' in \emph{Proc. {IEEE} International
  Symposium on Information Theory}, 2015, pp. 2051--2055.

\bibitem{RawKoyVis_msr}
A.~S. Rawat, O.~O. Koyluoglu, and S.~Vishwanath, ``{Progress on high-rate {MSR}
  codes: Enabling arbitrary number of helper nodes},'' in \emph{Proc.
  Information Theory and Applications Workshop, La Jolla, CA, USA}, 2016, pp.
  1--6.

\bibitem{GopFazVar}
S.~Goparaju, A.~Fazeli, and A.~Vardy, ``{Minimum Storage Regenerating Codes for
  All Parameters},'' \emph{{IEEE} Trans. Inf. Theory}, vol.~63, no.~10, pp.
  6318--6328, 2017.

\bibitem{AgaSasKum}
G.~K. Agarwal, B.~Sasidharan, and P.~V. Kumar, ``An alternate construction of
  an access-optimal regenerating code with optimal sub-packetization level,''
  in \emph{Proc. Twenty First National Conference on Communications, Mumbai,
  India}, 2015, pp. 1--6.

\bibitem{Alon_Comb}
N.~Alon, ``Combinatorial nullstellensatz,'' \emph{Combinatorics, Probability
  and Computing}, vol.~8, no. 1-2, pp. 7--29, 1999.

\bibitem{RavSilEtz}
N.~Raviv, N.~Silberstein, and T.~Etzion, ``{Constructions of High-Rate Minimum
  Storage Regenerating Codes Over Small Fields},'' \emph{{IEEE} Trans. Inf.
  Theory}, vol.~63, no.~4, pp. 2015--2038, 2017.

\bibitem{YeBar_1}
M.~Ye and A.~Barg, ``{Explicit Constructions of High-Rate MDS Array Codes With
  Optimal Repair Bandwidth},'' \emph{{IEEE} Trans. Inf. Theory}, vol.~63,
  no.~4, pp. 2001--2014, 2017.

\bibitem{YeBar_2}
------, ``{Explicit Constructions of Optimal-Access {MDS} Codes With Nearly
  Optimal Sub-Packetization},'' \emph{{IEEE} Trans. Inf. Theory}, vol.~63,
  no.~10, pp. 6307--6317, 2017.

\bibitem{SasVajKum_arxiv}
B.~Sasidharan, M.~Vajha, and P.~V. Kumar, ``{An Explicit, Coupled-Layer
  Construction of a High-Rate {MSR} Code with Low Sub-Packetization Level,
  Small Field Size and All-Node Repair},'' \emph{CoRR}, vol. abs/1607.07335,
  2016.

\bibitem{LiTangTian}
J.~Li, X.~Tang, and C.~Tian, ``{A generic transformation for optimal repair
  bandwidth and rebuilding access in MDS codes},'' in \emph{Proc. IEEE
  International Symposium on Information Theory, Aachen, Germany}, June 2017,
  pp. 1623--1627.

\bibitem{BalKum_subpkt}
S.~B. Balaji and P.~V. Kumar, ``A tight lower bound on the sub-packetization
  level of optimal-access {MSR} and {MDS} codes,'' \emph{CoRR, Accepted at ISIT
  2018}, vol. abs/1710.05876, 2017.

\bibitem{VajBalKum}
M.~Vajha, S.~B. Balaji, and P.~V. Kumar, ``{Explicit MSR Codes with Optimal
  Access, Optimal Sub-Packetization and Small Field Size for $d = k + 1, k + 2,
  k + 3$},'' \emph{CoRR}, vol. abs/1804.00598, {2018}.

\bibitem{MahMohKhis_MSRAdapt}
K.~Mahdaviani, S.~Mohajer, and A.~Khisti, ``{Product Matrix MSR Codes With
  Bandwidth Adaptive Exact Repair},'' \emph{IEEE Trans. Inf. Theory}, vol.~64,
  no.~4, pp. 3121--3135, 2018.

\bibitem{ShaRasKumRam_rbt}
N.~Shah, K.~Rashmi, P.~{Vijay Kumar}, and K.~Ramchandran, ``{Distributed
  Storage Codes With Repair-by-Transfer and Nonachievability of Interior Points
  on the Storage-Bandwidth Tradeoff},'' \emph{IEEE Trans. Inf. Theory},
  vol.~58, no.~3, pp. 1837--1852, Mar. 2012.

\bibitem{Tia}
C.~Tian, ``{Characterizing the Rate Region of the {(4, 3, 3)} Exact-Repair
  Regenerating Codes},'' \emph{{IEEE} Journal on Selected Areas in
  Communications}, vol.~32, no.~5, pp. 967--975, 2014.

\bibitem{ITIP}
``{Information Theory Inequality Prover},''
  \url{http://user-www.ie.cuhk.edu.hk/~ITIP/}, {Accessed: 2016-Jun-02}.

\bibitem{Yeu}
R.~W. Yeung, ``{A framework for linear information inequalities},''
  \emph{{IEEE} Trans. Inf. Theory}, vol.~43, no.~6, pp. 1924--1934, 1997.

\bibitem{TiaSasAggVaiKum}
C.~Tian, B.~Sasidharan, V.~Aggarwal, V.~Vaishampayan, and P.~{Vijay Kumar},
  ``{Layered Exact-Repair Regenerating Codes via Embedded Error Correction and
  Block Designs},'' \emph{IEEE Trans. Inf. Theory}, vol.~61, no.~4, pp.
  1933--1947, Apr. 2015.

\bibitem{SenSasKum_ita}
K.~Senthoor, B.~Sasidharan, and P.~Kumar, ``{Improved layered regenerating
  codes characterizing the exact-repair storage-repair bandwidth tradeoff for
  certain parameter sets},'' in \emph{Proc. {IEEE} Information Theory Workshop,
  Jerusalem}, 2015, pp. 1--5.

\bibitem{SasSenKum_isit}
B.~Sasidharan, K.~Senthoor, and P.~Kumar, ``{An Improved Outer Bound on the
  Storage Repair-Bandwidth Tradeoff of Exact-Repair Regenerating Codes},'' in
  \emph{Proc. {IEEE} International Symposium on Information Theory, Honolulu,
  HI, USA}, 2014, pp. 2430--2434.

\bibitem{Duu1}
I.~M. Duursma, ``{Outer bounds for exact repair codes},'' \emph{CoRR}, vol.
  abs/1406.4852, 2014.

\bibitem{Duu2}
------, ``{Shortened regenerating codes},'' \emph{CoRR}, vol. abs/1505.00178,
  2015.

\bibitem{MohTan_isit}
S.~Mohajer and R.~Tandon, ``{New bounds on the (n, k, d) storage systems with
  exact repair},'' in \emph{Proc. {IEEE} International Symposium on Information
  Theory, Hong Kong}, 2015, pp. 2056--2060.

\bibitem{SasPraKriVajSenKum}
B.~Sasidharan, N.~Prakash, M.~N. Krishnan, M.~Vajha, K.~Senthoor, and P.~V.
  Kumar, ``{Outer bounds on the storage-repair bandwidth trade-off of
  exact-repair regenerating codes},'' \emph{International Journal of Inf. and
  Coding Theory}, vol.~3, no.~4, pp. 255--298, 2016.

\bibitem{ElyMoh}
M.~Elyasi and S.~Mohajer, ``{Determinant Coding: A Novel Framework for
  Exact-Repair Regenerating Codes},'' \emph{IEEE Trans. Inf. Theory}, vol.~62,
  no.~12, pp. 6683--6697, Dec. 2016.

\bibitem{ElyMoh_Allerton}
------, ``{Exact-repair trade-off for (n, k = d - 1, d) regenerating codes},''
  in \emph{Proc. 55th Annual Allerton Conference on Communication, Control, and
  Computing, Monticello, IL, USA}, 2017, pp. 934--941.

\bibitem{PraKri}
N.~Prakash and M.~N. Krishnan, ``{The storage-repair-bandwidth trade-off of
  exact repair linear regenerating codes for the case d=k=n-1},'' in
  \emph{Proc. {IEEE} International Symposium on Information Theory, Hong Kong},
  2015, pp. 859--863.

\bibitem{ElyMohTan}
M.~Elyasi, S.~Mohajer, and R.~Tandon, ``Linear exact repair rate region of (k +
  1, k, k) distributed storage systems: A new approach,'' in \emph{Proc. {IEEE}
  International Symposium on Information Theory}, 2015, pp. 2061--2065.

\bibitem{HuXuWanZhaLi}
Y.~Hu, Y.~Xu, X.~Wang, C.~Zhan, and P.~Li, ``{Cooperative Recovery of
  Distributed Storage Systems from Multiple Losses with Network Coding},''
  \emph{{IEEE} Journal on Selected Areas in Communications}, vol.~28, no.~2,
  pp. 268--276, 2010.

\bibitem{KerScoStr}
A.~M. Kermarrec, N.~L. Scouarnec, and G.~Straub, ``{Repairing Multiple Failures
  with Coordinated and Adaptive Regenerating Codes},'' in \emph{Proc.
  International Symposium on Networking Coding, Beijing, China}, 2011, pp.
  1--6.

\bibitem{ShuHu}
K.~W. Shum and Y.~Hu, ``{Cooperative Regenerating Codes},'' \emph{{IEEE} Trans.
  Inf. Theory}, vol.~59, no.~11, pp. 7229--7258, 2013.

\bibitem{WanZhaCoopRegen}
A.~Wang and Z.~Zhang, ``{Exact cooperative regenerating codes with
  minimum-repair-bandwidth for distributed storage},'' in \emph{Proc. {IEEE}
  {INFOCOM}, Turin, Italy}, 2013, pp. 400--404.

\bibitem{Sco12}
N.~L. Scouarnec, ``{Exact scalar minimum storage coordinated regenerating
  codes},'' in \emph{Proc. {IEEE} International Symposium on Information
  Theory, Cambridge, MA, USA}, 2012, pp. 1197--1201.

\bibitem{ShuChe}
K.~W. Shum and J.~Chen, ``{Cooperative repair of multiple node failures in
  distributed storage systems},'' \emph{International Journal of Information
  and Coding Theory}, vol.~3, no.~4, pp. 299--323, 2016.

\bibitem{YeBarg_Coop}
\BIBentryALTinterwordspacing
M.~Ye and A.~Barg, ``Optimal {MDS} codes for cooperative repair,'' \emph{CoRR},
  vol. abs/1801.09665, 2018. [Online]. Available:
  \url{http://arxiv.org/abs/1801.09665}
\BIBentrySTDinterwordspacing

\bibitem{LiuOgg_a}
S.~Liu and F.~E. Oggier, ``On storage codes allowing partially collaborative
  repairs,'' in \emph{Proc. {IEEE} International Symposium on Information
  Theory, Honolulu, HI, USA}, 2014, pp. 2440--2444.

\bibitem{LiuOgg_b}
------, ``{Two storage code constructions allowing partially collaborative
  repairs},'' in \emph{Proc. International Symposium on Information Theory and
  its Applications, Melbourne, Australia}, 2014, pp. 378--382.

\bibitem{KoyRawVish}
O.~O. Koyluoglu, A.~S. Rawat, and S.~Vishwanath, ``{Secure Cooperative
  Regenerating Codes for Distributed Storage Systems},'' \emph{{IEEE} Trans.
  Inf. Theory}, vol.~60, no.~9, pp. 5228--5244, 2014.

\bibitem{HuaParXia}
K.~Huang, U.~Parampalli, and M.~Xian, ``{Security Concerns in Minimum Storage
  Cooperative Regenerating Codes},'' \emph{{IEEE} Trans. Inf. Theory}, vol.~62,
  no.~11, pp. 6218--6232, 2016.

\bibitem{RasShaRam_Piggyback}
K.~V. Rashmi, N.~B. Shah, and K.~Ramchandran, ``{A Piggybacking Design
  Framework for Read-and Download-Efficient Distributed Storage Codes},''
  \emph{{IEEE} Trans. Inf. Theory}, vol.~63, no.~9, pp. 5802--5820, 2017.

\bibitem{GurRaw_MDS}
V.~Guruswami and A.~S. Rawat, ``{MDS Code Constructions with Small
  Sub-packetization and Near-optimal Repair Bandwidth},'' in \emph{Proc.
  Twenty-Eighth Annual {ACM-SIAM} Symposium on Discrete Algorithms, Barcelona,
  Spain}, 2017, pp. 2109--2122.

\bibitem{KralevskaGligoroski}
K.~Kralevska, D.~Gligoroski, and H.~Øverby, ``{General Sub-Packetized
  Access-Optimal Regenerating Codes},'' \emph{IEEE Communications Letters},
  vol.~20, no.~7, pp. 1281--1284, 2016.

\bibitem{RawTamGur_epsilonMSR}
A.~S. Rawat, I.~Tamo, V.~Guruswami, and K.~Efremenko, ``{$\epsilon$}-{MSR}
  codes with small sub-packetization,'' in \emph{Proc. {IEEE} International
  Symposium on Information Theory, Aachen, Germany}, 2017, pp. 2043--2047.

\bibitem{RouRamFR10}
S.~Y.~E. Rouayheb and K.~Ramchandran, ``{Fractional Repetition Codes for Repair
  in Distributed Storage Systems},'' \emph{CoRR}, vol. abs/1010.2551, 2010.

\bibitem{PawNoorRouRamDressCode11}
S.~Pawar, N.~Noorshams, S.~Y.~E. Rouayheb, and K.~Ramchandran, ``{DRESS} codes
  for the storage cloud: Simple randomized constructions,'' in \emph{Proc.
  {IEEE} International Symposium on Information Theory Proceedings, St.
  Petersburg, Russia}, 2011, pp. 2338--2342.

\bibitem{SilbEtzOptimalFR15}
N.~Silberstein and T.~Etzion, ``{Optimal Fractional Repetition Codes Based on
  Graphs and Designs},'' \emph{{IEEE} Trans. Inf. Theory}, vol.~61, no.~8, pp.
  4164--4180, 2015.

\bibitem{OlmRamFRCodes16}
O.~Olmez and A.~Ramamoorthy, ``{Fractional Repetition Codes With Flexible
  Repair From Combinatorial Designs},'' \emph{{IEEE} Trans. Inf. Theory},
  vol.~62, no.~4, pp. 1565--1591, 2016.

\bibitem{KooGillScaleFR11}
J.~C. Koo and J.~T.~G. III, ``{Scalable constructions of fractional repetition
  codes in distributed storage systems},'' in \emph{Proc. 49th Annual Allerton
  Conference on Communication, Control, and Computing, Monticello, IL, USAF},
  2011, pp. 1366--1373.

\bibitem{ExistenceFR}
T.~Ernvall, ``{The existence of fractional repetition codes},'' \emph{CoRR},
  vol. abs/1201.3547, 2012.

\bibitem{PawarRouayRam}
S.~Pawar, S.~E. Rouayheb, and K.~Ramchandran, ``{Securing Dynamic Distributed
  Storage Systems Against Eavesdropping and Adversarial Attacks},'' \emph{IEEE
  Trans. on Inf. Theory}, vol.~57, no.~10, pp. 6734--6753, 2011.

\bibitem{RashmiShahRamKum_errors}
K.~V. Rashmi, N.~B. Shah, K.~Ramchandran, and P.~V. Kumar, ``{Regenerating
  codes for errors and erasures in distributed storage},'' in \emph{Proc. IEEE
  International Symposium on Information Theory, Cambridge, MA, USA}, July
  2012, pp. 1202--1206.

\bibitem{ShahRashKum}
------, ``{Information-Theoretically Secure Erasure Codes for Distributed
  Storage},'' \emph{IEEE Trans. Inf. Theory}, vol.~64, no.~3, pp. 1621--1646,
  2018.

\bibitem{TandonAmuruClancyBuehrer}
R.~Tandon, S.~Amuru, T.~C. Clancy, and R.~M. Buehrer, ``{Toward Optimal Secure
  Distributed Storage Systems With Exact Repair},'' \emph{IEEE Trans. Inf.
  Theory}, vol.~62, no.~6, pp. 3477--3492, 2016.

\bibitem{RawatKoyluSilberVish}
A.~S. Rawat, O.~O. Koyluoglu, N.~Silberstein, and S.~Vishwanath, ``{Optimal
  Locally Repairable and Secure Codes for Distributed Storage Systems},''
  \emph{IEEE Trans. Inf. Theory}, vol.~60, no.~1, pp. 212--236, 2014.

\bibitem{GopaRouayCladerPoor}
S.~Goparaju, S.~E. Rouayheb, R.~Calderbank, and H.~V. Poor, ``{Data secrecy in
  distributed storage systems under exact repair},'' in \emph{Proc.
  International Symposium on Network Coding, Calgary, Canada, 2013}, 2013, pp.
  1--6.

\bibitem{HuangParamXian}
K.~Huang, U.~Parampalli, and M.~Xian, ``{On Secrecy Capacity of Minimum Storage
  Regenerating Codes},'' \emph{IEEE Trans. on Inf. Theory}, vol.~63, no.~3, pp.
  1510--1524, March 2017.

\bibitem{Rawat_secrecy}
A.~S. Rawat, ``Secrecy capacity of minimum storage regenerating codes,'' in
  \emph{Proc. {IEEE} International Symposium on Information Theory, Aachen,
  Germany}, 2017, pp. 1406--1410.

\bibitem{KadheSprinston}
S.~Kadhe and A.~Sprintson, ``{Security for minimum storage regenerating codes
  and locally repairable codes},'' in \emph{Proc. IEEE International Symposium
  on Information Theory, Aachen, Germany}, June 2017, pp. 1028--1032.

\bibitem{YeShumYeung}
F.~Ye, K.~W. Shum, and R.~W. Yeung, ``{The Rate Region for Secure Distributed
  Storage Systems},'' \emph{IEEE Trans. Inf. Theory}, vol.~63, no.~11, pp.
  7038--7051, Nov 2017.

\bibitem{ShaoLiuTianShen}
S.~Shao, T.~Liu, C.~Tian, and C.~Shen, ``{On the Tradeoff Region of Secure
  Exact-Repair Regenerating Codes},'' \emph{IEEE Trans. Inf. Theory}, vol.~63,
  no.~11, pp. 7253--7266, Nov 2017.

\bibitem{HanMon}
J.~Han and L.~A. Lastras-Montano, ``{Reliable Memories with Subline
  Accesses},'' in \emph{Proc. IEEE International Symposium on Information
  Theory, Nice, France}, June 2007, pp. 2531--2535.

\bibitem{HuaChenLi}
C.~Huang, M.~Chen, and J.~Li, ``{Pyramid codes: Flexible schemes to trade space
  for access efficiency in reliable data storage systems},'' in \emph{{Proc.
  6th IEEE Int. Symposium on Network Computing and Applications, Cambridge,
  Massachusetts, USA}}, 2007, pp. 79--86.

\bibitem{OggDat}
F.~Oggier and A.~Datta, ``{Self-repairing homomorphic codes for distributed
  storage systems},'' in \emph{Proc. IEEE INFOCOM, Shanghai, China}, April
  2011, pp. 1215--1223.

\bibitem{GopHuaSimYek}
P.~Gopalan, C.~Huang, H.~Simitci, and S.~Yekhanin, ``{On the Locality of
  Codeword Symbols},'' \emph{IEEE Trans. Inf. Theory}, vol.~58, no.~11, pp.
  6925--6934, 2012.

\bibitem{PapDim}
D.~Papailiopoulos and A.~Dimakis, ``{Locally repairable codes},'' in
  \emph{Proc. IEEE International Symposium on Information Theory, Cambridge,
  MA, USA}, July 2012, pp. 2771--2775.

\bibitem{ForYek}
M.~Forbes and S.~Yekhanin, ``{On the Locality of Codeword Symbols in Non-linear
  Codes},'' \emph{Discrete Math.}, vol. 324, pp. 78--84, Jun. 2014.

\bibitem{PraKamLalKum}
N.~Prakash, G.~M. Kamath, V.~Lalitha, and P.~V. Kumar, ``{Optimal linear codes
  with a local-error-correction property},'' in \emph{Proc. IEEE International
  Symposium on Information Theory Proceedings, Cambridge, MA, USA}, 2012, pp.
  2776--2780.

\bibitem{SilRawKoyVis}
N.~Silberstein, A.~S. Rawat, O.~O. Koyluoglu, and S.~Vishwanath, ``{Optimal
  locally repairable codes via rank-metric codes},'' in \emph{Proc. IEEE
  International Symposium on Information Theory ,Istanbul, Turkey}, July 2013,
  pp. 1819--1823.

\bibitem{PraLalKum}
N.~Prakash, V.~Lalitha, and P.~V. Kumar, ``Codes with locality for two
  erasures,'' in \emph{Proc. IEEE International Symposium on Information
  Theory, Honolulu, HI, USA}, 2014, pp. 1962--1966.

\bibitem{WanZhaI}
A.~Wang and Z.~Zhang, ``{An Integer Programming-Based Bound for Locally
  Repairable Codes},'' \emph{IEEE Trans. Inf. Theory}, vol.~61, no.~10, pp.
  5280--5294, Oct 2015.

\bibitem{ZhaWanGe}
J.~Zhang, X.~Wang, and G.~Ge, ``{Some Improvements on Locally Repairable
  Codes},'' \emph{CoRR}, vol. abs/1506.04822, 2015.

\bibitem{MehArd}
M.~Mehrabi and M.~Ardakani, ``On minimum distance of locally repairable
  codes,'' in \emph{Proc. 15th Canadian Workshop on Information Theory, Quebec,
  Canada}, June 2017, pp. 1--5.

\bibitem{TamBar_LRC}
I.~Tamo and A.~Barg, ``A family of optimal locally recoverable codes,''
  \emph{IEEE Trans. Inf. Theory}, vol.~60, no.~8, pp. 4661--4676, Aug 2014.

\bibitem{ErnWesHol}
T.~Ernvall, T.~Westerback, and C.~Hollanti, ``Constructions of optimal and
  almost optimal locally repairable codes,'' in \emph{Proc. 4th International
  Conference on Wireless Communications, Vehicular Technology, Information
  Theory and Aerospace Electronic Systems}, 2014, pp. 1--5.

\bibitem{Sihem_Mesnager}
J.~Liu, S.~Mesnager, and L.~Chen, ``{New Constructions of Optimal Locally
  Recoverable Codes via Good Polynomials},'' \emph{IEEE Trans. Inf. Theory},
  vol.~64, no.~2, pp. 889--899, 2018.

\bibitem{KolBarTamYad}
O.~Kolosov, A.~Barg, I.~Tamo, and G.~Yadgar, ``{Optimal LRC codes for all
  lenghts $n <= q$},'' \emph{CoRR}, vol. abs/1802.00157, 2018.

\bibitem{LinLimCha}
L.~Jin, L.~Ma, and C.~Xing, ``Construction of optimal locally repairable codes
  via automorphism groups of rational function fields,'' \emph{CoRR}, vol.
  abs/1710.09638, 2017.

\bibitem{BalKUmMax}
S.~B. Balaji and P.~V. Kumar, ``On partial maximally-recoverable and
  maximally-recoverable codes,'' in \emph{Proc. IEEE International Symposium on
  Information Theory, Hong Kong}, 2015, pp. 1881--1885.

\bibitem{CadMaz}
V.~R. Cadambe and A.~Mazumdar, ``{Bounds on the Size of Locally Recoverable
  Codes},'' \emph{IEEE Trans. Inf. Theory}, vol.~61, no.~11, pp. 5787--5794,
  Nov 2015.

\bibitem{HuaYaaUchSie}
P.~Huang, E.~Yaakobi, H.~Uchikawa, and P.~H. Siegel, ``Binary linear locally
  repairable codes,'' \emph{IEEE Trans. Inf. Theory}, vol.~62, no.~11, pp.
  6268--6283, Nov 2016.

\bibitem{BalKum}
S.~B. Balaji and P.~V. Kumar, ``Bounds on the rate and minimum distance of
  codes with availability,'' in \emph{Proc. IEEE International Symposium on
  Information Theory, Aachen, Germany}, 2017, pp. 3155--3159.

\bibitem{Wei}
V.~K. Wei, ``{Generalized Hamming weights for linear codes},'' \emph{IEEE
  Trans. Inf. Theory}, vol.~37, no.~5, pp. 1412--1418, 1991.

\bibitem{WanZhaLin}
A.~Wang, Z.~Zhang, and D.~Lin, ``{Bounds and constructions for linear locally
  repairable codes over binary fields},'' in \emph{Proc. IEEE International
  Symposium on Information Theory, Aachen, Germany}, 2017, pp. 2033--2037.

\bibitem{MaGe}
J.~Ma and G.~Ge, ``{Optimal binary linear locally repairable codes with
  disjoint repair groups},'' \emph{CoRR}, vol. abs/1711.07138, 2017.

\bibitem{AgaBargHuMazTam}
A.~Agarwal, A.~Barg, S.~Hu, A.~Mazumdar, and I.~Tamo, ``Combinatorial
  alphabet-dependent bounds for locally recoverable codes,'' \emph{IEEE Trans.
  Inf. Theory}, vol.~PP, no.~99, pp. 1--1, 2018.

\bibitem{TambarGopCal}
I.~Tamo, A.~Barg, S.~Goparaju, and A.~R. Calderbank, ``Cyclic {LRC} codes,
  binary {LRC} codes, and upper bounds on the distance of cyclic codes,''
  \emph{CoRR}, vol. abs/1603.08878, 2016.

\bibitem{GopCal}
S.~Goparaju and A.~R. Calderbank, ``Binary cyclic codes that are locally
  repairable,'' in \emph{Proc. {IEEE} International Symposium on Information
  Theory, Honolulu, HI, USA}, 2014, pp. 676--680.

\bibitem{ZehYak}
A.~Zeh and E.~Yaakobi, ``Optimal linear and cyclic locally repairable codes
  over small fields,'' in \emph{Proc. IEEE Information Theory Workshop,
  Jerusalem, Israel}, 2015, pp. 1--5.

\bibitem{TamBarFro}
I.~Tamo, A.~Barg, and A.~Frolov, ``{Bounds on the Parameters of Locally
  Recoverable Codes},'' \emph{IEEE Trans. Inf. Theory}, vol.~62, no.~6, pp.
  3070--3083, 2016.

\bibitem{BarTamVla}
A.~Barg, I.~Tamo, and S.~Vlăduţ, ``{Locally Recoverable Codes on Algebraic
  Curves},'' \emph{IEEE Trans. Inf. Theory}, vol.~63, no.~8, pp. 4928--4939,
  2017.

\bibitem{LiMaXin}
X.~Li, L.~Ma, and C.~Xing, ``Construction of asymptotically good locally
  repairable codes via automorphism groups of function fields,'' \emph{CoRR},
  vol. abs/1711.07703, 2017.

\bibitem{NamSo}
M.~Y. Nam and H.~Y. Song, ``{Binary Locally Repairable Codes With Minimum
  Distance at Least Six Based on Partial $t$ -Spreads},'' \emph{IEEE
  Communications Letters}, vol.~21, no.~8, pp. 1683--1686, Aug 2017.

\bibitem{SilZeh}
N.~Silberstein and A.~Zeh, ``Optimal binary locally repairable codes via
  anticodes,'' in \emph{Proc. IEEE International Symposium on Information
  Theory, Hong Kong}, 2015, pp. 1247--1251.

\bibitem{HaoXiaChe}
J.~Hao, S.~T. Xia, and B.~Chen, ``Some results on optimal locally repairable
  codes,'' in \emph{Proc. IEEE International Symposium on Information Theory,
  Barcelona, Spain}, 2016, pp. 440--444.

\bibitem{ShaKhaArd}
M.~Shahabinejad, M.~Khabbazian, and M.~Ardakani, ``{A Class of Binary Locally
  Repairable Codes},'' \emph{IEEE Transactions on Communications}, vol.~64,
  no.~8, pp. 3182--3193, 2016.

\bibitem{HaoXiaChe1}
J.~Hao, S.~T. Xia, and B.~Chen, ``On optimal ternary locally repairable
  codes,'' in \emph{Proc. IEEE International Symposium on Information Theory,
  Aachen, Germany}, 2017, pp. 171--175.

\bibitem{HaoXia}
J.~Hao and S.~Xia, ``{Bounds and Constructions of Locally Repairable Codes:
  Parity-check Matrix Approach},'' \emph{CoRR}, vol. abs/1601.05595, 2016.

\bibitem{LiMaXin1}
X.~Li, L.~Ma, and C.~Xing, ``Optimal locally repairable codes via elliptic
  curves,'' \emph{CoRR}, vol. abs/1712.03744, 2017.

\bibitem{KimNo}
C.~Kim and J.~S. No, ``{New Constructions of Binary and Ternary Locally
  Repairable Codes Using Cyclic Codes},'' \emph{IEEE Communications Letters},
  vol.~22, no.~2, pp. 228--231, 2018.

\bibitem{LuoXinYua}
Y.~Luo, C.~Xing, and C.~Yuan, ``Optimal locally repairable codes of distance 3
  and 4 via cyclic codes,'' \emph{CoRR}, vol. abs/1801.03623, 2018.

\bibitem{KriPurKumTamBarISIT17}
M.~N. Krishnan, B.~Puranik, P.~V. Kumar, I.~Tamo, and A.~Barg, ``Exploiting
  locality for improved decoding of binary cyclic codes,'' \emph{IEEE
  Transactions on Communications}, 2018.

\bibitem{VardyBeery94}
A.~Vardy and Y.~Be'ery, ``Maximum-likelihood soft decision decoding of {BCH}
  codes,'' \emph{{IEEE} Trans. Inf. Theory}, vol.~40, no.~2, pp. 546--554,
  1994.

\bibitem{HuaYakUchSie}
P.~Huang, E.~Yaakobi, H.~Uchikawa, and P.~H. Siegel, ``Cyclic linear binary
  locally repairable codes,'' in \emph{Proc. IEEE Information Theory Workshop,
  Jerusalem, Israel}, 2015, pp. 1--5.

\bibitem{Chen}
M.~Chen, C.~Huang, and J.~Li, ``On the maximally recoverable property for
  multi-protection group codes,'' in \emph{IEEE International Symposium on
  Information Theory}, June 2007, pp. 486--490.

\bibitem{Bla_Haf_Het}
M.~Blaum, J.~L. Hafner, and S.~Hetzler, ``{Partial-MDS Codes and Their
  Application to RAID Type of Architectures},'' \emph{IEEE Trans. Inf. Theory},
  vol.~59, no.~7, pp. 4510--4519, 2013.

\bibitem{Cal_Koy}
G.~Calis and O.~O. Koyluoglu, ``{A General Construction for PMDS Codes},''
  \emph{IEEE Communications Letters}, vol.~21, no.~3, pp. 452--455, 2017.

\bibitem{Gab_Yak_Bla_Sie}
R.~Gabrys, E.~Yaakobi, M.~Blaum, and P.~H. Siegel, ``{Constructions of partial
  MDS codes over small fields},'' in \emph{Proc. IEEE International Symposium
  on Information Theory, Aachen, Germany}, 2017, pp. 1--5.

\bibitem{GopHuaJenYek}
P.~Gopalan, C.~Huang, B.~Jenkins, and S.~Yekhanin, ``{Explicit Maximally
  Recoverable Codes With Locality},'' \emph{IEEE Trans. Inf. Theory}, vol.~60,
  no.~9, pp. 5245--5256, 2014.

\bibitem{Hu_Yek}
G.~Hu and S.~Yekhanin, ``{New constructions of SD and MR codes over small
  finite fields},'' in \emph{Proc. IEEE International Symposium on Information
  Theory, Barcelona, Spain}, 2016, pp. 1591--1595.

\bibitem{Che_Shu_Yu_Sun}
J.~Chen, K.~W. Shum, Q.~Yu, and C.~W. Sung, ``{Sector-disk codes and partial
  MDS codes with up to three global parities},'' in \emph{Proc. IEEE
  International Symposium on Information Theory, Hong Kong}, 2015, pp.
  1876--1880.

\bibitem{Blaum_1}
M.~Blaum, ``Construction of {PMDS} and {SD} codes extending {RAID} 5,''
  \emph{CoRR}, vol. abs/1305.0032, 2013.

\bibitem{Blaum_3}
M.~Blaum, J.~S. Plank, M.~Schwartz, and E.~Yaakobi, ``{Construction of Partial
  MDS and Sector-Disk Codes With Two Global Parity Symbols},'' \emph{IEEE
  Trans. Inf. Theory}, vol.~62, no.~5, pp. 2673--2681, 2016.

\bibitem{LalLok}
V.~Lalitha and S.~V. Lokam, ``Weight enumerators and higher support weights of
  maximally recoverable codes,'' in \emph{Proc. 53rd Annual Allerton Conference
  on Communication, Control, and Computing, Monticello, IL, USA}, 2015, pp.
  835--842.

\bibitem{KadCal}
S.~Kadhe and A.~R. Calderbank, ``Rate optimal binary linear locally repairable
  codes with small availability,'' \emph{CoRR}, vol. abs/1701.02456, 2017.

\bibitem{WanZhaLiu}
A.~Wang, Z.~Zhang, and M.~Liu, ``Achieving arbitrary locality and availability
  in binary codes,'' in \emph{Proc. IEEE International Symposium on Information
  Theory, Hong Kong}, 2015, pp. 1866--1870.

\bibitem{WanZha}
A.~Wang and Z.~Zhang, ``{Repair Locality With Multiple Erasure Tolerance},''
  \emph{IEEE Trans. Inf. Theory}, vol.~60, no.~11, pp. 6979--6987, 2014.

\bibitem{KruFro}
S.~Kruglik and A.~Frolov, ``Bounds and constructions of codes with all-symbol
  locality and availability,'' in \emph{IEEE International Symposium on
  Information Theory}, June 2017, pp. 1023--1027.

\bibitem{SonYue}
\BIBentryALTinterwordspacing
W.~Song and C.~Yuen, ``Locally repairable codes with functional repair and
  multiple erasure tolerance,'' \emph{CoRR}, vol. abs/1507.02796, 2015.
  [Online]. Available: \url{http://arxiv.org/abs/1507.02796}
\BIBentrySTDinterwordspacing

\bibitem{BalKinKum_ISIT}
S.~B. Balaji, G.~R. Kini, and P.~V. Kumar, ``{A tight rate bound and a matching
  construction for locally recoverable codes with sequential recovery from any
  number of multiple erasures},'' in \emph{Proc. IEEE International Symposium
  on Information Theory, Aachen, Germany}, 2017, pp. 1778--1782.

\bibitem{BalKinKum}
------, ``{A Bound on Rate of Codes with Locality with Sequential Recovery from
  Multiple Erasures},'' \emph{CoRR}, vol. abs/1611.08561, 2016.

\bibitem{SongCaiYue_L}
W.~Song, K.~Cai, C.~Yuen, K.~Cai, and G.~Han, ``On sequential locally
  repairable codes,'' \emph{IEEE Trans. Inf. Theory}, vol.~PP, no.~99, pp.
  1--1, 2017.

\bibitem{BalKinKum_NCC}
S.~B. Balaji, G.~R. Kini, and P.~V. Kumar, ``{A Rate-Optimal Construction of
  Codes with Sequential Recovery with Low Block Length},'' in \emph{Proc.
  National Conference on Communications, Hyderabad, India}, 2018.

\bibitem{RawMazVis}
A.~S. Rawat, A.~Mazumdar, and S.~Vishwanath, ``{Cooperative Local Repair in
  Distributed Storage},'' \emph{CoRR}, vol. abs/1409.3900, 2014.

\bibitem{DynCageSur}
G.~Exoo and R.~Jajcay, ``{Dynamic Cage Survey},'' \emph{Electronic Journal
  Combinatorics, Dynamic Survey, DS16}, 2013.

\bibitem{SonDauYueLi}
W.~Song, S.~H. Dau, C.~Yuen, and T.~J. Li, ``Optimal locally repairable linear
  codes,'' \emph{IEEE Journal on Selected Areas in Communications}, vol.~32,
  no.~5, pp. 1019--1036, 2014.

\bibitem{BinShuJieFan}
B.~Chen, S.~T. Xia, J.~Hao, and F.~W. Fu, ``Constructions of optimal cyclic
  $(r,\delta)$ locally repairable codes,'' \emph{IEEE Trans. Inf. Theory},
  vol.~64, no.~4, pp. 2499--2511, 2018.

\bibitem{HaoXiaChe2}
J.~Hao, S.~T. Xia, and B.~Chen, ``On the linear codes with $(r,
  \delta)$-locality for distributed storage,'' in \emph{Proc. IEEE
  International Conference on Communications, Paris, France}, 2017, pp. 1--6.

\bibitem{SasAgaKum_loc}
B.~Sasidharan, G.~K. Agarwal, and P.~V. Kumar, ``Codes with hierarchical
  locality,'' in \emph{Proc. IEEE International Symposium on Information
  Theory, Hong Kong}, June 2015, pp. 1257--1261.

\bibitem{BalBar}
S.~Ballentine and A.~Barg, ``Codes on curves with hierarchical locality,'' in
  \emph{Proc. IEEE International Symposium on Information Theory (accepted)},
  June 2018.

\bibitem{DumBie}
A.~Duminuco and E.~Biersack, ``Hierarchical codes: How to make erasure codes
  attractive for peer-to-peer storage systems,'' in \emph{Proc. Eighth
  International Conference on Peer-to-Peer Computing, Aachen, Germany}, 2008,
  pp. 89--98.

\bibitem{KamPrakLalKumLRGC14}
G.~M. Kamath, N.~Prakash, V.~Lalitha, and P.~V. Kumar, ``{Codes With Local
  Regeneration and Erasure Correction},'' \emph{{IEEE} Trans. Inf. Theory},
  vol.~60, no.~8, pp. 4637--4660, 2014.

\bibitem{RawKoySilbVishLRGC12}
A.~S. Rawat, O.~O. Koyluoglu, N.~Silberstein, and S.~Vishwanath, ``{Optimal
  Locally Repairable and Secure Codes for Distributed Storage Systems},''
  \emph{CoRR}, vol. abs/1210.6954, 2012.

\bibitem{GligKralJenSimHashTagLRGC17}
D.~Gligoroski, K.~Kralevska, R.~E. Jensen, and P.~Simonsen, ``{Repair Duality
  with Locally Repairable and Locally Regenerating Codes},'' \emph{CoRR}, vol.
  abs/1701.06664, 2017.

\bibitem{Hollmann14}
H.~D.~L. Hollmann, ``On the minimum storage overhead of distributed storage
  codes with a given repair locality,'' in \emph{Proc. {IEEE} International
  Symposium on Information Theory, Honolulu, HI, USA}, 2014, pp. 1041--1045.

\bibitem{AhmWanLocRepMeetRGC15}
I.~Ahmad and C.~Wang, ``{When locally repairable codes meet regenerating codes
  - What if some helpers are unavailable},'' in \emph{Proc. {IEEE}
  International Symposium on Information Theory, Hong Kong, China}, 2015, pp.
  849--853.

\bibitem{KriNarKumLRGC18}
M.~N. Krishnan, A.~N. R., and P.~V. Kumar, ``{Codes with Combined Locality and
  Regeneration Having Optimal Rate, $d_{\text{min}}$ and Linear Field Size},''
  \emph{CoRR}, vol. abs/1804.00564, 2018.

\bibitem{ShanPapDimCaiMDSRepair14}
K.~Shanmugam, D.~S. Papailiopoulos, A.~G. Dimakis, and G.~Caire, ``A repair
  framework for scalar {MDS} codes,'' \emph{{IEEE} Journal on Selected Areas in
  Communications}, vol.~32, no.~5, pp. 998--1007, 2014.

\bibitem{GuruWootRSRepair17}
V.~Guruswami and M.~Wootters, ``{Repairing Reed-Solomon Codes},'' \emph{{IEEE}
  Trans. Inf. Theory}, vol.~63, no.~9, pp. 5684--5698, 2017.

\bibitem{McWSlo}
F.~J. MacWilliams and N.~J.~A. Sloane, \emph{The theory of error-correcting
  codes}.\hskip 1em plus 0.5em minus 0.4em\relax Elsevier, 1977.

\bibitem{DauMilenRSRepair}
H.~Dau and O.~Milenkovic, ``{Optimal repair schemes for some families of
  full-length Reed-Solomon codes},'' in \emph{Proc. {IEEE} International
  Symposium on Information Theory, Aachen, Germany}, 2017, pp. 346--350.

\bibitem{YeBargAsymptOptimalRS16}
M.~Ye and A.~Barg, ``Explicit constructions of {MDS} array codes and {RS} codes
  with optimal repair bandwidth,'' in \emph{Proc. {IEEE} International
  Symposium on Information Theory, Barcelona, Spain}, 2016, pp. 1202--1206.

\bibitem{ChowVardy17}
A.~Chowdhury and A.~Vardy, ``Improved schemes for asymptotically optimal repair
  of {MDS} codes,'' in \emph{Proc. 55th Annual Allerton Conference on
  Communication, Control, and Computing, Monticello, IL, USA}, 2017, pp.
  950--957.

\bibitem{TamYeBargRSRepair17}
I.~Tamo, M.~Ye, and A.~Barg, ``{Optimal Repair of Reed-Solomon Codes: Achieving
  the Cut-Set Bound},'' in \emph{Proc. 58th {IEEE} Annual Symposium on
  Foundations of Computer Science, {FOCS}, Berkeley, CA, USA}, 2017, pp.
  216--227.

\bibitem{DauDuursmaKiahMilenRSMulti16}
S.~H. Dau, I.~M. Duursma, H.~M. Kiah, and O.~Milenkovic, ``{Repairing
  Reed-Solomon Codes With Multiple Erasures},'' \emph{CoRR}, vol.
  abs/1612.01361, 2016.

\bibitem{BartWootMDSMulti17}
B.~Bartan and M.~Wootters, ``Repairing multiple failures for scalar {MDS}
  codes,'' in \emph{Proc. 55th Annual Allerton Conference on Communication,
  Control, and Computing, Monticello, IL, USA}, 2017, pp. 1145--1152.

\bibitem{YeBargRSRepairMultiErasure17}
M.~Ye and A.~Barg, ``{Repairing Reed-Solomon codes: Universally achieving the
  cut-set bound for any number of erasures},'' \emph{CoRR}, vol.
  abs/1710.07216, 2017.

\bibitem{Luby}
\BIBentryALTinterwordspacing
M.~Luby, ``Capacity bounds for distributed storage,'' \emph{CoRR}, vol.
  abs/1610.03541, 2016. [Online]. Available:
  \url{http://arxiv.org/abs/1610.03541}
\BIBentrySTDinterwordspacing

\bibitem{LubyPadRic}
\BIBentryALTinterwordspacing
M.~G. Luby, R.~Padovani, T.~J. Richardson, L.~Minder, and P.~Aggarwal, ``Liquid
  cloud storage,'' \emph{CoRR}, vol. abs/1705.07983, 2017. [Online]. Available:
  \url{http://arxiv.org/abs/1705.07983}
\BIBentrySTDinterwordspacing

\bibitem{HuaSimXu_Azure}
C.~Huang, H.~Simitci, Y.~Xu, A.~Ogus, B.~Calder, P.~Gopalan, J.~Li, and
  S.~Yekhanin, ``Erasure coding in windows azure storage,'' in \emph{Proc.
  {USENIX} Annual Technical Conference, Boston, MA, USA}, 2012, pp. 15--26.

\bibitem{microsoft}
``Microsoft research blog: A better way to store data,''
  \url{https://www.microsoft.com/en-us/research/blog/better-way-store-data/}.

\bibitem{lrc_plugin}
``Locally repairable erasure code plugin,''
  \url{http://docs.ceph.com/docs/master/rados/operations/erasure-code-lrc/}.

\bibitem{RasShaGu_Hitchhiker}
K.~V. Rashmi, N.~B. Shah, D.~Gu, H.~Kuang, D.~Borthakur, and K.~Ramchandran,
  ``A "hitchhiker's" guide to fast and efficient data reconstruction in
  erasure-coded data centers,'' in \emph{Proc. {ACM} {SIGCOMM} Conference,
  Chicago, IL, USA}, 2014, pp. 331--342.

\bibitem{KraGliJen_HashTag}
K.~Kralevska, D.~Gligoroski, R.~E. Jensen, and H.~{\O}verby, ``Hashtag erasure
  codes: From theory to practice,'' \emph{IEEE Transactions on Big Data}, 2017.

\bibitem{KraGliOVe}
K.~Kralevska, D.~Gligoroski, and H.~{\O}verby, ``General sub-packetized
  access-optimal regenerating codes,'' \emph{{IEEE} Communications Letters},
  vol.~20, no.~7, pp. 1281--1284, 2016.

\bibitem{KriPraLal_Pentagon}
M.~N. Krishnan, N.~Prakash, V.~Lalitha, B.~Sasidharan, P.~V. Kumar,
  S.~Narayanamurthy, R.~Kumar, and S.~Nandi, ``{Evaluation of Codes with
  Inherent Double Replication for Hadoop},'' in \emph{Proc. 6th {USENIX}
  Workshop on Hot Topics in Storage and File Systems, Philadelphia, PA, USA},
  2014.

\bibitem{RasNakWan_PMRBT}
K.~V. Rashmi, P.~Nakkiran, J.~Wang, N.~B. Shah, and K.~Ramchandran, ``{Having
  Your Cake and Eating It Too: Jointly Optimal Erasure Codes for I/O, Storage,
  and Network-bandwidth},'' in \emph{Proc. 13th {USENIX} Conference on File and
  Storage Technologies, Santa Clara, CA, USA}, 2015, pp. 81--94.

\bibitem{LiLi_Beehive}
J.~Li and B.~Li, ``Beehive: Erasure codes for fixing multiple failures in
  distributed storage systems,'' \emph{{IEEE} Trans. Parallel Distrib. Syst.},
  vol.~28, no.~5, pp. 1257--1270, 2017.

\bibitem{JuaBlaMat_Butterfly}
L.~Pamies{-}Juarez, F.~Blagojevic, R.~Mateescu, C.~Guyot, E.~E. Gad, and
  Z.~Bandic, ``Opening the chrysalis: On the real repair performance of {MSR}
  codes,'' in \emph{Proc. 14th {USENIX} Conference on File and Storage
  Technologies, Santa Clara, CA, USA}, 2016, pp. 81--94.

\bibitem{GadMatBla}
E.~E. Gad, R.~Mateescu, F.~Blagojevic, C.~Guyot, and Z.~Bandic,
  ``Repair-optimal {MDS} array codes over {GF(2)},'' in \emph{Proc. {IEEE}
  International Symposium on Information Theory, Istanbul, Turkey}, 2013, pp.
  887--891.

\bibitem{VajRamPur_Clay}
M.~Vajha, V.~Ramkumar, B.~Puranik, G.~R. Kini, E.~Lobo, B.~Sasidharan, P.~V.
  Kumar, A.~Barg, M.~Ye, S.~Narayanamurthy, S.~Hussain, and S.~Nandi, ``Clay
  codes: Moulding {MDS} codes to yield an {MSR} code,'' in \emph{Proc. 16th
  {USENIX} Conference on File and Storage Technologies, Oakland, CA, USA},
  2018, pp. 139--154.

\end{thebibliography}

\end{document}